\newtheorem{lemma}{Lemma}
\newtheorem{theorem}{Theorem}
\newenvironment{proof}{\textit{Proof:}~}{\hfill$\Box$\par\vskip1em}
\begin{document}

\title{Gathering an even number of robots in an odd ring \\without global multiplicity detection \thanks{This work is supported in part by KAKENHI no.22700074.}}
\author{
Sayaka Kamei
\and 
Anissa Lamani
\and 
Fukuhito Ooshita
\and
S\'{e}bastien Tixeuil
}

\date{}

\maketitle

\begin{abstract}
We propose a gathering protocol for an \emph{even} number of robots in a ring-shaped network that allows symmetric but not periodic configurations as initial configurations, yet uses only local weak multiplicity detection. 
Robots are assumed to be anonymous and oblivious, and the execution model is the non-atomic CORDA model with asynchronous fair scheduling. 
In our scheme, the number of robots $k$ must be greater than $8$, the number of nodes $n$ on a network must be odd and greater than $k+3$. 
The running time of our protocol is $O(n^{2})$ asynchronous rounds.
\end{abstract}

\centerline{{\bf Keywords}: Asynchronous Gathering, Local Weak Multiplicity Detection, Robots. }

\section{Introduction}

We consider autonomous robots that are endowed with visibility sensors (but that are otherwise unable to communicate) and motion actuators. Those robots must collaborate to solve a collective task, namely \emph{gathering}, despite being limited with respect to input from the environment, asymmetry, memory, etc. The area where robots have to gather is modeled as a graph and the gathering task requires every robot to reach a single vertex that is unknown beforehand, and to remain there hereafter.

Robots operate in \emph{cycles} that comprise \emph{look}, \emph{compute}, and \emph{move} phases. The look phase consists in taking a snapshot of the other robots positions using its visibility sensors. In the compute phase, a robot computes a target destination among its neighbors, based on the previous observation. The move phase simply consists in moving toward the computed destination using motion actuators. We consider an asynchronous computing model, \emph{i.e.}, there may be a finite but unbounded time between any two phases of a robot's cycle. Asynchrony makes the problem hard since a robot can decide to move according to an old snapshot of the system and different robots may be in different phases of their cycles at the same time.
Moreover, the robots that we consider here have weak capacities: they are \emph{anonymous} (they execute the same protocol and have no mean to distinguish themselves from the others), \emph{oblivious} (they have no memory that is persistent between two cycles), and have no compass whatsoever (they are unable to agree on a common direction or orientation in the ring). 

While most of the literature on coordinated distributed robots considers that those robots are evolving in a \emph{continuous} two-dimensional Euclidean space and use visual sensors with perfect accuracy that permit to locate other robots with infinite precision, a recent trend was to shift from the classical continuous model to the \emph{discrete} model. In the discrete model, space is partitioned into a \emph{finite} number of locations. This setting is conveniently represented by a graph, where nodes represent locations that can be sensed, and where edges represent the possibility for a robot to move from one location to the other. For each location, a robot is able to sense if the location is empty or if robots are positioned on it (instead of sensing the exact position of a robot). Also, a robot is not able to move from a position to another unless there is explicit indication to do so (\emph{i.e.}, the two locations are connected by an edge in the representing graph). 
The discrete model permits to simplify many robot protocols by reasoning on finite structures (\emph{i.e.}, graphs) rather than on infinite ones. 

\paragraph{\textbf{Related Work.}}\label{sec:RW}

In this paper, we focus on the \emph{gathering} problem in the discrete setting where a set of robots has to gather in one single location, not defined in advance, and remain on this location \cite{Izumi10,Flocchin04,Klasing08,Klasing08-j,KameiLOT11,ASN11c}. 
Several deterministic algorithms have been proposed to solve the gathering problem in a ring-shaped network, which enables many problems to appear due to the high number of symmetric configurations. The case of anonymous, asynchronous and oblivious robots was investigated only recently in this context. It should be noted that if the configuration is periodic and edge symmetric, no deterministic solution can exist~\cite{Klasing08-j}. The first two solutions \cite{Klasing08-j,Klasing08} are complementary: \cite{Klasing08-j} is based on breaking the symmetry whereas \cite{Klasing08} takes advantage of symmetries. However, both \cite{Klasing08-j} and \cite{Klasing08} make the assumption that robots are endowed with the ability to distinguish nodes that host one robot from nodes that host two robots or more in the entire network (this property is referred to in the literature as \emph{global} weak multiplicity detection). This ability weakens the gathering problem because it is sufficient
  for a protocol to ensure that a single multiplicity point exists to have all robots gather in this point, so it reduces the gathering problem to the creation of a single multiplicity point. Nevertheless, the case of an even number of robots proved difficult~\cite{HIKIW08c,ASN11c} as more symmetric situations must be taken into account.

Investigating the feasibility of gathering with weaker multiplicity detectors was recently addressed in \cite{Izumi10,KameiLOT11}. In those papers, robots are only able to test that their current hosting node is a multiplicity node (\emph{i.e.} hosts at least two robots). This assumption (referred to in the literature as \emph{local} weak multiplicity detection) is obviously weaker than the global weak multiplicity detection, but is also more realistic as far as sensing devices are concerned. The downside of \cite{Izumi10} compared to \cite{Klasing08} is that only rigid configurations (\emph{i.e.} non symmetric configuration) are allowed as initial configurations (as in  \cite{Klasing08-j}), while \cite{Klasing08} allowed symmetric but not periodic configurations to be used as initial ones. Also, \cite{Izumi10} requires that $k < n/2$ even in the case of non-symmetric configurations, where $k$ denotes the number of robots and $n$ the size of the ring, respectively. By contrast,
  \cite{KameiLOT11} proposed a gathering protocol that could cope with symmetric yet aperiodic configurations and only made use a local weak multiplicity detector, allowing $k$ to be between $3$ and $n-3$. However, \cite{KameiLOT11} requires an odd number of robots, which permits to avoid a number of possibly problematic symmetric configurations.

\paragraph{\textbf{Our Contribution.}}

We propose a gathering protocol for an \emph{even} number of robots in a ring-shaped network that allows symmetric but not periodic configurations as initial configurations, yet uses only local weak multiplicity detection. Robots are assumed to be anonymous and oblivious, and the execution model is the non-atomic CORDA model with asynchronous fair scheduling. For the even number of robots setting, our protocol allows the largest set of initial configurations (with respect to impossibility results) yet uses the weakest multiplicity detector to date. In our scheme, $k$ must be greater than $8$, $n$ must be odd and greater than $k+3$. The running time of our protocol is $O(n^{2})$ asynchronous rounds.

\paragraph{\textbf{Outline.}} The paper is organized as follow: we first define our model in Section \ref{sec:Model}, we then present our algorithm in Section \ref{sec:Algo}. The proofs of correctness are given in Section \ref{sec:proof}. Finally we conclude the paper in Section \ref{sec:conclusion}.

\section{Preliminaries}
\paragraph{\textbf{System Model.}}\label{sec:Model}
We consider here the case of an anonymous, unoriented and undirected ring of $n$ nodes $u_{0}$,$u_{1}$,..., $u_{(n-1)}$ such as $u_{i}$ is connected to both $u_{(i-1)}$ and $u_{(i+1)}$ and $u_{(n-1)}$ is connected to $u_{0}$.
We assume $n$ is odd. 
Note that since no labeling is enabled (anonymous), there is no way to distinguish between nodes, or between edges. 

On this ring, $k$ robots operate in distributed way in order to accomplish a common task that is to gather in one location not known in advance. 
We assume that $k$ is even.
The set of robots considered here are \textit{identical}; they execute the same program using no local parameters and one cannot distinguish them using their appearance, and are \textit{oblivious}, which means that they have no memory of past events, they can't remember the last observations or the last steps taken before. 
In addition, they are unable to communicate directly, however, they have the ability to sense the environment including the position of the other robots. 
Based on the configuration resulting of the sensing, they decide whether to move or to stay idle. 
Each robot executes cycles infinitely many times, 
~(1)~first, it catches a sight of the environment to see the position of the other robots (look phase), 
~(2)~according to the observation, it decides to move or not (compute phase), 
~(3)~if it decides to move, it moves to its neighbor node towards a target destination (move phase). 
At instant $t$, a subset of robots is activated by an entity known as \textit{the scheduler}.
The scheduler can be seen as an external entity that selects some robots for execution, this scheduler is considered to be fair, which means that, all robots must be activated infinitely many times. 
The \textit{CORDA model} \cite{Pre01} enables the interleaving of phases by the scheduler 
(For instance, one robot can perform a look operation while another is moving).
The model considered in our case is the CORDA model with the following constraint: the Move operation is instantaneous \textit{i.e.} when a robot takes a snapshot of its environment, it sees the other robots on nodes and not on edges. However, since the scheduler is allowed to interleave the different operations, robots can move according to an outdated view since during the Compute phase, some robots may have moved.

During the process, some robots move, and at any time occupy nodes of the ring, their positions form a configuration of the system at that time. 
We assume that, at instant $t=0$ (\textit{i.e.}, at the initial configuration), some of the nodes on the ring are occupied by robots, such as, each node contains at most one robot. 
If there is no robot on a node, we call the node \textit{empty node}.
The segment $[u_p,u_q]$ is defined by the sequence ($u_p,u_{p+1},\cdots, u_{q-1},u_q$) of consecutive nodes in the ring, 
such as all the nodes of the sequence are empty except $u_p$ and $u_q$ that contain at least one robot.
The distance $D_p^t$ of segment $[u_p,u_q]$ in the configuration of time $t$ is equal to the number of nodes in $[u_p,u_q]$ minus $1$.
We define a \textit{hole} as the maximal set of consecutive empty nodes. 
That is, in the segment $[u_p,u_q]$, $(u_{p+1},\cdots,u_{q-1})$ is a hole. 
The size of a hole is the number of empty nodes that compose it, the border of the hole are the two empty nodes who are part of this hole, having one robot as a neighbor.

We say that there is a \textit{tower} at some node $u_{i}$, if at this node there is more than one robot (Recall that this tower is distinguishable only locally). 

When a robot takes a snapshot of the current configuration on node $u_{i}$ at time $t$, it has a \textit{view} of the system at this node.
In the configuration $C(t)$, we assume $[u_1,u_2]$,$[u_2,u_3]$,$\cdots$, $[u_w,u_1]$ are consecutive segments in a given direction of the ring.
Then, the view of a robot on node $u_1$ at $C(t)$ is represented by \\$(\max\{(D_1^t,D_2^t,\cdots,D_w^t),(D_w^t,D_{w-1}^t,\cdots,D_1^t)\}, m_1^t)$, where
$m_1^t$ is true if there is a tower at this node, and sequence $(a_i,a_{i+1},\cdots, a_j)$ is larger than $(b_i,b_{i+1},\cdots,b_j)$ if there is $h (i\leq h\leq j)$ such that $a_l=b_l$ for $i\leq l \leq h-1$ and $a_h>b_h$.
It is stressed from the definition that robots don't make difference between a node containing one robot and those containing more. 
However, they can detect $m^t$ of the current node, \textit{i.e.} whether they are alone on the node or not (they have a local weak multiplicity detection). 

When $(D_1^t,D_2^t,\cdots,D_w^t)=(D_w^t,D_{w-1}^t,\cdots,D_1^t)$, we say that the view on $u_i$ is \textit{symmetric}, otherwise we say that the view on $u_{i}$ is \textit{asymmetric}. 
Note that when the view is symmetric, both edges incident to $u_{i}$ look identical to the robot located at that node. 
In the case the robot on this node is activated we assume the worst scenario allowing the scheduler to take the decision on the direction to be taken.  

Configurations that have no tower are classified into three classes in \cite{Klasing08-j}.
A configuration is said to be \textit{periodic} if it is represented by a configuration of at least two copies of a sub-sequence.
A configuration is said to be \textit{symmetric} if the ring contains a single axis of symmetry. 
Otherwise, the configuration is said to be \textit{rigid}. 
For these configurations, the following lemma is proved in \cite{Klasing08-j}.
\begin{lemma}
If a configuration is rigid, all robots have distinct views. If a configuration is symmetric and non-periodic, there exists exactly one axis of symmetry.
\end{lemma}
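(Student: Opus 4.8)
The plan is to recast everything in terms of the cyclic sequence of inter-robot gaps and the automorphism group it induces on the ring. Since the three classes concern tower-free configurations, list the $k$ occupied nodes in cyclic order and let $(d_1,\dots,d_k)$ be the sequence of segment distances between consecutive robots (so $d_1+\dots+d_k=n$); this cyclic word, taken up to rotation and reversal, is exactly the data of the configuration up to isometries of the ring. Reading it from a fixed robot in each of the two directions gives two linear sequences, and the \emph{view} of that robot is by definition the lexicographically larger of the two, paired with the multiplicity bit, which is false everywhere here. An \emph{automorphism} of the configuration is an isometry of the ring carrying occupied nodes to occupied nodes; the isometry group of the ring is dihedral, so an automorphism is either a rotation, corresponding to a non-trivial cyclic shift fixing $(d_1,\dots,d_k)$, or a reflection, corresponding to a reversal-then-shift fixing it. The first dictionary entries I would establish are: a configuration is periodic iff it admits a non-trivial rotation automorphism (this is just the statement that the gap word is a proper power in the cyclic sense), and it is symmetric iff it admits a reflection automorphism, an \emph{axis} being the fixed-point set of such a reflection, which for odd $n$ passes through exactly one node and the midpoint of the antipodal edge.

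The key technical lemma to isolate is: two robots $r$ and $r'$ have the same view if and only if some automorphism of the configuration sends $r$ to $r'$. The "if" direction is immediate, since an automorphism carries the pair of directional readings at $r$ onto the pair at $r'$ and hence preserves their maximum. For "only if", suppose the views agree; then there is a direction $\delta$ at $r$ and a direction $\delta'$ at $r'$ whose directional readings both equal the common maximal reading (if the view is symmetric, i.e. the gap word from $r$ is a palindrome, simply pick either direction). The unique ring isometry $\phi$ sending $r\mapsto r'$ and $\delta\mapsto\delta'$ then maps, for each $s$, the node at cumulative distance $s$ from $r$ along $\delta$ to the node at cumulative distance $s$ from $r'$ along $\delta'$; since the two readings coincide, occupied nodes go to occupied nodes, so $\phi$ is an automorphism with $\phi(r)=r'$. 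The first statement of the target lemma now follows: a rigid configuration has only the trivial automorphism, so no two distinct robots can share a view, and the multiplicity bits are all false, hence all $k$ views are pairwise distinct.

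For the second statement, assume the configuration is symmetric and non-periodic. Symmetry provides a reflection automorphism $\sigma$, hence at least one axis. If a second distinct axis existed, with reflection automorphism $\sigma'$, then $\sigma\circ\sigma'$ would be a rotation automorphism, and it would be non-trivial precisely because $\sigma\neq\sigma'$ (in the dihedral group, a product of two reflections is the identity exactly when the two axes coincide). This non-trivial rotation automorphism contradicts non-periodicity, so the axis is unique.

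I expect the main obstacle to be the bookkeeping in the dictionary rather than any deep difficulty: one must check carefully that the combinatorial definitions given in the excerpt ("at least two copies of a sub-sequence" for periodic, "an axis of symmetry" for symmetric) coincide with the group-theoretic characterizations above, and that the isometry $\phi$ in the "only if" direction is genuinely well defined on the whole ring and does not covertly assume a preferred orientation (the palindromic-view case being the one to watch). Once that setup is in place, both parts of the lemma are short; the geometry of axes in an odd ring and the "two distinct reflections compose to a non-trivial rotation" fact are standard in $D_n$.
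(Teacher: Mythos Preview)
Your argument is correct. The only thing to flag is that the paper does \emph{not} supply its own proof of this lemma: it simply states the result and attributes it to \cite{Klasing08-j} (``For these configurations, the following lemma is proved in \cite{Klasing08-j}''). So there is no in-paper proof to compare against; what you have written is a self-contained proof where the paper defers to an external reference.

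Substantively, your route---encoding a tower-free configuration by its cyclic gap word, identifying the automorphism group with the subgroup of the dihedral group $D_n$ preserving that word, proving the bijection ``equal views $\Leftrightarrow$ related by an automorphism'', and then using the standard $D_n$ fact that the product of two distinct reflections is a nontrivial rotation---is exactly the standard way this lemma is established in the literature, and is presumably close to what the cited paper does. The two caveats you yourself raise are the right ones to be careful about: (i) the paper's phrasing ``a single axis of symmetry'' in the \emph{definition} of symmetric should be read as ``at least one axis'' (otherwise the second clause of the lemma is vacuous), and (ii) in the ``only if'' direction of your key lemma, the isometry $\phi$ is uniquely determined once you fix both the image of $r$ and the image of the chosen direction, so the palindromic case causes no ambiguity---either choice of direction yields an automorphism, one a rotation and one a reflection.
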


This implies that, if a configuration is symmetric and non-periodic, at most two robots have the same view.

We now define some useful terms that will be used to describe our algorithm. 
We denote by the \textit{inter-distance} $d$ the minimum distance taken among distances between each pair of distinct robots (in term of the number of edges). 
Given a configuration of inter-distance $d$, a \textit{$d$.block} is any maximal elementary path where there is a robot every $d$ edges. 
The border of a $d$.block consists in the two external robots of the $d$.block. 
The size of a $d$.block is the number of robots that it contains. 
We call the $d$.block whose size is biggest the \textit{biggest $d$.block}.
A robot that is not in any $d$.block is said to be an \textit{isolated robot}. 

We evaluate the time complexity of algorithms with asynchronous rounds. An asynchronous round is defined as the shortest fragment of an execution where each robot performs a move phase at least once.

\paragraph{\textbf{Problem to be solved.}} The problem considered in our work is the gathering problem, where $k$ robots have to gather in one location not known in advance before stopping there forever.

\section{Proposed Algorithm}\label{sec:Algo}
We propose in this section an algorithm that solves the gathering problem starting from any non-periodic configuration on a ring provided that $n$ is odd, $k$ is even, $k>8$ and $n>k+3$.

The algorithm comprises three main phases as follow:
\begin{itemize}
\item \textbf{Phase $1$}. The aim of this phase is either ($i$) to reach one of the special configurations defined in the set $C_{sp}$ (refer to Sub-Section \ref{Subsec: ph2}) or ($ii$) to reach a symmetric configuration where there are two 1.blocks of size $k/2$ at distance $2$. The initial configuration of this phase is any non-periodic configuration without towers.
\item \textbf{Phase $2$}. The aim of this phase is to reach a symmetric configuration that contains exactly two 1.blocks of size $k/2$ at distance $2$ from each other. The initial configuration of this phase is in $C_{sp}$.
\item \textbf{Phase $3$}. During this phase, robots perform the gathering such that at the end all robots are on the same node.  
\end{itemize}

When the configuration is symmetric, and since the number of nodes is odd, we are sure that the axes of symmetry passes through one node $S$ and one edge. 
Additionally, because the number of robots is even, there is no robot on $S$. The size of the hole including $S$ is odd. Let us call such a hole the {\em Leader hole} and let us refer to it by $H$. The other hole on the axes of symmetry is called {\em Slave hole}. Note that, theses hole can be on inside a $d$.block.



\subsection{Phase $1$: Algorithm to build a single $1$.block or two $1$.blocks of the same size}

Starting from a non periodic configuration without tower, the aim of this phase is to reach either one of the special configurations defined in $C_{sp}$ (refer to Sub-Section \ref{Subsec: ph2}). 
The idea of this phase is the following: In the initial configuration, in the case all the $d$.blocks have the same size, robots move such that there will be at least two $d$.blocks in the configuration that have different size. 
Robots then move towards the closest biggest $d$.block. 
In order to avoid creating periodic configurations, only some robots are allowed to move. 
These robots are the ones that have the biggest view. 
Depending on the nature of the configuration (symmetric or not symmetric), one $d$.block (respectively two $d$.blocks) becomes the biggest $d$.block in the configuration (let refer to the set of these $d$.block by target blocks). 
These $d$.blocks are then the target of all the other robots that move in order to join them. 
When all the robots are in a $d$.block part of target blocks then if $d>1$, some robots move in order to decrease $d$. We distinguish the following configurations:

\begin{itemize}
\item {\em BlockDistance} Configuration. 
In this configuration, there are only two $d$.blocks of the same size ($k/2$) or a single $d$.block of size $k$ 
such that in both cases $d>1$. Note that the configuration is symmetric and does not contain any isolated robot.
The robots allowed to move are the ones that are neighbors of $H$.
Their destination is their adjacent empty node towards the opposite direction of $H$.

\item {\em BlockMirror} Configuration. 
In such a configuration there are only $d$.blocks of the same size and no isolated robots. The number of $d$.blocks is bigger than $2$.
Two sub configurations are possible as follow:
\begin{itemize}
\item {\em BlockMirror1} Configuration.
The configuration is in this case not symmetric.
The robot that is allowed to move is the one that is part of the set of robots that are the closest to a $d$.block (let refer to this set by $S$), having the biggest view among the robots in $S$. Its destination is its adjacent empty node towards the neighboring $d$.block. 
\item {\em BlockMirror2} Configuration.
The configuration is in this case symmetric.
Let the $d$.blocks that are neighbors to $H$ or including $H$ be the guide blocks.
The robots allowed to move are the ones that share a hole other than $H$ with the guide blocks.
Their destinations are their adjacent empty node towards the closest guide block.
\end{itemize}

\item Configuration of type {\em BigBlock}.
In this configuration, the configuration is neither {\em BlockMirror} nor {\em BlockDistance}. 
Then, there is at least one $d$.block that has the biggest size. 
Two sub configurations are defined as follow:
\begin{itemize}
\item Configuration of type {\em BigBlock1}. 
In this case there is at least one isolated robot that shares a hole with one of the biggest $d$.blocks. Two sub-cases are possible as follow:
      \begin{itemize}
             \item Configuration of type {\em BigBlock1-1}. The configuration is not symmetric and contains either ($i$) two $1$.blocks of the same size $(k-2)/2$ and two isolated robots that share a hole together or ($ii$) one $1$.block of size ($k-2$) and two isolated robots that share a hole together. The robot that is allowed to move in both cases is the one that is the farthest to the neighboring $1$.block. Its destination is its adjacent empty node towards the neighboring $1$.block.
             \item Configuration of type  {\em BigBlock1-2}. This configuration is different from  {\em BigBlock1-1} and includes all the other configurations of {\em BigBlock1}. The robots allowed to move are part of the set of robots that share a hole with a biggest d.block such that they are the closest ones to a biggest $d$.block. 
If there exists more than one such robot, then only robots with the biggest view among them are allowed to move.
Their destination is their adjacent empty node towards one of the nearest neighboring biggest $d$.blocks.
      \end{itemize}

\item Configuration of type {\em BigBlock2}. 
In this case there is no isolated robot that is neighbor to a biggest $d$.block. 
The robots allowed to move are the ones that are the closest to a biggest $d$.block. 
If there exist more than one robot allowed to move, then only robots with the biggest view among them can move. 
Their destination is their adjacent empty node towards one of the nearest neighboring biggest $d$.blocks.
\end{itemize}

\end{itemize}    


\subsection{Phase $2$: Algorithm to build a configuration that contains two $1$.blocks of the same size at distance $2$.}\label{Subsec: ph2}

When the configuration is symmetric, the two $1$.blocks that are neighbors of the Leader hole ($H$) (respectively Slave Hole) are called {\it the Leader block} (respectively {\it the Slave block}).
To simplify the explanation we assume in the following that an isolated robot is also a single $1$.block of size $1$.

The configuration set $C_{sp}$ is partitioned into nine subsets:
{\em Start}, {\em Even-T}, {\em Split-S}, {\em Split-A}, {\em Odd-T}, {\em Block}, {\em Biblock}, {\em TriBlock-S} and {\em TriBlock-A}.
In the following, we provide their definition and the behavior of robots in each configuration below.

\begin{figure}[t]
 \begin{minipage}{.46\linewidth}
  \centering\epsfig{figure=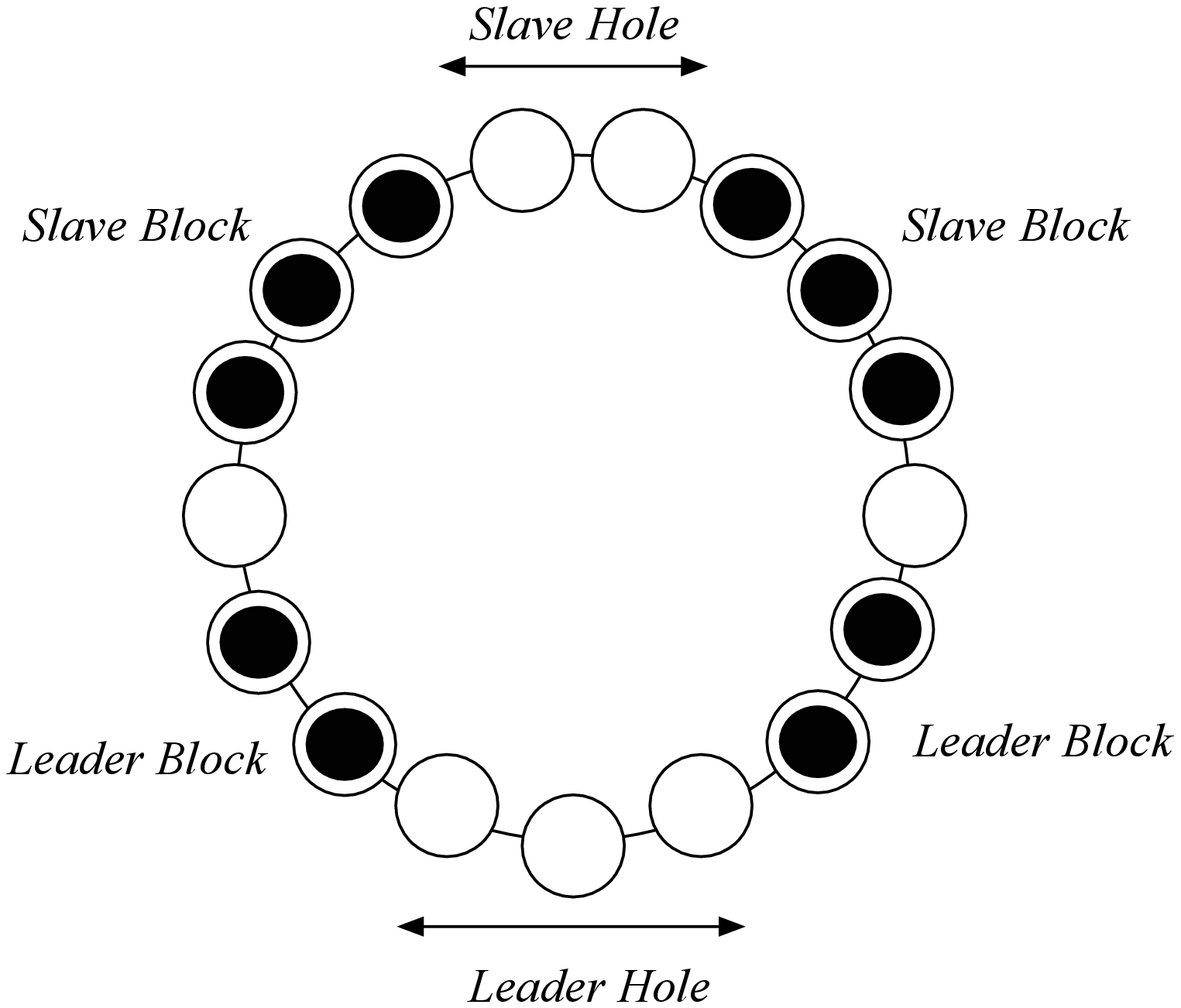,width=3.8cm}
  \caption{$Split-S$ Configuration\label{h-block}}
 \end{minipage} \hfill
\begin{minipage}{.46\linewidth}
 \centering\epsfig{figure=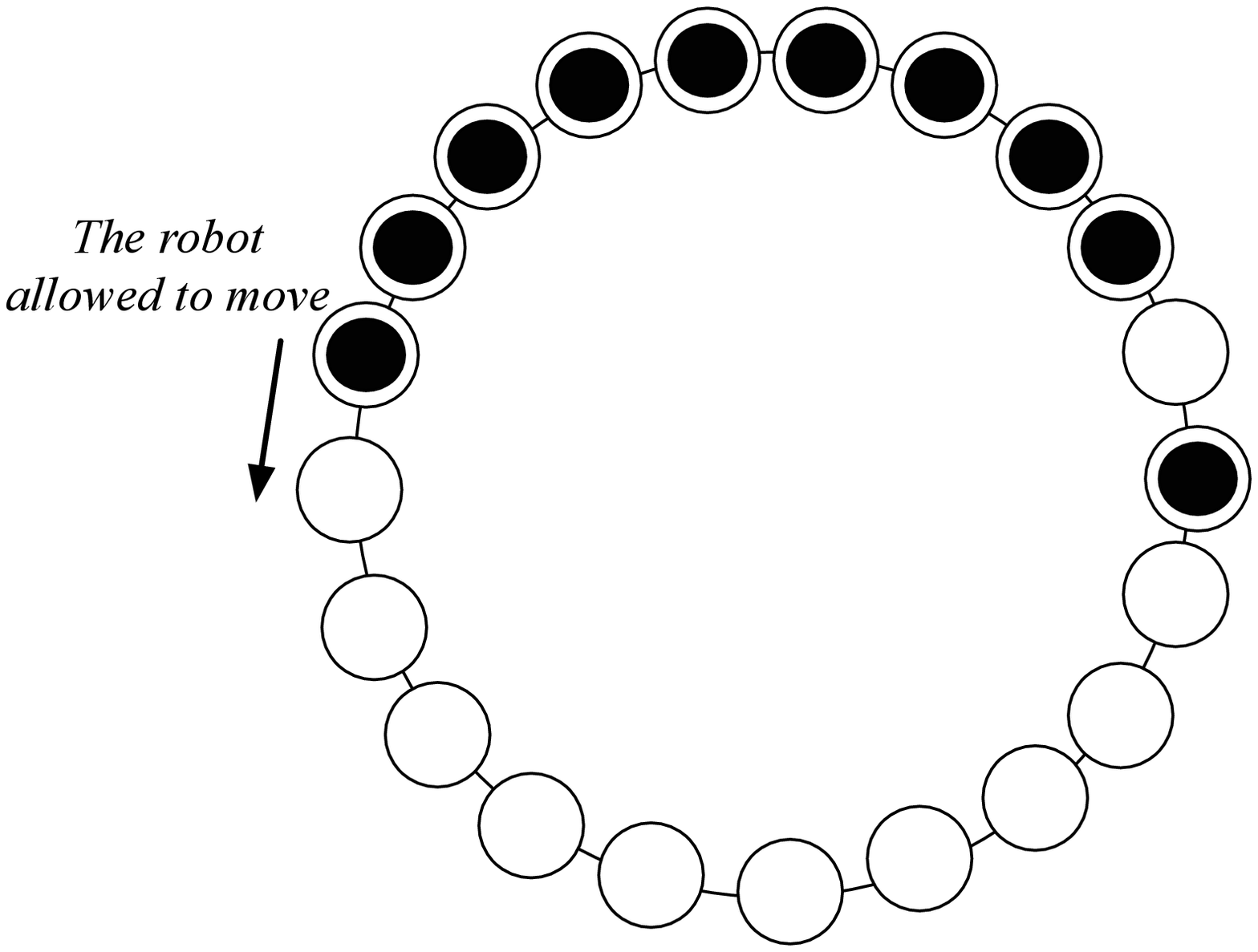,width=3.5cm}
  \caption{$Biblock$ Configuration\label{BIB}}
 \end{minipage} \hfill
 \begin{minipage}{.46\linewidth}
  \centering\epsfig{figure=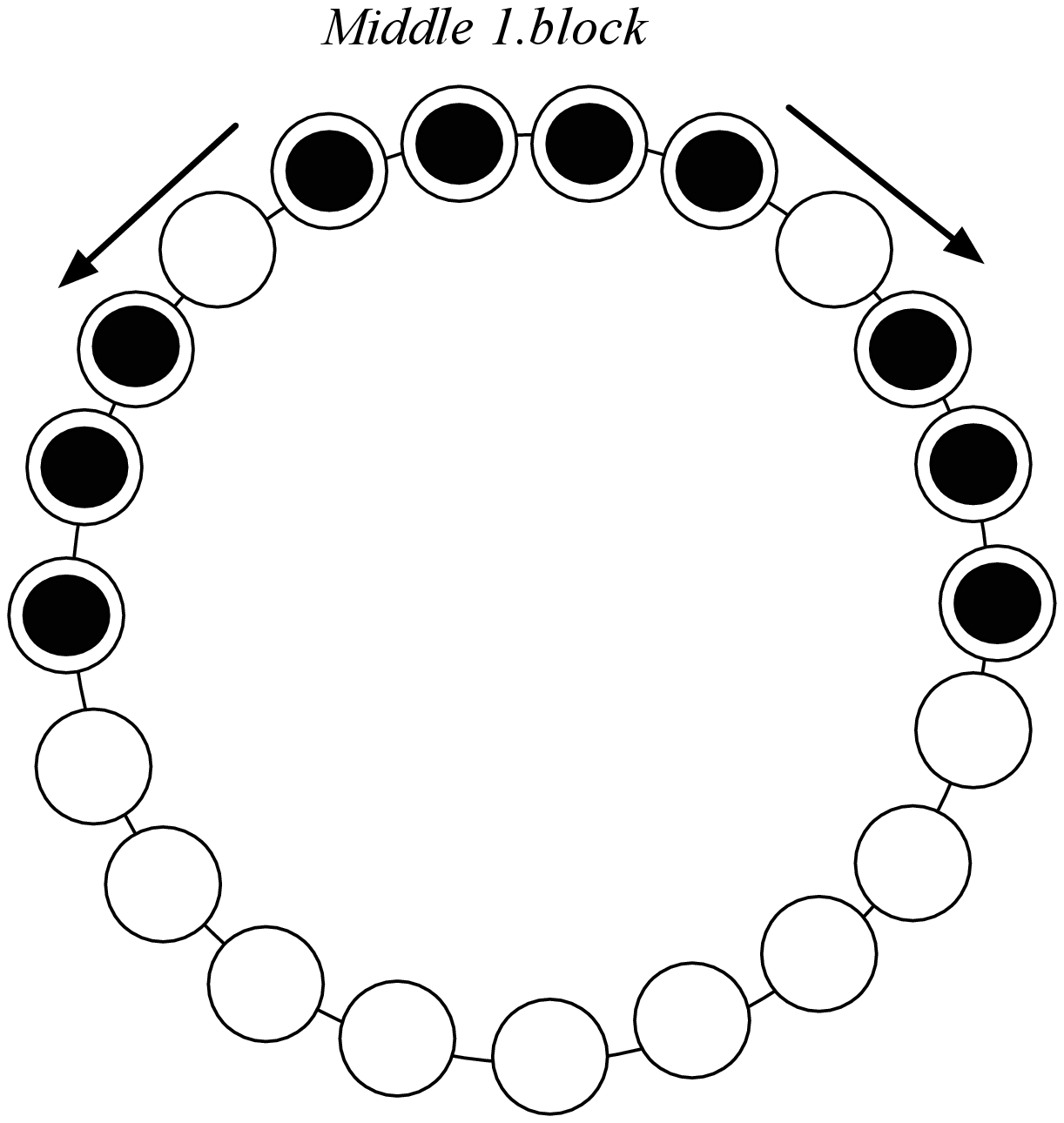,width=3cm}
  \caption{$TriBlock-S$ Configuration\label{TRIBS}}
 \end{minipage}\hfill
 \begin{minipage}{.46\linewidth}
  \centering\epsfig{figure=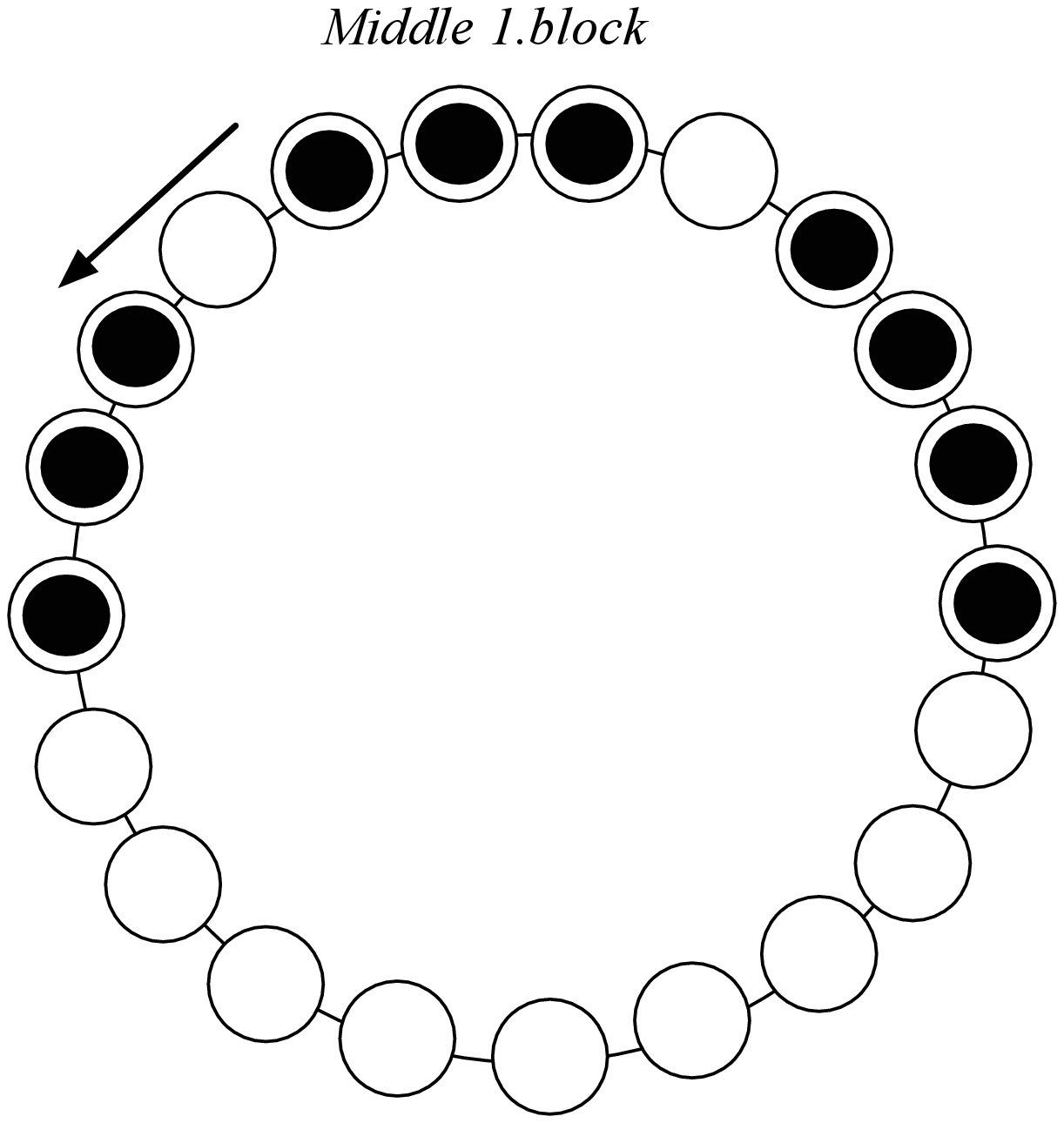,width=3cm}
  \caption{$TriBlock-A$ Configuration\label{TRIBA}}
 \end{minipage}\hfill
  \begin{minipage}{.46\linewidth}
  \centering\epsfig{figure=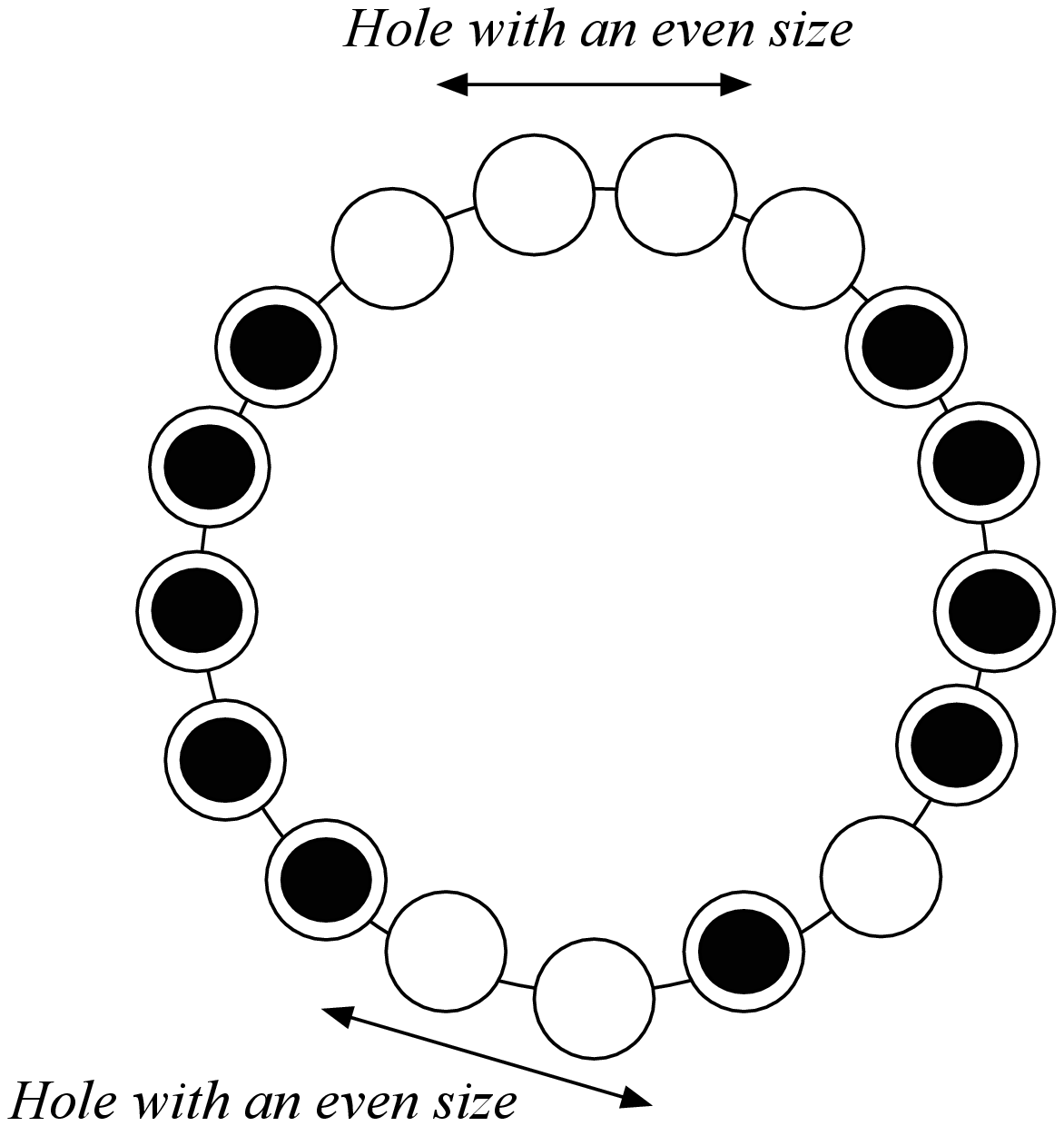,width=3.4cm}
  \caption{$Even-T$ Configuration\label{EvenT}}
 \end{minipage}\hfill
  \begin{minipage}{.46\linewidth}
  \centering\epsfig{figure=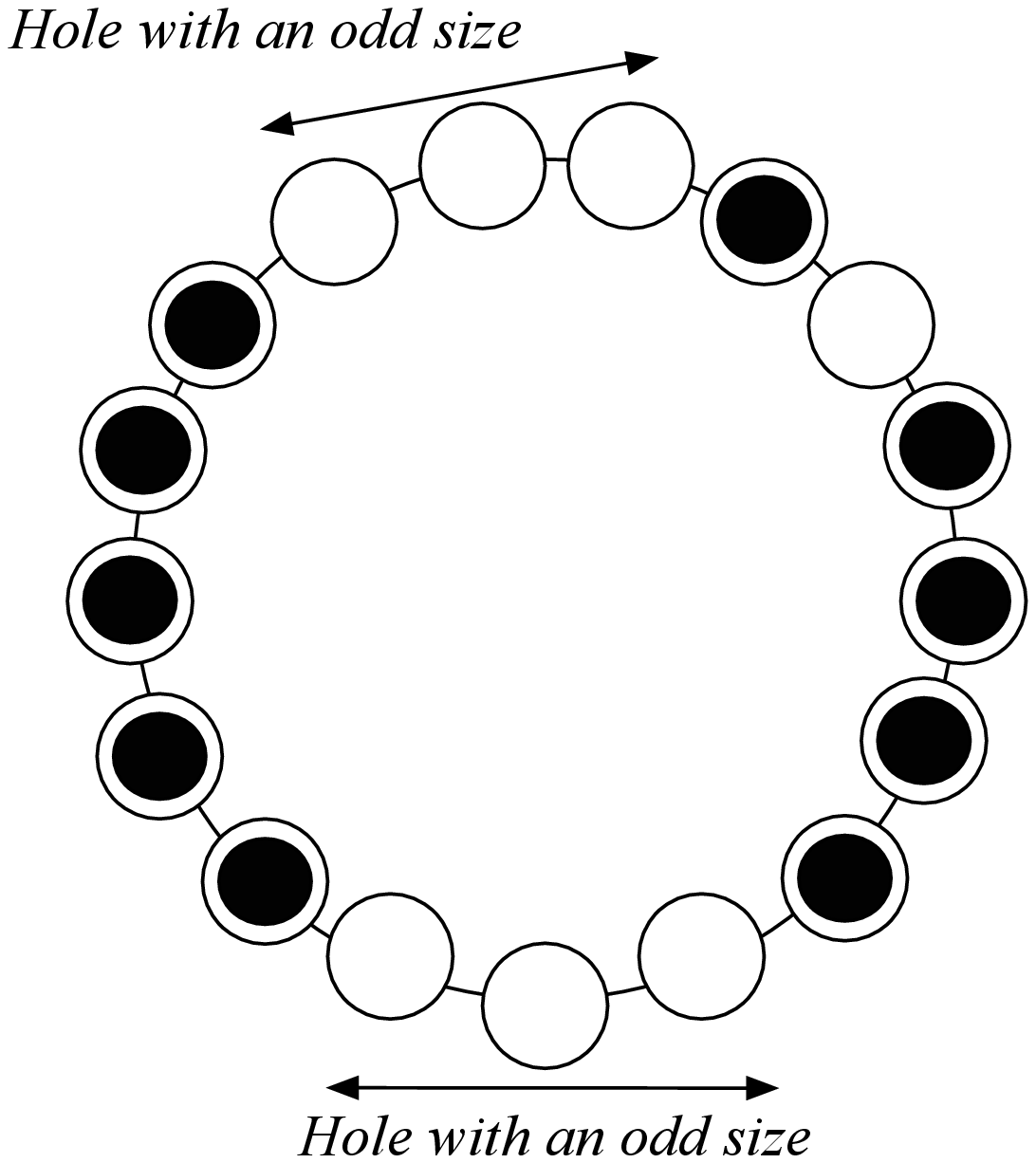,width=3.2cm}
  \caption{$Odd-T$ Configuration\label{OddT}}
 \end{minipage}\hfill
\end{figure}

\begin{enumerate}
\item {\em Start} Configuration.
This configuration is symmetric and contains two $1$.blocks with size $k/2$ but not being at distance $2$. 
The robots allowed to move are the two robots that are at the border of the $1$.blocks neighboring to the Leader hole. 
Their destination is their adjacent empty node in the opposite direction of the $1$.block they belong to.

\item \label{CC} {\em Even-T} Configuration.
The configuration is in this case not symmetric and contains three $1$.blocks of size respectively $k/2$, $(k/2)-1$ and $1$.
Additionally, the $1$.block of size $1$ is at distance $2$ from the $1$.block of size $k/2-1$. The number of holes between the $1$.block of size $1$ and the one of size $k/2$ is even. Note that this is also the case for the hole between the $1$.block of size $(k/2)-1$ and the one of size $k/2$.
The only robot allowed to move is the one that is at the border of the $1$.block of size $k/2$ sharing a hole with the $1$.block of size $1$. 
Its destination is its adjacent empty node in the opposite direction of the $1$.block it belongs to.

\item {\em Split-S} configuration.
The configuration is symmetric and contains four $1$.blocks such that the $1$.blocks on the same side of the axes of symmetry are at distance $2$ (refer to Figure \ref{h-block}). 
The robots allowed to move in this case are the ones that are at the border of the Slave block sharing a hole with the Leader block. 
Their destination is their adjacent empty node towards the Leader block.

\item {\em Split-A} Configuration. 
The configuration is not symmetric and contains four $1$.blocks and exactly one hole of an even size.
Let the $1$.blocks that are neighbors of the hole of an even size be $S1$ and $S2$, and let the other two $1$.blocks be $L1$ and $L2$. $S1$ and $L1$ (respectively $S2$ and $L2$) are at distance $2$ from each other. 
Then, $|S1|=|S2|+1$, $|L2|=|L1|+1$, $|S1|+|L1|=|S2|+|L2|=k/2$. The size of the hole between $L1$ and $L2$ is odd, 
and the distances between $L1$ and $S1$ (respectively between $L2$ and $S2$) is equal to $2$.
In this case, only one robot is allowed to move. This robot is the one at the border of $S1$ sharing a hole with $L1$. 
Its destination its adjacent empty node towards $L1$.

\item {\em Odd-T} Configuration. 
The configuration is not symmetric and contains three $1$.blocks of size respectively $k/2$, $(k/2)-1$ and $1$.
Additionally, the $1$.block of size $1$ is at distance $2$ from the $1$.block of size $k/2-1$. Observe that this configuration is different from {\em Even-T} Configuration since all the holes in the ring have an odd size. 
The only robot allowed to move is the one that is part of the $1$.block of size $1$. Its destination is its adjacent empty node towards the $1$.block of size $(k/2)-1$.

\item {\em Block} Configuration. The configuration contains in this case, a single $1$.block of size $k$. Note that the configuration is symmetric. The robots allowed to move are the ones that are at the border of the $1$.block. Their destination is their adjacent node in the opposite direction of the $1$.block they belong to.

\item {\em Biblock} Configuration. This configuration is not symmetric and contains two $1$.blocks $B_1$ and $B_2$ at distance $2$ from each other such that $|B_1|=k-1$ and $|B_2|=1$ (refer to Figure \ref{BIB}). The robot allowed to move is the one that is at the border of the biggest $1$.block not having a neighboring occupied node at distance $2$. Its destination is its adjacent node in the opposite direction of the $1$.block it belongs to.

\item {\em TriBlock-S} Configuration. This configuration is symmetric and contains three $1$.blocks separated by one empty node (refer to Figure \ref{TRIBS}). The robots allowed to move are the ones that are at the border of the $1$.block on the axes of symmetry. Their destination is their adjacent empty node in the opposite direction of the $1$.block they belong to.

\item {\em TriBlock-A} Configuration. This configuration is not symmetric and contains three $1$.blocks ($B_{1}$, $B_2$ and $B_3$) such that there is one $1$.block that is at distance $2$ from both other $1$.blocks (let $B_1$ be this $1$.block, refer to Figure \ref{TRIBA}). $|B_2|=|B_3|+1$. The robot allowed to move is the one that is at the border of $B_1$ and the closest to $B_3$. Its destination is its adjacent empty node in the opposite direction of the $1$.block it belongs to.    
\end{enumerate}

We call the configuration that contains two 1.blocks of the size $k/2$ at distance $2$ {\em Terminal} (refer to Figure \ref{Terminus}).

\subsection{Phase $3$. Algorithm to achieve the gathering} 

During this phase, robots perform the gathering such that at the end all robots are on the same location. The starting configuration of this phase is the {\em Terminal} configuration (refer to Figure \ref{Terminus}).

\begin{figure}[t]
\begin{center}
\epsfig{figure=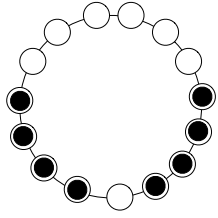,width=3cm}
\caption{Terminal Configuration}\label{Terminus}
\end{center}
\end{figure}

When the {\em Terminal} configuration is reached at the end of the second phase (or when the configuration is built in the first phase). The only robots that can move are the ones that are at the extremity of a $1$.block being neighbors of a hole of size $1$. Since the configuration is symmetric there are exactly two robots allowed to move. Two cases are possible as follow:
\begin{enumerate}
\item The scheduler activates both robots at the same time. In this case the configuration remains symmetric and a tower is created on the axes of symmetry.
\item The scheduler activates only one robot. In this case the configuration that is reached becomes asymmetric and contains two 1.blocks $B_1$ and $B_2$ at distance $2$ such that $|B_2|=|B_1|-2$ (one robot from $B_2$ has moved and joined $B_1$, let this robot be $r_1$). Note that this configuration is easily recognizable by robots. The robot that is in $B_1$ being neighbor of $r_1$ is the one allowed to move. Its destination is its adjacent occupy node towards $r_1$. Note that once it moves, the configuration becomes symmetric and a tower is created on the node that is on the axes of symmetry. 
Let us refer to such symmetric configuration with a tower {\em Target} configuration.
\end{enumerate}

In the {\em Target} configuration, robots that are part of the tower are not allowed to move anymore. For the other robots, they can only see an odd number of robots in the configuration since they are enable to see the tower on the axes of symmetry (recall that they have a local week multiplicity detection). In addition, since they are oblivious, they cannot remember their number before reaching {\em Target} configuration. On the other hand, an algorithm has been proposed in \cite{KameiLOT11} that solves the gathering problem from such a configuration (Phase $2$ in \cite{KameiLOT11}). Robots can then execute the same algorithm to perform the gathering. 

\section{Proof of Correctness}\label{sec:proof}
In the following, we prove the correctness of our algorithm. We define an \textit{outdated robot} as the robot that takes a snapshot at instant $t$ but moves only at instant $t+j$ where $j>0$. Thus, when such a robot moves, it does so based on an outdated perception of the configuration.
Additionally, we define an \textit{outdated robot with an incorrect target} as the outdated robot that its target destination is incorrect in the current configuration, i.e., if the robot takes a new snapshot to the current configuration, the computed destination is different.

 \subsection{Phase $1$}
 
We prove in this sub-section the correctness of Phase $1$ Algorithm.
 
 \begin{lemma}\label{NoSymBigBlock}
Starting from a symmetric {\em BigBlock} configuration without any outdated robots, if exactly one robot moves then the configuration reached is not symmetric.
\end{lemma}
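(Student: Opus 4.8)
The plan is to argue by contradiction: suppose we start from a symmetric \emph{BigBlock} configuration $C$ with no outdated robots, exactly one robot $r$ is activated and moves to an adjacent node, and the resulting configuration $C'$ is still symmetric. The key structural fact I would use is Lemma~1: since $C$ is symmetric and (being a legal Phase~1 configuration) non-periodic, it has exactly one axis of symmetry, and at most two robots share a view. In particular the set of robots "allowed to move" in a \emph{BigBlock} configuration is determined by having the biggest view among the closest-to-a-biggest-$d$.block robots, so either there are exactly two such robots (mirror images across the axis) or the single mover $r$ lies on the axis. Since $n$ is odd and $k$ is even, no robot sits on the node the axis passes through, so $r$ cannot be a robot fixed by the symmetry; hence the movers come in a symmetric pair $\{r, r'\}$, with $r' \neq r$ the mirror image of $r$.

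Next I would exploit the fact that only $r$ actually moved (the scheduler activated only one of the pair, or the moves were interleaved so that $r$ moved first and $r'$ has not yet moved — but the statement rules out outdated robots, so the clean case is that exactly one of the symmetric pair moved). After $r$ moves by one edge, I want to show $C'$ cannot have an axis of symmetry. The natural approach is to compare the multiset of $d$.block sizes / inter-distances before and after. Moving $r$ one step toward the nearest biggest $d$.block either (a) shrinks the hole on one side of $r$ by one and grows the hole on the other side by one, changing the sizes of the $d$.blocks (or isolated-robot structure) incident to $r$ asymmetrically, or (b) merges $r$ into the biggest $d$.block. In either case the change is localized at $r$'s position and its mirror image is \emph{not} changed, so any candidate axis of $C'$ would have to map the perturbed region at $r$ to the unperturbed region at $r'$ — impossible since, e.g., the biggest $d$.block is now strictly bigger on $r$'s side (case b) or the local hole pattern differs (case a). I would phrase this as: the unique axis of $C$ is the only possible axis of $C'$ by an approximate-symmetry argument, yet that axis is destroyed by the one-sided move; and $C'$ cannot have acquired a \emph{new} axis because a single edge-move changes the configuration too little to create a symmetry not already "almost" present, which again would force $C$ itself to be closer to symmetric than Lemma~1 permits.

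The main obstacle I anticipate is the case analysis needed to handle the sub-types of \emph{BigBlock} (\emph{BigBlock1-1}, \emph{BigBlock1-2}, \emph{BigBlock2}) and, within each, whether $r$'s move merely shifts a hole boundary or actually absorbs $r$ into a $d$.block, possibly changing $d$ itself or changing which $d$.block is biggest. One must be careful that the "biggest $d$.block" is well defined and that after the move the two sides of the (former) axis genuinely differ — e.g. ruling out a coincidence where shrinking one hole and the symmetric data conspire to restore symmetry with respect to a \emph{different} axis. I would dispatch this by noting that a new axis would have to pass through a node or edge at bounded distance from $r$ (since outside a small neighborhood of $r$ the configuration equals $C$, which is symmetric only about its original axis), and then checking the finitely many positions of such a putative axis lead to a contradiction with non-periodicity or with the parity constraints ($n$ odd, $k$ even). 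The rest is routine bookkeeping on segment distances $D_i^t$.
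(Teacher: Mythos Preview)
Your setup is correct: in a symmetric non-periodic \emph{BigBlock} configuration the movers form a mirror pair $\{r,r'\}$ (no robot sits on the axis since $n$ is odd and $k$ is even), and your argument that the \emph{original} axis of $C$ cannot survive into $C'$ is fine --- $r$'s old node is now empty while its mirror $\sigma(r)=r'$ is still occupied.

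The gap is the ``new axis'' step. Your claim that a new axis of $C'$ would have to lie at bounded distance from $r$, because outside a neighbourhood of $r$ the configuration agrees with $C$, does not follow: an axis of $C'$ can in principle sit anywhere on the ring, and the general assertion that a single one-edge move cannot create a fresh symmetry is simply false (tiny example: in an odd ring two adjacent robots are symmetric about the edge between them; slide one of them one step and you get two adjacent robots again, symmetric about a different edge). So something specific to the \emph{BigBlock} move must be used, and your ``routine bookkeeping on the $D_i^t$'' does not supply it. Composing the two reflections gives a nontrivial rotation $\rho$ with $\rho(C)$ differing from $C$ in at most four nodes; this says $C$ is ``periodic up to four defects'', which by itself is not a contradiction with non-periodicity and does not obviously collapse without further structure.

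The paper's proof takes a different and more direct route: rather than comparing axes, it shows that $C'$ carries a \emph{uniquely distinguished} feature that no reflection can fix. Concretely, it splits on whether $A$ and $B$ target the same biggest $d$.block or different ones. In the same-target case, after $B$ moves either $B$ is the unique closest isolated robot to a biggest block, or $B$ has joined $D$ and $D$ is the unique biggest block with $A$ its unique closest neighbour; a unique distinguished robot is impossible under any symmetry because $k$ even forces the axis node to be empty. In the different-targets case, once $B$ joins its target $D_2$ this is the unique biggest block, and a parity count on the number and sizes of the remaining blocks of size $|D_1|$ (odd many blocks of odd size would force an occupied axis node) rules out symmetry. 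If you want to rescue your approach you will end up reproducing essentially this case analysis; the abstract perturbation argument alone does not close.
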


\begin{proof}
Let refer to the symmetric {\em BigBlock} configuration by $C$. Since the configuration is symmetric, there are two robots $A$ and $B$ such that their destination is its adjacent empty node towards the biggest $d$.block. Two cases are possible as follow:
\begin{itemize}
\item The biggest target $d$.blocks of robots $A$ and $B$ is the same (let refer to this block by $D$). 
In this case, $D$ is on the axes of symmetry. 
If the scheduler activates only one robot (let this robot be the robot $B$), then $B$ either joins $D$ or it becomes an isolated robot once it moves.
If $B$ becomes an isolated robot, then it is the closest robot to a biggest d.block (otherwise there were another robot $R$ which was the closest in the previous configuration). Thus, the configuration cannot be in this case symmetric.
When $B$ joins $D$, then there is only one biggest $d$.block in the configuration that is $D$.
Observe that $A$ is the only robot that is the closest to $D$. 
Thus, in this case too, the configuration reached is not symmetric.
\item The biggest target $d$.blocks of robots $A$ and $B$ are different. 
Suppose that, they are respectively $D_1$ and $D_2$. 
When the scheduler activates only one robot ($B$), then it is clear that, if only $B$ moves, the configuration reached is not symmetric before $B$ joins $D_2$ (since $B$ is the only one that is the closest to a biggest d.block). 

Let consider now the case where $B$ joins $D_2$. 
Once $B$ is part of $D_2$, if the configuration that is reached is symmetric, then we are sure that $D_2$ is on the axes of symmetry since it is the only biggest $d$.block in the configuration. 
Two sub-cases are possible:
\begin{itemize} 
\item When $B$ joins $D_2$, the size of $D_2$ becomes odd. 
Since $D_2$ is on the axes of symmetry, there is one occupied node on the axes of symmetry. 
That is impossible since the number of robots in the system is even. 
Thus, we are sure that the configuration that is reached is not symmetric.
\item When $B$ joins $D_2$, the size of $D_2$ becomes even. 
Note that, the size of $D_1$ is odd. 
Since $C$ was symmetric, we are sure that there was an even number of biggest $d$.blocks of size $|D_1|$ in $C$.
Additionally, there were no biggest $d$.blocks on the axes of symmetry (if there was a biggest $d$.block on the axes of symmetry in $C$, then there were two occupied nodes on the axes of symmetry since the size of such d.blocks is odd. That is impossible since the number of nodes in the ring is odd).
Thus, just after $B$ joins $D_2$, there is exactly one biggest $d$.block $D_2$ and odd number of $d$.blocks having the same size as $D_1$. 
If the configuration that is reached is symmetric, there will be one $d$.block having the same size as $D_1$ that is on the axes of symmetry (since their number is odd). 
On the other hand, the size of such a $d$.block is odd. 
That means that there is one occupied node that is on the axes of symmetry.
This is impossible since the number of robots in the ring is even. 
Hence, we can deduce that the configuration that is reached in this case too is not symmetric
\end{itemize}      
\end{itemize}

From the cases above, we can deduce that, if exactly one robot moves from a symmetric {\em BigBlock} configuration, then the configuration reached is not symmetric and the lemma holds.
\end{proof}

\begin{lemma}\label{outdated}
By the behavior of Phase $1$, starting from any non-periodic initial configuration without towers, if the configuration does not become {\em BlockMirror2}, then there is at most one outdated robot with an incorrect target.
\end{lemma}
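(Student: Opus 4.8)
The plan is to argue by contradiction on the sequence of moves produced by the scheduler, tracking at each step which robots have a pending (outdated) move and whether that pending move is still consistent with the current configuration. I would start by observing that in the initial configuration of Phase 1 there is no outdated robot at all, so the invariant ``at most one outdated robot with an incorrect target, and the configuration is not \emph{BlockMirror2}'' holds vacuously. I then consider the first time the invariant could be violated, i.e., the first activation (look or move) after which either two robots hold outdated incorrect targets, or a configuration of type \emph{BlockMirror2} has been entered while such a robot exists. Since \emph{BlockMirror2} is explicitly excluded from the statement, I only need to rule out the first possibility under the standing assumption that the execution never passes through \emph{BlockMirror2}.

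The core of the argument is a case analysis on the configuration type (\emph{BlockDistance}, \emph{BlockMirror1}, \emph{BigBlock1-1}, \emph{BigBlock1-2}, \emph{BigBlock2}) at the moment a robot performs its look phase, using the two structural facts already available: Lemma~1 (in a symmetric non-periodic configuration at most two robots share a view, so at most two robots are simultaneously eligible to move) and Lemma~\ref{NoSymBigBlock} (a single move out of a symmetric \emph{BigBlock} configuration destroys symmetry). For each configuration type I would check: (a) how many robots are simultaneously eligible to move (the algorithm selects ``the robot with the biggest view'' among a candidate set, so generically exactly one, and in the symmetric cases exactly two symmetric copies); (b) when two robots are eligible and only one is activated, the resulting configuration is asymmetric (by Lemma~\ref{NoSymBigBlock} in the \emph{BigBlock} case, and by a direct check in \emph{BlockDistance}), so the second, not-yet-moved robot re-evaluates from an asymmetric configuration in which it is now the unique biggest-view robot heading toward the same target block — hence its (still pending) target is \emph{not} incorrect; and (c) a freshly activated robot in the new configuration either computes the same destination the outdated robot is already heading to, or is prevented from being eligible at all, so no second incorrect target is created. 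The key point to isolate is that the destinations prescribed in each case ``towards the nearest biggest $d$.block'' or ``towards the neighbouring $1$.block'' are stable under the move of a single other robot that is itself heading to a block: moving one robot one edge closer to a block cannot change which block is nearest for a robot on the far side, nor can it change the identity of the biggest $d$.block except by enlarging the very block being targeted, which only reinforces the outdated robot's choice.

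I expect the main obstacle to be the bookkeeping in the \emph{BigBlock1} cases, where an isolated robot adjacent to a biggest $d$.block can, upon joining it, change the set of biggest $d$.blocks and therefore retarget other robots; here one must show that at the instant this happens there is no \emph{other} outdated robot whose target has gone stale, which is where the ``biggest view'' tie-break and the parity arguments (even $k$, odd $n$, so no block sits astride the axis) do the work, exactly as in the proof of Lemma~\ref{NoSymBigBlock}. A secondary subtlety is the transition between configuration types: a single move may carry the system from, say, \emph{BigBlock2} to \emph{BigBlock1} or into \emph{Start}/\emph{Even-T}; I would handle this by noting that such a transition is caused by the outdated robot's own move completing, after which that robot is no longer outdated, so the count of outdated incorrect-target robots can only drop to zero at that instant. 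Assembling these observations, no reachable configuration (other than \emph{BlockMirror2}, which is excluded) admits two simultaneous outdated robots with incorrect targets, which proves the lemma.
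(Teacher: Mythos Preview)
Your claim (b) is the weak point, and it is actually false in the two main cases. In a symmetric \emph{BigBlock} configuration the two eligible robots $A$ and $B$ may be heading towards \emph{different} biggest $d$.blocks $D_1$ and $D_2$. When only $B$ moves and joins $D_2$, the block $D_2$ becomes the unique biggest $d$.block; a fresh snapshot would now send $A$ towards $D_2$, not $D_1$, so $A$'s pending target \emph{is} incorrect. The \emph{BlockDistance} case is similar: after $B$ moves, a single $(d{-}1)$.block $D$ appears, and the robot selected by the algorithm in the new configuration is the isolated robot closest to $D$ (the one that was $B$'s old $d$-neighbour), not $A$; so $A$'s pending move (still away from $H$, towards its own old $d$-neighbour) is again incorrect. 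In both cases one outdated robot with an incorrect target genuinely exists, so a strategy that tries to show there are \emph{zero} such robots cannot succeed.

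What the paper does instead is accept that $A$ may carry an incorrect target and then argue, over the whole interval until $A$ finally moves, that no \emph{second} incorrect pending target can be created. Concretely: after $B$'s move the configuration is asymmetric and has a unique biggest block (Lemma~\ref{NoSymBigBlock} or the direct analysis in \emph{BlockDistance}); every robot that takes a fresh snapshot during this interval targets that unique biggest block, and this target remains correct even after $A$ eventually executes its stale move, because $A$'s move cannot make some other block overtake the current biggest one. Your item~(c) gestures at this, but the claim that a freshly activated robot ``computes the same destination the outdated robot is already heading to'' is the wrong invariant --- the fresh robot and $A$ are typically heading to \emph{different} places; what matters is that the fresh robot's target survives $A$'s move, not that it coincides with $A$'s. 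Rewriting the argument around that invariant, and checking it separately for the same-target and different-target subcases of \emph{BigBlock} as the paper does, would close the gap.
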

\begin{proof}
Assume that a robot $A$ becomes an outdated robot.
Then, two robots $A$ and $B$ were allowed to move in a symmetric configuration $C$ such that the scheduler activates both robots at the same time, however only $B$ moves.
We prove that if the configuration becomes symmetric before $A$ moves, the new outdated robot other than $A$ cannot have incorrect target, or the configuration does not become symmetric before $A$ moves.

\begin{itemize}
\item $C$ is a configuration of type {\em BlockDistance}.
Note that when the configuration is of type {\em BlockDistance}, it contains either a single d.block of size $k$ or two d.blocks of size $k/2$.  In both cases $H>d$ otherwise the configuration is either periodic or does not contain two d.blocks of the same size.
Once $B$ moves, the value of $d$ decreases. Let the new value of the interdistance be $d'$. Note that $d'=d-1$.
Then, there exist exactly one $d'$.block in the configuration. Let refer to this $d'$.block by $D$. All the other robots become isolated robots.
By the behavior of Phase $1$, the closest robot to $d'$.block moves towards it (refer to {\em BigBlock1} configuration). Note that there is only one such robot (the other robots are not allowed to move) and this robot is the one that was part of the same d.block as $B$ at distance $d$ (recall that $H>d$). Let refer to this robot by $E$.
If $A$ decides to move after $E$ executes the {\em Look} phase, then the configuration becomes symmetric and the target of $E$ is correct since it is one of the robots that are allowed to move and its destination remains $D$.
If $A$ does not move then, $D$ is the only biggest $d'$.block in the configuration, when $E$ moves, the size of $D$ increases. $D$ becomes the target $d'$.block of all the robots. When $A$ decides to move, even if a new $d'$.block is created and the configuration becomes symmetric, the target of the robots that took a snapshot before the move of $A$ is correct since the target $d'$.block does not change (there is only one biggest $d'$.block in the configuration).
Therefore, before $A$ moves, there is no other robots that has an incorrect target and the lemma holds.

\item $C$ is a {\em BigBlock} configuration. The two following cases are possible:
\begin{itemize}
\item The target $d$.block of $A$ and $B$ is the same $d$.block $BL$. 
Once $B$ moves, we are sure from Lemma \ref{NoSymBigBlock} that the configuration that is reached is not symmetric. 
If $B$ becomes (or remains) an isolated robot, then if the configuration contains exactly two d.blocks of the same size and two isolated robots that share a hole between each other ({\em BigBlock1-1} configuration), then $A$ is the only robot that can move and in this case the configuration does not contain any outdated robots. For the other cases, $B$ is the only one allowed to move since it is the closest robot to a biggest $d$.block ($BL$). 
When $B$ joins $BL$, the only robot allowed to move is $A$ since $A$ is the only robot that is the closest to $BL$ when $A$ is activated (and even if it is an outdated robot with an incorrect target), $A$ moves towards $BL$. 
Once it moves the configuration that is reached is not symmetric (since $A$ is the only closest robot) and there is no robot in the configuration that has an outdated robot with an incorrect target.
\item The target $d$.blocks of $A$ and $B$ are respectively $D_1$ and $D_2$. When $B$ moves, it is clear that if the configuration is of type {\em BigBlock1-1}, there will be no outdated robots. For the other cases,
according to Lemma \ref{NoSymBigBlock}, if $B$ is the only one to move then when $B$ joins $D_1$, the configuration that is reached is not symmetric. 
Note that $D_1$ becomes the only biggest $d$.block in the configuration. 
We can reach a symmetric case in the following cases:
\begin{itemize}
\item $A$ (the outdated robot with an incorrect target) moves: $A$ joins $D_2$ too (it's like the scheduler activates both $A$ and $B$ at the same time). 
Note that, if there is another robot that took a snapshot before $A$ moves, then its target is correct since even when in the configuration reached when $A$ moves, this robot is allowed to move and it destination remains $D_1$. 
\item There is another robot $R$ that has joined $D_1$ (Note that in this case the size of $D_1$ at the beginning in the {\em BigBlock} configuration was even. 
Thus, when $B$ joined $D_1$, the size of $D_1$ became odd and when $R$ joined it too, it became even). 
$D_1$ becomes the biggest $d$.block in the configuration. 
Note that, in the configuration, there can be only one robot with an incorrect target (Robot $A$ that took a snapshot at the beginning in the {\em BigBlock} configuration, its destination is $B_2$), all the robots that took a snapshot before $A$ moves, even if their view is outdated, their target are correct since the target $d$.block remains $D_1$ (and this even if $A$ moves an joins $D_2$ since even after $A$ move, $D_1$ is bigger than $D_2$).
\end{itemize}
\end{itemize}
\end{itemize}    
From the cases before, we can deduce that there is at most one outdated robot with an incorrect target.
\end{proof}

\begin{lemma}\label{outdated-special}
By the behavior of Phase $1$, starting from any non-periodic {\em BlockMirror2} configuration without any tower, there exist at most one outdated robot with an incorrect target.
\end{lemma}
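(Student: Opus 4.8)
The plan is to follow the template of the proof of Lemma~\ref{outdated}, replacing its \emph{BlockDistance}/\emph{BigBlock} analysis by one tailored to \emph{BlockMirror2}, and to rest everything on two structural facts. The first is that in a symmetric \emph{BlockMirror2} configuration the set of robots allowed to move is exactly one pair of robots exchanged by the axis of symmetry: the guide blocks are either the two $d$.blocks bordering the Leader hole $H$ or the single $d$.block containing $H$, and in both situations the robots ``sharing a hole other than $H$ with a guide block'' are precisely the two border robots sitting on the $H$-free ends of the blocks adjacent to the guide blocks. Hence, exactly as in Lemma~\ref{outdated}, the only way a robot $A$ can become outdated is that $A$ and its mirror image $B$ are activated simultaneously in the symmetric \emph{BlockMirror2} configuration $C$ and only $B$ actually moves. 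The second fact is that, by the first lemma (distinct views in rigid configurations), the tie-breaks used in Phase~$1$ single out a unique robot, so every rigid configuration reachable in Phase~$1$ has exactly one robot allowed to move. Given these two facts, it suffices to prove: (c)~after $B$ moves the configuration is rigid, and (d)~from that instant until $A$ moves, no robot other than $A$ ever holds a snapshot whose recomputed target would differ.

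For (c) I would use the parity bookkeeping of Lemma~\ref{NoSymBigBlock}. Moving towards its closest guide block $G$, robot $B$ either joins $G$ (when the hole it shares with $G$ has size $d$, which makes $G$ the unique largest $d$.block) or becomes isolated (when that hole is larger, which makes $B$ the unique isolated robot, unless $B$'s block had size $2$). In each sub-case --- and after splitting further on whether $A$ and $B$ aim at the same guide block or at a mirror pair of guide blocks --- I would suppose the resulting configuration $C'$ had an axis $\ell'$: since $n$ is odd this axis fixes one node $S'$ and one edge, and since $k$ is even no robot lies on $S'$. Confronting this with the structure of $C'$ then yields a contradiction, exactly as in Lemma~\ref{NoSymBigBlock}: either the unique largest block would have odd size and hence a robot on $S'$ (using that the guide block carrying $H$ has even size, since it is symmetric about the old axis through $S\in H$); or the unique isolated robot $B$ would have to sit on $S'$; or the unique smaller block left behind by $B$ and the unique largest block would both have to be centred on $\ell'$, which forces $d$ to be simultaneously odd and even. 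The degenerate sub-cases (a block of size $2$ dissolving into two isolated robots, a guide-hole of size $1$, or landing in the tightly constrained \emph{BigBlock1-1} pattern, which $k>8$ excludes) are dispatched by the same device.

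For (d), once (c) holds the argument is largely abstract: at any instant the pending robots (those that took a snapshot but have not yet moved) are $A$, inherited from $C$, together with at most the unique movable robot of the current rigid configuration. As in Lemma~\ref{outdated}, as long as $A$ has not moved the configuration seen by any later robot cannot change (only $A$ or that robot itself could change it), so that robot's target stays correct; and when $A$ finally moves it merely shrinks its own block or turns itself isolated towards its guide block, which --- because the other robots have meanwhile funnelled into a strictly largest $d$.block whose identity $A$'s move cannot alter --- leaves every pending target valid and does not restore the axis. Therefore the only robot that can hold an incorrect target is $A$ itself, so there is at most one outdated robot with an incorrect target.

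I expect the main obstacle to be step (c): a \emph{BlockMirror2} configuration may contain many equal-size $d$.blocks and the inter-distance $d$ may exceed $1$, so the number of ways ``$B$ moves and the result could be symmetric'' is substantial, and each possibility has to be closed by exhibiting a robot forced onto the axis node or by forcing a parity clash on $d$ or on the size of the largest block. The accompanying checks in (d) --- that every intermediate configuration stays rigid and every pending target stays valid until $A$ moves --- are routine once (c) is settled, but still require the same block-size bookkeeping as the \emph{BigBlock} cases of Lemma~\ref{outdated}.
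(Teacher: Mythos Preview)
Your overall strategy --- fix the symmetric pair $(A,B)$, let only $B$ move, and then argue that until $A$ moves no other robot can acquire an outdated snapshot with a wrong target --- matches the paper's. The paper, however, does not route the argument through rigidity. Instead it splits on the common block size $s$ (the case $s=2$ versus $s>2$) and on whether $B$'s step makes it join its guide block or leaves it isolated, and in each branch it exhibits directly \emph{which} single robot is allowed to move next: either $B$ itself (as the sole isolated robot), or $B'$ (the robot left behind in $B$'s old block, now the sole closest isolated robot to the sole biggest block), or $B$'s former neighbour in the $s>2$ case. The invariant the paper maintains is ``there is exactly one biggest $d$.block, and the currently movable robot is the unique closest one to it''; this both pins down the movable robot and makes target-correctness of any pending robot immediate. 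Your step~(c) would recover the same conclusion, but via a longer parity detour.

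The genuine gap is in step~(d). You assert that when $A$ finally moves, ``the other robots have meanwhile funnelled into a strictly largest $d$.block whose identity $A$'s move cannot alter'' and that $A$'s move ``does not restore the axis''. Neither claim holds in general. If $A$ moves early --- right after $B$, or after only $B'$ has taken a snapshot but not yet moved --- then $A$ joins its own guide block $D_2$ and one obtains $|D_1|=|D_2|=s+1$: two biggest blocks, not one, and the configuration may well be symmetric (it is exactly the configuration one would get had the scheduler moved $A$ and $B$ together in $C$). The paper confronts this case head-on in its $s=2$ branch: it checks explicitly that $B'$'s pending target $D_1$ is still correct in the new configuration, either because the configuration is symmetric and $B'$ is one of the two robots allowed to move (still toward $D_1$), or because it is asymmetric and $B'$ remains the unique movable robot. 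Your argument needs this same explicit check; ``strictly largest'' is too strong an invariant to maintain across $A$'s move. Relatedly, the $s=2$ situation you relegate to the ``degenerate sub-cases'' is in fact what drives the paper's primary case split, precisely because $B$'s departure dissolves its old block into an extra isolated robot $B'$ that has to be tracked separately.
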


\begin{proof}
Assume that a robot $A$ becomes an outdated robot.
Then, two robots $A$ and $B$ were allowed to move in a configuration $C$ of type {\em BlockMirror2} such that the scheduler activated both $A$ and $B$, however only $B$ moves.

In $C$, all robots belong to $d$.blocks and all the $d$.blocks have same size $s$.
The destinations of robots $A$ and $B$ are the guide blocks (that are neighbors to the guide hole $H$).
When $B$ moves, it either joins a guide $d$.block $D1$ or it becomes an isolated robot.\\

($i$) In the case $|s|=2$, then there is another robot $B'$ that was in the same $d$.block as $B$ in $C$ that becomes an isolated robot. Two cases are possible as follow:
\begin{itemize}
\item $B$ joins $D1$. $D1$ is the only biggest $d$.block in the configuration ($|D1|=3$), and $B'$ is the only closest isolated robot to the $d$.block of size $3$ ($D1$).
Therefore, before $B'$ joins $D1$, only $B'$ is allowed to move. 
\begin{itemize}
\item If $A$ decides to move before $B'$ joining $D1$ then $A$ joins also a d.block $D2$ and either the configuration becomes symmetric and in this case the target of $B'$ remains $D1$ or the configuration is not symmetric and $B'$ is the only robot allowed to move, its target destination is $D1$. 
\item Let consider the case where $A$ does not move before $B'$ joining $D1$. After $B'$ joins $D1$, $D1$ remains the only biggest $d$.block in the configuration. Note that the target $d$.block of all the robots in the configuration when they are allowed to move (except $A$ which has an outdated view with an incorrect target) is $D1$. If $A$ decides to move, it joins the $d$.block $D2$. Note that $|D1|>|D2|$, thus all the robots that took a snapshot before $A$ moves have a correct target even if their view is outdated.   
\end{itemize}

\item $B$ becomes an isolated robot, then $B$ and $B'$ are the only isolated robots in $C$. In this case $B$ is the only one that is the closest to a $d$.block (otherwise, in the {\em BlockMirror2} configuration, the distance between $B'$ and a $d$.block is smaller than between $B$ and a $d$.block.) and thus, it is the only one allowed to move. When $B$ joins the closest $d$.block $D1$, there will be in the configuration exactly one biggest $d$.block which is $D1$. The target of all the robots (except $A$ which already has an outdated view) when they are allowed to move becomes $D1$). Thus, when $A$ decides to move, all the robots that took a snapshot have a correct target even if it is computed according to an outdated view. 
\end{itemize}

($ii$) In the case $|s|>2$. Two cases are also possible:
\begin{itemize}
\item When $B$ moves, it becomes an isolated robot. Then, $B$ is the only isolated robot in the configuration. Thus, it is the only one allowed to move. 
Note that since it is the only isolated robot and since there is an even number of robots in the ring, the configuration reached is not symmetric. When $B$ joins the closest $d$.block $D1$, $D1$ becomes the only biggest $d$.block. Note that the configuration reached is not symmetric since there is only one biggest $d$.block and one d.block of size $s-1$. All the robots that are activated (except $A$) have as a target destination $D1$, when $A$ decides to move, it becomes an isolated robots. Thus all the view of robots that took a snapshot before $A$ moves have a correct target even if it is outdated.
\item When $B$ moves, it joins a $d$.block $D1$. $D1$ becomes the only one biggest $d$.block in the configuration. The only robot that is allowed to move is the robot neighboring to $B$ in the initial {\em BlockMirror} configuration (let this robot be $B'$). Its destination is its adjacent empty node towards $D1$. If $A$ moves before $B'$, even if $B'$ took a snapshot before, its target destination is correct since when $A$ moves, $B'$ is allowed to move and its target destination remains $D1$. If $A$ does not move, then once $B$ moves, it joins $D1$ and the configuration contains exactly one biggest $d$.block even after $A$ moves, thus the target of all the robots is correct ($D1$) even if their view is outdated.    
\end{itemize}

Observe that the same holds when the target $d$.block of both robots $A$ and $B$ is the same $d$.block ($D1=D2$) since there will be in the configuration exactly one biggest $d$.block, the target of all the robots is the only biggest $d$.block in the configuration.

We can deduce from the case above that starting from any non-periodic {\em BlockMirror2} configuration without
any tower, there exist at most one outdated robot with an incorrect target and the lemma holds.

\end{proof}

\begin{lemma}\label{blockmirror}
Starting from any non-periodic initial configuration without tower, the configuration cannot become of type {\em BlockMirror} by the behavior of Phase $1$.
\end{lemma}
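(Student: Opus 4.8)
The plan is to argue by contradiction: assume that during an execution of Phase~$1$ issued from a non-periodic tower-free configuration, some move turns a reachable configuration $C$ into a configuration $C'$ of type \emph{BlockMirror}. (Here ``becomes of type \emph{BlockMirror}'' means ``is produced by a move''; if the initial configuration itself is \emph{BlockMirror} this does not count, and the argument below also shows that its immediate successor is not \emph{BlockMirror}.) I will use throughout the three defining properties of a \emph{BlockMirror} configuration: every robot lies in a $d$.block, all $d$.blocks have the same size $s$ --- hence $s\geq 2$, since a lone robot is an isolated robot --- and there are strictly more than two $d$.blocks. A move displaces exactly one robot $r$ by one edge, so $C$ is $C'$ with $r$ pulled back by one edge; the analysis splits according to the type of $C$ and to whether $r$ performs the move prescribed for that type or is an outdated robot with an incorrect target.

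First I would treat the \emph{prescribed-move} configurations. If $C$ is of type \emph{BlockDistance}, then $d>1$ and the moving robots are border robots of a $d$.block stepping towards its interior; the inter-distance then drops to $d-1$, the robot that moved sits in a $(d-1)$.block of size exactly $2$, and the $k-2$ other robots --- still pairwise at distance $d>d-1$ --- are isolated, so $C'$ has isolated robots. If $C$ is of type \emph{BlockMirror1} or \emph{BlockMirror2}, then $r$ is a border robot of some $d$.block $B$ heading towards another $d$.block, so $B$ shrinks to size $s-1$: this is an isolated robot when $s=2$, and a $d$.block strictly smaller than the others when $s>2$. If $C$ is one of the \emph{BigBlock} types, every moving robot heads towards a biggest $d$.block $BL$; no Phase~$1$ rule ever moves an interior robot, so a $d$.block can never be split, and a routine inspection of the sub-cases shows that the step leaves $C'$ with an isolated robot, or with two $d$.blocks of different sizes ($BL$ can only grow and remains maximal, or the $d$.block that $r$ leaves shrinks below $|BL|$), or with at most two $d$.blocks. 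The only point where $k>8$ is used is the sub-case \emph{BigBlock1-1}, where the two isolated robots may merge into a fresh $d$.block of size $2$; $k>8$ makes this strictly smaller than the size-$(k-2)/2$ blocks, so the $d$.blocks are again unequal. In every sub-case $C'$ violates one of the three properties above, hence is not \emph{BlockMirror}.

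It remains to discard the possibility that the move into $C'$ is performed by an outdated robot with an incorrect target. Consider the earliest instant at which a \emph{BlockMirror} configuration could be produced; strictly before it the configuration has never been \emph{BlockMirror}, and in particular never \emph{BlockMirror2}, so Lemma~\ref{outdated} applies on the whole prefix (or Lemma~\ref{outdated-special}, should the execution have started from a \emph{BlockMirror2} configuration): there is exactly one outdated robot $r$ with an incorrect target, and it is $r$ that performs the move. Going through the branches in the proofs of Lemmas~\ref{outdated} and~\ref{outdated-special}, each of them shows that once that outdated robot moves, the resulting configuration has a \emph{unique} biggest $d$.block --- the robot $r$ either joins it, or is left isolated, or is left in transit, so it can only reinforce its strict maximality --- or else has an isolated robot. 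Neither a unique biggest $d$.block nor an isolated robot is compatible with \emph{BlockMirror}, so $C'$ is not \emph{BlockMirror}: the desired contradiction.

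The step I expect to be the main obstacle is this last one, because of the mild circular dependency between Lemma~\ref{outdated} --- whose statement presupposes that the configuration never becomes \emph{BlockMirror2} --- and the present lemma. I break it by reasoning about the \emph{first} instant at which a \emph{BlockMirror} configuration could arise, so that the hypothesis of Lemma~\ref{outdated} is available on the prefix up to that instant, and by mining the case analyses inside the proofs of Lemmas~\ref{outdated} and~\ref{outdated-special} for the stronger fact that, right after the outdated robot moves, there is a unique biggest $d$.block or an isolated robot; everything else is the routine case check above. Together with the prescribed-move analysis, this shows that no move of Phase~$1$ can produce a \emph{BlockMirror} configuration, which is the claim.
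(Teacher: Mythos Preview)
Your overall strategy is sound but organized differently from the paper. The paper argues \emph{backward}: if the resulting configuration $C'$ were \emph{BlockMirror} with all blocks of size $s$ and more than two blocks, then since at most two robots moved, the preceding configuration $C$ still contains a $d$.block of size $s$ together with a smaller block or an isolated robot, so $C$ is necessarily \emph{BigBlock}; but every \emph{BigBlock} rule sends robots \emph{toward} a biggest block, so a block of size $s-1$ can never grow to size $s$ --- contradiction. This backward reduction spares the paper your exhaustive forward enumeration over the type of $C$: your \emph{BlockDistance} and \emph{BlockMirror} sub-cases are subsumed, because the shape of $C'$ already forces $C$ to be \emph{BigBlock}. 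What your decomposition buys is uniformity (every case is handled the same way); what the paper's buys is brevity (only the relevant case is argued).

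Two local points deserve attention. First, you write ``a move displaces exactly one robot $r$ by one edge'', but in this model two symmetric robots can move in the same step; the paper explicitly works with one or two movers (``because the number of robots allowed to move at the same time is at most two, there exists at least one $d$.block $D$ of size $s$''). Your argument survives under linearization --- the second mover is then an outdated robot, and your final paragraph absorbs it --- but this should be stated, not assumed. Second, your \emph{BigBlock} paragraph defers the key sub-case to a ``routine inspection''; this is exactly the case the paper actually argues, and since the whole lemma reduces to it under the paper's decomposition, it should not be left to the reader. Your first-instant device for breaking the circular dependency with Lemma~\ref{outdated} is correct and matches the paper's implicit use of that lemma.
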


\begin{proof}
From any non-periodic initial configuration without tower, we assume that the configuration becomes of type {\em BlockMirror} during the execution of Phase $1$.
In the configuration of type {\em BlockMirror}, all the robots belong to $d$.blocks. All the $d$.blocks have the same size $s$ and the number of such $d$.blocks is greater than $2$.
Therefore, in the configuration $C$, before becoming of type {\em BlockMirror}, either there existed one or two $d$.blocks such that their sizes are $s-1$ or there existed one or two isolated robots.

Let us first consider the case where there were no outdated robots with the incorrect target in $C$.
By this algorithm, because the number of robots allowed to move at the same time is at most two, there exists at least one $d$.block $D$ of size $s$. 
That is, the configuration is of type {\em BigBlock}. Two cases are possible as follow:
\begin{itemize}
\item There is an isolated robot neighbor of $D$ in $C$. The closest isolated robots are the only ones that can move. Their destination is their adjacent empty node towards $D$. They keep moving until they reach $D$. Therefore, the $d$.blocks of size $s-1$ cannot become larger by the algorithm of type {\em BigBlock1}.
\item If there is no an isolated robot neighbor of $D$ in $C$ then, the robots that are part of $d$.blocks smaller $D$ which are neighbor of $D$ are the ones allowed to move. Their destination is their adjacent empty node towards $D$. Thus, in this case too, the $d$.blocks of size $s-1$ cannot become larger.
\end{itemize}

Let us now consider the case where there exist outdated robots with incorrect targets in $C$.
By lemma~\ref{outdated}--~\ref{outdated-special}, before the configuration becomes {\em BlockMirror}, the number of outdated robots with incorrect targets is at most one.
If there exists an outdated robot $A$ with the an incorrect target that makes the configuration type {\em BlockMirror}, then there exists at least two $d$.blocks of size $s$, a $d$.block of size $s-1$, and the outdated robot that is an isolated robot.
such that if it moves, then it joins the $d$.block of size $s-1$.
When $A$ becomes an outdated, then its symmetric robot $B$ has moved and has joined a biggest $d$.block.
However, before $B$ moves, there existed a $d$.block of size $s$, such that the destination of $A$ and $B$ are $d$.blocks of the same size $s$.
This is a contradiction.
If there existed exactly one biggest $d$.block after $B$ joins a $d$.block.
By this algorithm, before $A$ moves, the configuration cannot contain two $d$.blocks of size $s$.
This is a contradiction.

Thus, we can deduce that, the configuration cannot become of type {\em BlockMirror} dynamically (It can only be an initial configuration) and the lemma holds.
\end{proof}

\begin{lemma}\label{tower-proof}
Starting from any non-periodic initial configuration, no tower is created by the behavior of Phase $1$.
\end{lemma}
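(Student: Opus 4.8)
\textit{Proof proposal.}

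\noindent\textbf{Approach.} I would argue that a tower can appear only when some move brings a robot onto an occupied node, and then rule out every way this could happen under the Phase~$1$ rules. The key elementary observation is that \emph{every} move rule of Phase~$1$ sends the moving robot to an \emph{adjacent empty node} (toward the opposite side of $H$, toward a neighbouring $d$.block, toward a guide block, or toward a nearest biggest $d$.block): in each case the target is empty in the snapshot the robot used. Hence, starting from a tower‑free configuration, the first tower can only be created when, between a robot's \emph{Look} and its \emph{Move}, the target node has become occupied. This forces one of two situations: (i) two robots are activated, see the \emph{same} configuration, and both select that node (their instantaneous moves then coincide); or (ii) the moving robot is outdated and, after its snapshot, another robot moves onto its stale target (possibly at the same instant). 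It therefore suffices to exclude (i) and (ii).

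\noindent\textbf{Excluding (i).} By the classification lemma of~\cite{Klasing08-j}, in a rigid configuration all views are distinct, so the asymmetric Phase~$1$ rules (\emph{BlockMirror1}, \emph{BigBlock1-1}, and the asymmetric instances of \emph{BigBlock1-2}/\emph{BigBlock2}, where ``biggest view'' is unique) enable at most one robot; thus (i) can occur only in a symmetric configuration, where the two enabled robots form the unique mirror pair. Their destinations are images of each other across the axis, so if they coincide the common target lies on the axis; since $n$ is odd the axis meets the ring in exactly one node $S$ and one edge, so the target would be $S$ and the two movers would be the two neighbours of $S$ both stepping onto it. I would then inspect the symmetric Phase~$1$ types: in a \emph{BlockDistance} configuration the movers are the borders of $H$ and move \emph{away} from $H\ni S$, so their targets lie outside $H$; in a \emph{BlockMirror2} configuration the movers lie in holes other than $H$ and move toward a guide block, so again their targets lie outside $H$ and are distinct; in a symmetric \emph{BigBlock} configuration each mover moves toward its nearest biggest $d$.block, and if a mover $A$ (a neighbour of $S$) targeted $S$, its nearest biggest $d$.block $D$ would be reached through $S$, whereas the mirror image $D'$ of $D$ lies on the other side of $A$ at distance smaller by $2$ — either $A$ is already inside $D'$, impossible since a mover is never inside a biggest $d$.block, or $A$ is strictly closer to $D'$ and would move away from $S$ — a contradiction; the case of a common target $d$.block on the axis is handled exactly as in the proof of Lemma~\ref{NoSymBigBlock} (parities of $n$ and $k$ forbid the relevant occupied nodes from sitting on the axis). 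Hence the two movers never share a destination, and (i) cannot create a tower.

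\noindent\textbf{Excluding (ii).} Let $O$ be an outdated robot that moves at time $t$ onto its stale target $X$, empty in $O$'s snapshot. If $O$'s target is \emph{correct}, then the algorithm in the time‑$t$ configuration still designates $X$ for $O$, so $X$ is empty at time $t$; and $O$'s move cannot coincide with another robot's move onto $X$, since in an asymmetric configuration $O$ would be the unique mover and in a symmetric one its partner would be $O$'s mirror, whose target differs from $X$ by the previous paragraph. So assume $O$ has an \emph{incorrect} target. By Lemmas~\ref{outdated} and~\ref{outdated-special} — the only potential \emph{BlockMirror} configuration being \emph{BlockMirror2}, by Lemma~\ref{blockmirror} — there is at every instant at most one such robot, so $O$ is unique. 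Retracing those proofs: $O$ took its snapshot in a symmetric configuration $C$ where its mirror $B$ was also enabled but only $B$ moved, and $X$ is the empty node adjacent to $O$ in $C$ on the side \emph{opposite} to $B$ (by the \emph{BlockDistance}/\emph{BigBlock}/\emph{BlockMirror2} move rules). The case analyses of Lemmas~\ref{outdated} and~\ref{outdated-special} show that after $B$'s move every subsequently enabled robot heads toward the unique biggest block, which sits on $B$'s side of the axis; in particular no enabled robot ever targets $X$ (which is adjacent to the still‑immobile $O$ on the opposite side), so $X$ remains empty and nothing moves onto it simultaneously with $O$. Hence $O$ lands on an empty node, and (ii) cannot create a tower either. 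Combining the two cases, no Phase~$1$ move ever brings a robot onto an occupied node, so no tower is created, and the lemma follows.

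\noindent\textbf{Main obstacle.} The delicate point is the interaction of the \emph{BigBlock} subcase of (i) with case (ii): one must follow the sequence of intermediate configurations produced after a ``half‑activated'' symmetric step and identify precisely which robots become enabled, so as to certify that the unique stale target stays empty and is never a simultaneous destination. This is exactly where the detailed case work of Lemmas~\ref{NoSymBigBlock},~\ref{outdated} and~\ref{outdated-special} is reused, and it is the part that requires the most care.
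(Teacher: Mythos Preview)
Your overall strategy coincides with the paper's: reduce the question to the interaction of an outdated robot with a later mover, then exploit Lemmas~\ref{outdated}--\ref{blockmirror} to control that interaction case by case. Your split into (i) ``two fresh movers share a destination'' and (ii) ``an outdated robot's stale target becomes occupied'' is a cleaner packaging of what the paper does; the paper dispatches your case~(i) in one sentence (``if no robot has an outdated view, clearly no tower is created'') and then argues your case~(ii) by assuming a face-to-face pair $A$ (outdated) and $B$ (later mover) and deriving a contradiction for each configuration type.

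There is, however, a genuine soft spot in your case~(ii). Your key geometric claim --- that the stale target $X$ lies on $O$'s side of the axis while the unique biggest block produced by $B$'s move lies on $B$'s side, so every subsequently enabled robot moves \emph{away} from $X$ --- is not true in all sub-cases. In \emph{BlockMirror2} the movers head \emph{toward} the guide blocks near $H$, hence toward the axis, so $X$ can lie on the short arc toward $B$; and in the symmetric \emph{BigBlock} sub-case where $O$ and $B$ share the same target $d$.block $D$ straddling the antipodal edge, the enlarged $D$ is not confined to ``$B$'s side'' at all. In these situations your directional argument (``movers go toward $B$'s side, $X$ is on the other side'') does not by itself exclude a later mover from stepping onto $X$. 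The paper closes this gap with a different observation: once $B$'s partner has moved, the next enabled robot $R$ is \emph{the closest} to the (now unique) biggest block $D_1$, hence there is no robot between $R$ and $D_1$; in particular $O$ is not between $R$ and $D_1$, so $R$'s step toward $D_1$ cannot land on the node $X$ adjacent to $O$. This ``no robot between the mover and the biggest block'' argument is what actually makes the \emph{BigBlock} case go through (it is the content of the paper's item~\ref{C1OTR}), and it is not equivalent to your side-of-the-axis reasoning. Your appeal to ``the case analyses of Lemmas~\ref{outdated} and~\ref{outdated-special}'' is in the right spirit, but those lemmas only bound the number of outdated robots with incorrect targets; they do not themselves assert that $X$ stays empty, so the missing step has to be supplied explicitly.
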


\begin{proof}
If each robot that is allowed to move immediately moves until other robots take new snapshots, that is, no robot has outdated view, then it is clear that no tower is created.

Assume that a tower is created.
Then, there exists a robot $A$ with outdated view in a configuration $C$, and another robot $B$ is allowed to move face-to-face with $A$ in $C$.
When $A$ becomes an outdated robot, the configuration is symmetric and there exists another robot $A^\prime$ allowed to move.
By the proof of lemmas~\ref{outdated} and ~\ref{outdated-special}, if there exist two or more outdated robots, then the configuration is type {\em BigBlock}.
By lemmas~\ref{outdated}-~\ref{outdated-special}, the number of outdated robots with incorrect target is at most one.

Additionally, if $C$ is an initial configuration, then there exists no outdated robot, and otherwise, $C$ is not {\em BlockMirror} by lemma~\ref{blockmirror}.
Therefore, we can consider $C$ is {\em BlockDistance} or {\em BigBlock}.

\begin{itemize}
\item $C$ is type {\em BlockDistance}. By the definition of $C$, there is either one $d$.block of size $k$ or two $d$.blocks of size $k/2$. when $A^\prime$ moves, it creates a new $d'$.block such that $d'=d-1$. Note that since there is in the configuration only one $d'$.block, the new target of all the robots in the configuration is this $d'$.block. 
($i$) if the robot that is at distance $d$ from $A$ becomes allowed to move, its destination is its adjacent empty node in the opposite direction of $A$. Thus, if $A$ decides to move, no tower is created. Contradiction.
($ii$) if $A$ is allowed to move before it neighboring robot at distance $d$ (let this robot be the robot $E$), then we are sure that $A$ moves to an empty node towards $E$ such that a new $d'$.block is created. Thus, in this case too no tower is created. Contradiction.  

\item $C$ is type {\em BigBlock}.
It is clear that in the case the configuration is of type {\em BigBlock1-1}, no tower is created since when $A'$ moves, the only robot that can move is $A$ ($B$ is not allowed to move). For the other cases, since there is, in the configuration, at least one biggest $d$.block, if $B$ is allowed to move,
then, the destination of $B$ is the biggest $d$.block. 
Note that other robots than $A$ can have an outdated view but robot $A$ is the only one having an outdated view with an incorrect target (refer to Lemmas~\ref{outdated}-\ref{outdated-special}).
If there are other robots with an outdated view, since their destination is correct, their target is a biggest $d$.block in the configuration. 
The cases bellow are possible:
\begin{enumerate}
\item \label{C1OTR} If $A^\prime$ joins a $d$.block in the configuration $C$ (let this $d$.block be $D1$), then $D1$ is the only biggest $d$.block in $C$.
Since there is only one biggest $d$.block in the configuration, if a robot is allowed to move, its destination is its adjacent empty node towards the biggest $d$.block (refer to {\em BigBlock} configuration). Thus, the destination of $B$ is its adjacent empty node towards $D1$. Note that there in this case, no other robot between $D1$ and $B$. Therefore $A$ and $B$ cannot move towards each other (face-to-face).  Hence, No tower is created.


\item If $A^\prime$ becomes isolated robot in $C$ and if $A^\prime$ belonged to a $d$.block having a size bigger than $2$, then $A^\prime$ is the only isolated robot that is the closest to a biggest $d$.block.
Then, only $A^\prime$ is allowed to move, and $B$ is not allowed to move. When $A'$ joins the biggest d.block we retrieve Case \ref{C1OTR}.
\item If $A^\prime$ becomes an isolated robot in $C$, and if $A^\prime$ belonged to $d$.block of size equal to $2$, then $A^\prime$ is the only robot that is the closest to a d.block. only $A^\prime$ is allowed to move, and $B$ is not allowed to move. When $A'$ joins the biggest d.block we retrieve Case \ref{C1OTR}.
%
\end{enumerate}

\end{itemize}
\end{proof}

\begin{lemma}\label{periodic-proof}
From any non-periodic initial configuration without any tower, the algorithm does not create a periodic configuration by the behavior of Phase $1$.
\end{lemma}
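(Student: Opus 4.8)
The plan is to argue by contradiction and look at the \emph{first} step of a Phase~$1$ execution that produces a periodic configuration: let $C$ be the configuration just before that step and $C'$ the periodic one just after. Then $C$ is non-periodic (it is either the initial configuration, which is non-periodic by hypothesis, or a configuration already reached, which is non-periodic by the choice of the first bad step); by Lemma~\ref{tower-proof} neither $C$ nor $C'$ contains a tower; and by Lemma~\ref{blockmirror}, if $C$ is not the initial configuration then $C$ is not of type {\em BlockMirror}. The first thing I would extract is a purely arithmetic constraint from $n$ being odd: a periodic $C'$ is invariant under a rotation $\rho$ whose order $m$ divides $n$, so $m$ is odd, hence $m\ge 3$; moreover the occupied set of $C'$ is a union of $\rho$-orbits, each of size $m$, so $m\mid k$ as well. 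Consequently $C'$ consists of at least \emph{three} identical copies of a sub-configuration, and every $\rho$-invariant family of features of $C'$ (its set of biggest $d$.blocks, its set of maximal holes, its set of isolated robots, $\dots$) has cardinality a multiple of $m\ge 3$; in particular $C'$ can have neither a unique biggest $d$.block, nor exactly two biggest $d$.blocks, nor a unique isolated robot.

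Next I would pin down how little $C$ and $C'$ differ. The step $C\to C'$ moves at most two robots, each by one edge: exactly one robot when $C$ is rigid (the set of robots allowed to move then has a unique element of largest view, since in a rigid configuration all robots have distinct views), and exactly one or exactly two robots (a symmetric mirror pair) when $C$ is symmetric, depending on whether the scheduler activates one or both of the pair; the only additional freedom is that one mover may be an outdated robot with an incorrect target, but by Lemmas~\ref{outdated} and \ref{outdated-special} there is at most one such robot. Each mover lies on the border of exactly two segments and its move only perturbs those two inter-distances (possibly splitting or merging them), so the inter-distance sequence of $C$ agrees with that of $C'$ except on at most two short localized regions (at most three if we also count the stray outdated robot). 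Since the inter-distance sequence of $C'$ is $m$-periodic with $m\ge 3$, the sequence of $C$ is ``$m$-periodic outside at most three regions''. This alone is not contradictory, so the heart of the argument is to combine it with the fact that the algorithm's moves are directed at an essentially \emph{unique} target.

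I would then do the case analysis on the type of $C$, in the same style as the proofs of Lemmas~\ref{outdated}--\ref{tower-proof}. If $C$ is of type {\em BlockDistance}, it is symmetric with one $d$.block of size $k$ or two of size $k/2$, and the movers are the robots neighboring $H$ moving away from $H$; whether one or two of them move, $C'$ still has only one or two biggest $d$.blocks (the untouched core of the former block, possibly together with one or two isolated robots), contradicting the ``multiple of $m\ge3$'' constraint. If $C$ is of type {\em BigBlock}, it has at least one biggest $d$.block and the movers head towards a biggest $d$.block; invoking Lemma~\ref{NoSymBigBlock} and the structural facts already established in the proofs of Lemmas~\ref{outdated}--\ref{tower-proof}, after the step there is either a unique biggest $d$.block (when a mover enlarges one) or a unique closest isolated robot, and the at-most-one outdated robot with an incorrect target adds only one further defect — in every subcase $C'$ fails to have its biggest $d$.blocks (or isolated robots) in multiples of $m\ge 3$, a contradiction. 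The last possibility is $C$ an \emph{initial} {\em BlockMirror} configuration (by Lemma~\ref{blockmirror} it cannot arise dynamically): here there are no outdated robots, all $d$.blocks have equal size $s$ and this is the configuration closest to periodic; but a single robot moving towards an adjacent $d$.block turns exactly one $d$.block into size $s+1$ (or leaves one of size $s-1$ plus an isolated robot), again producing a unique extremal feature incompatible with the three-fold copy structure of $C'$.

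I expect the main obstacle to be precisely this last point, together with the bookkeeping around outdated robots: one has to be certain that ``$m$-periodic outside at most three localized regions'' plus ``$C'$ has a unique biggest $d$.block (or a unique isolated robot)'' cannot coexist with $\rho$-invariance for a rotation of odd order $m\ge 3$ — intuitively clear, since a single extremal feature cannot be mapped onto three or more copies of itself, but it must be spelled out subcase by subcase without assuming $C$ itself is symmetric, which need not hold. The arithmetic input that $m$ is odd, hence $m\ge 3$, so that ``two copies'' is never an option, is what makes all of the local uniqueness arguments close cleanly.
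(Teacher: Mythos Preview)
Your approach matches the paper's: both argue by contradiction, use that oddness of $n$ forces any periodic configuration to consist of at least three identical copies (so no invariant feature class can have size one or two), and then split on the type of $C$ ({\em BlockDistance}, {\em BlockMirror}, {\em BigBlock}), showing in each case that the move leaves at most one or two biggest blocks or isolated robots, with Lemmas~\ref{outdated}--\ref{outdated-special} bounding the extra defects from outdated robots. One small slip in your {\em BlockDistance} sketch: after a border robot moves away from $H$ the inter-distance drops to $d-1$, so the new biggest blocks are the freshly created $(d{-}1)$.blocks of size $2$ (not an ``untouched core'' --- the remaining robots all become isolated once the inter-distance decreases), but your count of one or two such blocks is still correct and the contradiction closes exactly as in the paper.
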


\begin{proof}
Assume that, after a robot $A$ moves, the system reaches a periodic configuration $C^*$.
Let $C$ be the configuration that $A$ observed to decide the movement.
The important remark is that, since we assume an odd number of nodes, any periodic configuration should have at least three odd number of $d$.blocks with the same size or at least three odd number of isolated robots.
By lemmas~\ref{outdated}-~\ref{outdated-special}, the number of outdated robot with incorrect target is at most one.
By the proof of lemmas~\ref{outdated} and ~\ref{outdated-special}, if there exist two outdated robots, then the configuration is type {\em BigBlock} and not symmetric, and there exists exactly one biggest $d$.block.
Additionally, by the proof of lemmas~\ref{outdated} and ~\ref{tower-proof}, the outdated robot cannot be on the biggest $d$.block and it has a neighboring empty node.

\begin{itemize}
\item $C$ is a configuration of type {\em BlockDistance}.
In $C$, there are only two $d$.blocks of the same size or a single $d$.block such as $d>1$.
Because an outdated robot cannot be on the biggest $d$.block, there exists no outdated robot in $C$.
Then, $C$ is symmetric and there is another robot $B$ that is allowed to move, we can define an odd size hole $H$ on the axis of symmetry, and $A$ and $B$ are neighboring to $H$.
After configuration $C$, three cases are possible: $A$ moves before $B$ moves, $A$ moves after $B$ moves, or $A$ and $B$ move at the same time.
In the case that $A$ moves before $B$ moves, then $A$ and its destination robot become exactly one $(d-1)$.block.
In the case that $A$ and $B$ move at the same time, $A$ and $B$ construct $(d-1)$.block with their destination robots respectively, then there exist exactly two $(d-1)$.blocks.
In the case that $A$ moves after $B$ moves, then $B$ and its destination robot become exactly one $(d-1)$.block.
If other robots move after $B$ moves before $A$ moves, their destination is the $(d-1)$.block, so there exists exactly one $(d-1)$.block.
Therefore, in all cases, $C^*$ is not periodic.

\item $C$ is a configuration of type {\em BlockMirror}.
By lemma~\ref{blockmirror}, $C$ is the initial configuration and there exists no outdated robot.
Then, in $C$, there are only $d$.blocks of the same size and no isolated robot.
\begin{itemize}
\item If $C$ is {\em BlockMirror1}, $A$ is the only one robot allowed to move.
When $A$ moves, it either becomes an isolated robot or joins another $d$.block.
In the latter case, $C^*$ has exactly one biggest $d$.block.
Consider the former case.
If the size of $d$.block $D$ to which $A$ belonged is 2, then $A$ and $D$ becomes isolated robots, otherwise, $A$ becomes exactly one isolated robot.
Thus, in both cases, the number of isolated robots is at most 2.
Therefore, in these cases, $C^*$ is not periodic.
\item If $C$ is {\em BlockMirror2}, there is another robot $B$ that is allowed to move.
After configuration $C$, three cases are possible: $A$ moves before $B$ moves, $A$ moves after $B$ moves, or $A$ and $B$ move at the same time.
In the case that $A$ moves before $B$ moves, then $A$ becomes an isolated robot or joins the guide $d$.block $D1$.
In the previous case, it is clear that $C^*$ is not periodic because the number of isolated robots is at most 2.
In the later case, $D1$ becomes exactly one biggest $d$.block, thus $C^*$ is not periodic.
In the case that $A$ moves after $B$ moves, then there exists exactly one biggest $d$.block to which $B$ joined or two isolated robot $A$ and $B$.
If there exists exactly one biggest $d$.block $D2$ to which $B$ joined, then other robots move to $D2$.
Therefore, there exists exactly one biggest $d$.block before $A$ moves.
If there exist two isolated robots $A$ and $B$, then by {\em BigBlock}, no other robot can move before both of them join $d$.blocks.
In these cases, it is clear that $C^*$ is not periodic.
In the case that $A$ and $B$ move at the same time, $C^*$ is not periodic similarly to the previous case.
\end{itemize}

\item $C$ is a configuration of type {\em BigBlock}.
In this configuration, there can be more than 1 outdated robots, but it is not in the biggest $d$.block.
Additionally, only one of them can have incorrect target.
Let $s$ be the size of biggest $d$.blocks in $C$ and let $R$ be the robot that has an outdated view. Two cases are possible as follow: 
\begin{itemize}
\item $C$ is not symmetric. In this case only $A$ is allowed to move. The sub-cases bellow are possible:
         \begin{itemize}
         \item $A$ and $R$ move and join the same d.block. In this case we are sure that the configuration reached is not symmetric since the configuration contains exactly one biggest d.block.
         \item $A$ and $R$ move and becomes (remain) isolated robots such that the distance between each of them and a biggest d.block is different. In this case too the configuration that is reached is not periodic since if $C*$ is periodic, since $A$ (respectively $R$) is the only closest robot to a d.block otherwise $A$ would not have been allowed to move in $C$.
         \item $A$ and $R$ move and becomes (remain) isolated robots such that the distance between each of them and a biggest d.block is the same. Recall that since the ring has an odd size, if a configuration is periodic then there are at least three odd number of d.blocks. Thus when $A$ and $R$ move, if the configuration that is reached is periodic that means that there is another d.block in $C*$ that has an isolated robot at the same distance as the one between $A$ and its closest d.block. Which is impossible since $A$ was allowed to move in $C$.
         \item $A$ and $R$ move and join different d.blocks. Let refer to these two d.block by respectively $D_1$ and $D_2$. If $|D_1| \ne |D_2|$ then the configuration contains exactly one biggest d.block. Thus, the configuration that is reached is in this case not periodic. If $|D_1|=|D_2|$, then if the configuration that is reached is periodic, since the size of the ring is odd, we are sure that there is another d.block $D_3$ such that $|D_3|=|D_1|$. That is impossible since $A$ was allowed to move in $C$.
         \end{itemize}


\item If $C$ is symmetric. In this case, two robots $A$ and $B$ are allowed to move in $C$. The following cases are possible:
\begin{itemize}
\item $A$ moves before $B$. $C^*$ is not periodic similarly to the non-symmetric case. Note that this holds also when $B$ moves before $A$.
\item $A$ and $B$ move at the same time. Two sub-cases are possible:
          \begin{itemize}
              \item The target of A and B is the same d.block (let refer to this d.block by $D$). In this case, when $A$ and $B$ move, they become the closest ones to $D$. If $R$ moves to a biggest d.block (let this d.block be $D'$) then we are sure that $D \ne D'$ and the configuration reached is not periodic otherwise there were another robot $O$ in $C$ that was closer to $D'$ then $A$ and $B$ to $D$ ($D'$ must have two neighboring robots that are the closest), thus A and B were not allowed to move in $C$.
              \item The target d.block of $A$ and $B$ is different. In this case too we are sure that the configuration reached is not periodic since $C$ was symmetric, there was another robot $O$ that was symmetric to $R$ in $C$. When $A$ and $B$ moves respectively and when $R$ moves according to its outdated view, the configuration reached is not periodic because of robot $O$.
          \end{itemize}

\end{itemize}
\end{itemize}
\end{itemize}
For all cases, $C^*$ is not periodic and the lemma holds.

\end{proof}

 
%
%
%
%

To prove the convergence of the Phase $1$ algorithm, we first show that the size of at least one biggest $d$.block increases. Next, we show that a {\em BlockDistance} configuration with an interdistance $d$ is reached in a finite time. We then prove that the interdistance decreases such that at the end either {\em Terminal} configuration or a configuration $C^{*} \in C_{sp}$ is reached in a finite time. In the latter case Phase $2$ Algorithm is executed.

\begin{lemma}\label{toblockdistance}
Let $C$ be a configuration such that its inter-distance is $d$ and the size of the biggest $d$.block is $s$, and if the configuration can keep symmetric, $s<k/2$, otherwise $s<k$.
From configuration $C$, the configuration becomes one such that the size of the biggest $d$.block is at least $s+1$ in $O(n)$ rounds.
\end{lemma}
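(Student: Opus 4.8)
The plan is to show that, from the configuration $C$, the algorithm of Phase $1$ inexorably feeds robots into (one of) the biggest $d$.block(s) until its size grows by at least one, and that this happens within $O(n)$ asynchronous rounds. The argument splits along the case structure of Phase $1$: $C$ is either of type \emph{BlockMirror}, \emph{BlockDistance}, or \emph{BigBlock}. By Lemma~\ref{blockmirror} the configuration cannot become \emph{BlockMirror} dynamically, so if $C$ is \emph{BlockMirror} it must be the initial configuration; in that case the single robot (\emph{BlockMirror1}) or the pair of symmetric robots (\emph{BlockMirror2}) designated to move walk one hole at a time toward a guide/neighboring $d$.block, and after at most $O(n)$ rounds one of them reaches it, producing a $d$.block of size $s+1$ (here $s$ is the common block size, and $s+1 \le k$, or $\le k/2$ if we must stay symmetric, by the hypothesis). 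The \emph{BlockDistance} case does not directly increase a $d$.block — instead the neighbors of $H$ step outward, which decreases $d$; I would handle this by noting that after one such move the inter-distance becomes $d' = d-1$, exactly one $d'$.block is created, and the configuration is then of type \emph{BigBlock} (with biggest $d'$.block of size $2$, hence $<k$ and, if symmetry is preserved, $<k/2$), so it reduces to the \emph{BigBlock} analysis for the new inter-distance; this is the one place where the statement's phrasing ``the size of the biggest $d$.block'' should be read with respect to the current inter-distance at each moment.

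The heart of the proof is the \emph{BigBlock} case. Here there is at least one biggest $d$.block $D$ of size $s$. The robots allowed to move are precisely the ones closest to a biggest $d$.block — either isolated robots adjacent to $D$ (\emph{BigBlock1}) or border robots of smaller $d$.blocks adjacent to $D$ (\emph{BigBlock2}, and \emph{BigBlock1-2}) — and their destination is the adjacent empty node toward a nearest biggest $d$.block. I would argue by a potential/progress argument: define the relevant ``distance to $D$'' of the moving robot(s) (number of holes to cross, or the inter-distance gap for a neighboring $d$.block) and observe it strictly decreases each time that robot is activated; since the scheduler is fair, each asynchronous round forces every robot to move at least once, so within $O(n)$ rounds a moving robot reaches $D$ and $|D|$ becomes $s+1$. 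Two subtleties must be dispatched. First, I must rule out that the moving robot stalls or is ``overtaken'': once a robot starts walking toward $D$, no new robot can become the unique closest one ahead of it, and Lemmas~\ref{outdated}--\ref{outdated-special} guarantee at most one outdated robot with an incorrect target, so at worst one robot wastes a single step before realigning. Second, by Lemma~\ref{tower-proof} no tower is ever created and by Lemma~\ref{periodic-proof} no periodic configuration arises, so the configuration stays within the case structure of Phase~$1$ throughout; in the symmetric sub-cases the two symmetric moving robots mirror each other, and if they target the \emph{same} $d$.block (which then lies on the axis) one of them still reaches it, again lifting $s$ to $s+1$ — and this preserves evenness of the robot count because both arrive or, if only one moves, Lemma~\ref{NoSymBigBlock} says the configuration becomes asymmetric and we are in the \emph{BigBlock1}/\emph{BigBlock2} non-symmetric sub-case with a unique closest mover.

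For the round complexity I would note that the moving robot has to traverse at most $O(n)$ empty nodes (there are fewer than $n$ nodes total), each traversal consuming $O(1)$ rounds by fairness, and that the case-to-case transitions (\emph{BlockDistance} $\to$ \emph{BigBlock}, or a robot joining $D$ and another becoming the new closest mover) happen $O(1)$ times before $s$ actually increases — since once a robot joins $D$ we are done. Hence the total is $O(n)$ rounds. The main obstacle I anticipate is the bookkeeping around the outdated robot with an incorrect target: I must be careful that its single ``wrong'' step cannot redirect the genuine progress — e.g. it cannot momentarily make two $d$.blocks biggest in a way that changes the target of the legitimately moving robots — but Lemma~\ref{outdated} already establishes essentially this (there is always exactly one biggest $d$.block once the symmetric partner has moved, and the outdated robot is not on it and has a free neighboring node), so the remaining work is to quote it correctly and chain the sub-cases. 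A secondary nuisance is making the phrase ``if the configuration can keep symmetric'' precise: it is needed so that $s+1$ does not force a robot onto the axis node (which is impossible with $k$ even and $n$ odd), and I would state explicitly that in the symmetric-preserving regime $s < k/2$ leaves room to grow to $s+1 \le k/2$.
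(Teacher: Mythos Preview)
Your core argument matches the paper's: a robot adjacent to a biggest $d$.block is designated to move toward it, and if it becomes isolated it continues via the \emph{BigBlock1} rule until it joins, in $O(n)$ rounds. The paper's proof is exactly this two-line observation and nothing more; it does not invoke Lemmas~\ref{NoSymBigBlock}, \ref{outdated}--\ref{outdated-special}, \ref{tower-proof}, or \ref{periodic-proof} here, nor does it run the potential argument you sketch.

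The one substantive discrepancy is your treatment of \emph{BlockDistance}. You try to handle it by observing that the inter-distance drops to $d-1$ and then reinterpreting ``biggest $d$.block'' as ``biggest $d'$.block.'' But this case is already excluded by the lemma's hypothesis: in a \emph{BlockDistance} configuration one has either a single $d$.block of size $k$ or two $d$.blocks of size $k/2$, so $s=k$ (asymmetric is impossible here) or $s=k/2$, contradicting $s<k/2$ in the symmetric case and $s<k$ otherwise. The hypothesis is there precisely so that \emph{BlockDistance} need not be considered (it is handled separately in Lemma~\ref{minus}), and your reinterpretation of the statement is both unnecessary and not what the authors intend. Drop that paragraph and the proof becomes a more careful version of the paper's own.
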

\begin{proof}
From configurations of types {\em BlockMirror} and {\em BigBlock}, at least one robot neighboring to the biggest $d$.block is allowed to move.
Consequently, the robot moves in $O(1)$ rounds.
If the robot joins the biggest $d$.block, the lemma holds.

If the robot becomes an isolated robot, the robot is allowed to move toward the biggest $d$.block by configuration of type {\em BigBlock1}.
Consequently the robot joins the biggest $d$.block in $O(n)$ rounds, and thus the lemma holds.
\end{proof}

\begin{lemma}\label{inter-distance}
Let $C$ be a configuration such that its inter-distance is $d$.
From configuration $C$, the configuration becomes type {\em BlockDistance} in $O(kn)$ rounds.
\end{lemma}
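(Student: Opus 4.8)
The plan is to argue that from any configuration $C$ with inter-distance $d$, the algorithm monotonically drives the configuration through the ``block-building'' regime until the two symmetric $d$.blocks of size $k/2$ (or the single $d$.block of size $k$) demanded by the definition of \emph{BlockDistance} are assembled. The argument proceeds in two stages. First, I would invoke Lemma~\ref{toblockdistance}: as long as the biggest $d$.block has size $s$ with $s<k/2$ (in the case where symmetry can be maintained) or $s<k$ (otherwise), one application of that lemma raises the biggest-block size to at least $s+1$ in $O(n)$ rounds. Second, I would observe that this can happen at most $k$ times, since the biggest $d$.block caps out at size $k/2$ or $k$; hence after $O(kn)$ rounds the configuration has a biggest $d$.block of maximal size, and every other robot has been absorbed into it (all robots lie in $d$.blocks and there are no isolated robots, because isolated robots would still be ``closest robots'' forced to move by the \emph{BigBlock1} / \emph{BigBlock2} rules). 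By Lemmas~\ref{blockmirror}, \ref{tower-proof}, and \ref{periodic-proof}, no \emph{BlockMirror} configuration, no tower, and no periodic configuration is created in the interim, so the only stable shape consistent with ``all robots in $d$.blocks of a single maximal size, no isolated robots, $d>1$'' is exactly \emph{BlockDistance} (one $d$.block of size $k$, or --- forced by non-periodicity and the even/odd parity constraints --- two of size $k/2$).

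Concretely, the steps in order: (1) classify $C$; if $C$ is already \emph{BlockDistance} we are done in $0$ rounds. (2) Otherwise $C$ is (or evolves, by Lemma~\ref{blockmirror}, without passing through \emph{BlockMirror} dynamically --- the only \emph{BlockMirror} occurrence is as the very first configuration) of type \emph{BigBlock} or is a \emph{BlockMirror} initial configuration; in either case the robots neighboring or nearest a biggest $d$.block are activated and, by Lemma~\ref{toblockdistance}, the biggest-block size strictly increases within $O(n)$ rounds. (3) Iterate: each increase costs $O(n)$ rounds and there are at most $k$ of them, giving $O(kn)$ rounds total. (4) Argue termination of the iteration: once the biggest $d$.block reaches its ceiling ($k/2$ if the configuration has remained on a symmetry axis, $k$ if symmetry was broken at some point), there can be no remaining isolated robot and no remaining smaller $d$.block with robots to feed it, because such a robot would be ``closest to a biggest $d$.block'' and hence still activated --- contradicting that the size has stabilized. (5) Conclude that the stabilized configuration has $d>1$ (else it is a $1$.block configuration handled by Phase~2, or \emph{Terminal}) and matches the \emph{BlockDistance} template, using the parity facts ($n$ odd, $k$ even) from the start of Section~\ref{sec:Algo} to rule out an odd-size block on the axis.

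The main obstacle I anticipate is step (4)–(5): showing that the process \emph{actually reaches} \emph{BlockDistance} rather than getting stuck in some other shape, and in particular handling the bookkeeping of when the configuration is symmetric versus asymmetric. If symmetry is ever broken, the target set is a single biggest $d$.block and the ceiling is $k$, so we land on a single $d$.block of size $k$; if symmetry is preserved throughout, there are two target blocks each capping at $k/2$. One must check that the outdated-robot phenomena (Lemmas~\ref{outdated}--\ref{outdated-special}) cannot perturb this count --- at most one outdated robot with an incorrect target exists, and by Lemma~\ref{tower-proof} it never collides with another robot, so it eventually joins the intended biggest $d$.block without changing the asymptotic round count or the final shape. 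A secondary subtlety is the $O(kn)$ bookkeeping itself: each of the $O(n)$-round phases of Lemma~\ref{toblockdistance} may include the $O(n)$ ``isolated robot walks toward the biggest block'' cost, but since there are at most $k$ such phases the product is still $O(kn)$. I would also note the edge case where $C$ already has inter-distance $1$: then there are no $d$.blocks with $d>1$ to build, the configuration is handled directly by the $1$.block rules of Phases~1–2, and the statement is vacuous or immediate; stating this explicitly avoids a gap.
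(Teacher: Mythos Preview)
Your proposal is correct and follows essentially the same approach as the paper: iterate Lemma~\ref{toblockdistance} at most $k$ times, each iteration costing $O(n)$ rounds, to grow the biggest $d$.block to size $k$ or $k/2$, at which point the configuration is \emph{BlockDistance}. The paper's own proof is only three sentences long and omits the termination discussion (your steps (4)--(5)), the appeal to Lemmas~\ref{blockmirror}--\ref{periodic-proof}, and the $d=1$ edge case; your additional care here is justified but does not change the underlying argument.
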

\begin{proof}
From lemma~\ref{toblockdistance}, the size of the biggest $d$.block becomes larger in $O(n)$ rounds.
Thus, the size of the biggest $d$.block becomes $k$ or $k/2$ in $O(kn)$ rounds.
Since the configuration that has a single $d$.block with size $k$ or two $d$.block with size $k/2$ is the one, the lemma holds.
\end{proof}

\begin{lemma}\label{minus}
Let $C$ be a configuration which is type {\em BlockDistance} where its inter-distance is $d$($d>1$).
From configuration $C$, the configuration becomes type {\em BlockDistance} where its inter-distance is $d-1$ in $O(kn)$ rounds.
\end{lemma}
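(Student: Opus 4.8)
The plan is to reduce the statement to Lemma~\ref{inter-distance}: I would show that from the {\em BlockDistance} configuration $C$ of inter-distance $d>1$ the system reaches, in $O(1)$ rounds, a configuration $C'$ whose inter-distance is exactly $d-1$, and then invoke Lemma~\ref{inter-distance} on $C'$ to obtain a {\em BlockDistance} configuration of inter-distance $d-1$ in $O(kn)$ further rounds, for a total of $O(1)+O(kn)=O(kn)$.

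For the first step, recall that a {\em BlockDistance} configuration is symmetric and consists of either a single $d$.block of size $k$ or two $d$.blocks of size $k/2$, and that (as already used in the proof of Lemma~\ref{periodic-proof}) the size of the Leader hole $H$ exceeds $d$; hence the only robots the algorithm permits to move in $C$ are the one or two border robots adjacent to $H$, each heading to its empty neighbour on the side opposite to $H$, that is, toward the next robot of its own $d$.block. By fairness and the definition of an asynchronous round, within $O(1)$ rounds one such robot $B$ completes its move. The geometric key is that this move decreases the distance between $B$ and its block-neighbour from $d$ to $d-1$ while only enlarging the gap on the $H$-side; since $d$ was the minimum pairwise distance in $C$ and exactly one distance decreased, by a single unit, the inter-distance of the resulting configuration $C'$ is precisely $d-1$, and $B$ together with its block-neighbour forms a $(d-1)$.block of size $2$ (if the scheduler activates both $H$-neighbours simultaneously, two symmetric $(d-1)$.blocks of size $2$ appear). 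Since $k>8$, such a $(d-1)$.block of size $2$ is strictly smaller than $k/2$, so $C'$ is not of type {\em BlockDistance}; by Lemma~\ref{blockmirror} it is not {\em BlockMirror}; by Lemma~\ref{tower-proof} it contains no tower and by Lemma~\ref{periodic-proof} it is not periodic. Hence $C'$ is of type {\em BigBlock} and has inter-distance $d-1$, so Lemma~\ref{inter-distance} applies and produces a {\em BlockDistance} configuration of inter-distance $d-1$ in $O(kn)$ further rounds.

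The main obstacle is the asynchrony of this first step. When $B$'s symmetric partner $A$ has already taken its snapshot of $C$ but moves only afterwards, $A$ becomes an outdated robot, and one has to check that its eventual move --- even with an incorrect target --- neither creates a tower, nor makes the configuration periodic, nor prevents the biggest $(d-1)$.block from growing. This is exactly the {\em BlockDistance} case analysed in the proof of Lemma~\ref{outdated} (there is at most one outdated robot with an incorrect target, and once $B$ has moved the unique $(d-1)$.block is the common target, so $A$'s move only adds it to that block), combined with Lemmas~\ref{tower-proof} and~\ref{periodic-proof}; invoking these, the passage from $C$ to a {\em BigBlock} configuration of inter-distance $d-1$ is safe, and the residual work is absorbed into the $O(kn)$ rounds of Lemma~\ref{inter-distance}. (When $d=2$, ``inter-distance $d-1$'' is $1$; the same reasoning shows the biggest $1$.block then grows to size $k$ or $k/2$, i.e.\ to a configuration handled by Phase~2 or Phase~3, still within $O(kn)$ rounds.) Chaining the $O(1)$ rounds of the first step with the $O(kn)$ rounds of Lemma~\ref{inter-distance} gives the claimed bound.
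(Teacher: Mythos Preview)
Your proposal is correct and follows essentially the same approach as the paper: first argue that one move from a {\em BlockDistance} configuration creates a $(d-1)$.block in $O(1)$ rounds, then apply Lemma~\ref{inter-distance} to reach {\em BlockDistance} at inter-distance $d-1$ in $O(kn)$ rounds. The paper's own proof is extremely terse (three sentences), whereas you supply the geometric justification for why the inter-distance drops to exactly $d-1$, why the resulting configuration is of type {\em BigBlock}, and how the possible outdated robot is handled via Lemmas~\ref{outdated}, \ref{tower-proof}, and \ref{periodic-proof}; this extra care is appropriate and does not deviate from the paper's line of argument.
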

\begin{proof}
From the configuration of {\em BlockDistance}, the configuration becomes one such that there is $(d-1)$.block in $O(1)$ rounds.
After that, the configuration becomes one such that there is only single $(d-1)$.block or two $(d-1)$.blocks with same size in $O(kn)$ rounds by lemma~\ref{inter-distance}.
Therefore, the lemma holds.
\end{proof}

\begin{lemma}\label{lem-time}
From any non-periodic initial configuration $C$ without any tower, a Terminal configuration or a configuration $C^{*} \in C_{sp}$ is reached  in $O(n^{2})$ rounds.
\end{lemma}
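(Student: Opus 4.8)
The plan is to combine the convergence lemmas already established for Phase~$1$ into a single running-time bound, while arguing that none of the ``bad'' configurations (periodic, tower, or a static deadlock) can occur along the way. Concretely, Lemma~\ref{inter-distance} shows that from any non-periodic tower-free configuration with inter-distance $d$, a \emph{BlockDistance} configuration with inter-distance $d$ is reached in $O(kn)$ rounds; Lemma~\ref{minus} shows that once in a \emph{BlockDistance} configuration of inter-distance $d>1$, the configuration reaches a \emph{BlockDistance} configuration of inter-distance $d-1$ in $O(kn)$ rounds. Since the inter-distance starts at some value at most $n$ and strictly decreases, after at most $n$ applications of Lemma~\ref{minus} we reach a \emph{BlockDistance} configuration with inter-distance $1$. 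Each application costs $O(kn)$ rounds, so the total is $O(kn^{2})$ rounds; using the standing hypothesis $n>k+3$ (hence $k=O(n)$) this is $O(n^{2})$ rounds. The initial ``warm-up'' cost of Lemma~\ref{inter-distance} is also $O(kn)=O(n^{2})$, which is absorbed.

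The second ingredient is to identify what a \emph{BlockDistance} configuration with inter-distance $1$ actually is, and to show it is one of the target configurations (or leads to one in $O(n)$ more rounds). By the definition of \emph{BlockDistance}, such a configuration is either a single $1$.block of size $k$ or two $1$.blocks of size $k/2$; moreover the Leader hole $H$ has size at least $2$. If it is a single $1$.block of size $k$ this is the \emph{Block} configuration in $C_{sp}$, and if it is two $1$.blocks of size $k/2$ at distance exactly $2$ this is the \emph{Terminal} configuration; otherwise it is the \emph{Start} configuration in $C_{sp}$, or more precisely a symmetric two-$1$.block configuration, which again lies in the set $C_{sp}$ (Phase~$2$'s input). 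In every case the target of Lemma~\ref{lem-time} has been met within the stated bound. I would also note that Lemma~\ref{blockmirror} tells us a \emph{BlockMirror} configuration can only arise as the initial configuration, never dynamically, so the only entry point into the Phase~$1$ machinery that is not already \emph{BlockDistance} or \emph{BigBlock} is handled by one application of the \emph{BlockMirror} rule followed by Lemma~\ref{inter-distance}.

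For soundness I would cite Lemma~\ref{tower-proof} (no tower is ever created in Phase~$1$) and Lemma~\ref{periodic-proof} (no periodic configuration is ever created in Phase~$1$), so that all the intermediate configurations to which Lemmas~\ref{toblockdistance}--\ref{minus} are applied are legitimately non-periodic and tower-free, i.e.\ those lemmas' hypotheses are genuinely satisfied at each step. Combined with Lemma~\ref{outdated}--\ref{outdated-special} (at most one outdated robot with an incorrect target), this guarantees that the progress arguments of the convergence lemmas are not derailed by asynchrony. Stitching these together: non-periodic tower-free start $\to$ (one \emph{BlockMirror} move if needed) $\to$ \emph{BlockDistance} with some $d$ in $O(kn)$ rounds $\to$ decrease $d$ to $1$ in $O(n)\cdot O(kn)=O(kn^{2})$ rounds $\to$ a \emph{Terminal} configuration or a configuration in $C_{sp}$, for a total of $O(n^{2})$ asynchronous rounds.

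The main obstacle I anticipate is the bookkeeping of which ``BlockDistance with $d=1$'' sub-case lands directly in \emph{Terminal} versus which lands in \emph{Start}$\subseteq C_{sp}$, and making sure no other Phase~$1$ configuration type (e.g.\ a transient \emph{BigBlock1-1} produced by an outdated robot) can sneak in between the last inter-distance decrease and the arrival at a $C_{sp}$ configuration; this requires a careful but routine case check against the definitions in Section~\ref{sec:Algo}, appealing each time to Lemmas~\ref{tower-proof}--\ref{periodic-proof} to exclude the pathological outcomes. A secondary subtlety is verifying that the $O(n)$ bound on the number of inter-distance decrements is correct even when the Leader hole is large or sits inside a $d$.block, but the monotone strict decrease of $d$ makes this immediate regardless of hole placement.
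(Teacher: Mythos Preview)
Your overall strategy---reach a \emph{BlockDistance} configuration via Lemma~\ref{inter-distance}, then repeatedly apply Lemma~\ref{minus} to drive the inter-distance down to $1$---is exactly the paper's approach, and your additional remarks about Lemmas~\ref{tower-proof} and~\ref{periodic-proof} guaranteeing the hypotheses along the way are a welcome elaboration. The identification of the $d=1$ \emph{BlockDistance} sub-cases with \emph{Block}, \emph{Start}, or \emph{Terminal} is also correct.

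However, there is a genuine arithmetic gap in your running-time accounting. You bound the initial inter-distance by $n$ and then claim a total of $O(kn^{2})$ rounds; you then write ``using $n>k+3$ (hence $k=O(n)$) this is $O(n^{2})$.'' But $k=O(n)$ turns $O(kn^{2})$ into $O(n^{3})$, not $O(n^{2})$: you have the inequality the wrong way around. What is actually needed---and what the paper uses---is a \emph{sharper} bound on the initial inter-distance: with $k$ robots on a ring of $n$ nodes and no tower, the minimum inter-robot distance satisfies $d<n/k+1$ by pigeonhole. Hence the number of applications of Lemma~\ref{minus} is $O(n/k)$, not $O(n)$, and the total cost is $O((n/k)\cdot kn)=O(n^{2})$. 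Without this observation your argument only yields $O(n^{3})$ rounds and does not establish the stated bound.
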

\begin{proof}
Let $d$ be the inter-distance of the initial configuration.
From the initial configuration, the configuration becomes type {\em BlockDistance} where its inter-distance is $d-1$ in $O(kn)$ rounds by lemma~\ref{inter-distance}.
Since the inter-distance becomes smaller in $O(kn)$ rounds by lemma~\ref{minus}, the configuration becomes one such that there is only single $1$.block or two $1$.blocks of the same size in $O(dkn)$ rounds.
Since $d<n/k+1$ holds, the lemma holds.
\end{proof}
\subsection{Phase $2$}

To prove the correctness of our Phase $2$ algorithm, we first prove that when a configuration $C \in C_{sp}$ is reached, $C$ does not contain any outdated robots with incorrect targets.

\begin{lemma}\label{outdated-phase2}
By the behavior of Phase $1$, the configuration cannot reach a configuration in $C_{sp}$ with outdated robots with incorrect targets.
\end{lemma}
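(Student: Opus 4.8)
The plan is to argue by contradiction: suppose Phase~1 drives the system into some configuration $C^{*} \in C_{sp}$ in which there is an outdated robot whose target, recomputed in $C^{*}$, differs from the destination it committed to earlier. First I would record what the earlier lemmas already give us for free. By Lemmas~\ref{outdated}--\ref{outdated-special}, along the whole run of Phase~1 there is \emph{at most one} outdated robot with an incorrect target, and by the analysis inside those proofs (and Lemma~\ref{tower-proof}, Lemma~\ref{periodic-proof}) such a robot can only arise when the configuration just before the problematic move was symmetric of type \emph{BlockDistance}, \emph{BigBlock}, or \emph{BlockMirror2} -- never from an initial configuration, and never from \emph{BlockMirror} dynamically (Lemma~\ref{blockmirror}). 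So I would fix the symmetric configuration $C$ at which the outdated robot $A$ was created together with its mirror image $B$, with only $B$ actually moving, and track the execution from $C$ to the claimed $C^{*}$.

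The core of the argument is then a case analysis on the type of $C$, showing that in each case, by the time any configuration of $C_{sp}$ is first reached, either $A$ has already moved (so there is no outdated robot left with a stale target) or $A$'s recomputed target still coincides with its committed destination. The key structural fact I would lean on, extracted from the proofs of Lemmas~\ref{outdated} and~\ref{outdated-special}, is that immediately after $B$ moves there is \emph{exactly one} biggest $d$.block $D_1$, and that $D_1$ stays the unique biggest block: every subsequently activated non-outdated robot heads toward $D_1$, so $D_1$ only grows and never loses its uniqueness. Consequently, while $A$ remains outdated the configuration is a \emph{BigBlock}-style configuration (one strictly dominant block plus isolated robots or smaller blocks), which is not any of the nine members of $C_{sp}$: every configuration in $C_{sp}$ consists of $1$.blocks whose sizes are $k/2$, $(k/2)-1$, $k-1$, $k$, or $1$ in the specific tuples listed in the definitions, none of which is produced while $d>1$ and a unique growing biggest block persists. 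For the sub-case $d=1$ in a former \emph{BlockDistance} or \emph{BlockMirror2} configuration I would note that reaching a $1$.block configuration of the $C_{sp}$ shape requires the dominant block to have absorbed essentially all other robots, which cannot happen before $A$ itself has moved (the only robot that can still be "missing" is $A$, and once $A$ moves it joins $D_1$ too, exactly as in the proofs of Lemmas~\ref{outdated}--\ref{outdated-special}), so when $C_{sp}$ is entered $A$ is no longer outdated. Hence in every case the hypothesized $C^{*}$ cannot contain an outdated robot with an incorrect target, a contradiction.

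I would organise the write-up as: (i) invoke Lemmas~\ref{outdated}--\ref{outdated-special} to reduce to a single candidate outdated robot $A$ born in a symmetric $C$ of type \emph{BlockDistance}, \emph{BigBlock}, or \emph{BlockMirror2}; (ii) for $C$ of type \emph{BigBlock}, use Lemma~\ref{NoSymBigBlock} and the "unique growing biggest block" invariant to show the configuration stays outside $C_{sp}$ until $A$ moves, and that once $A$ moves its target ($D_1$) is still correct; (iii) for $C$ of type \emph{BlockDistance}, follow the interdistance-decrease argument (as in the proof of Lemma~\ref{outdated}) to get a unique $(d-1)$.block and repeat the invariant argument; (iv) for $C$ of type \emph{BlockMirror2}, split on $|s|=2$ versus $|s|>2$ exactly as in Lemma~\ref{outdated-special}, again obtaining a unique biggest block whose identity is stable. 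The main obstacle I anticipate is book-keeping rather than any new idea: I must be careful that "entering $C_{sp}$" is checked against \emph{all nine} subset definitions (e.g. \emph{Split-A}, \emph{TriBlock-A}), since an adversarial interleaving of the outdated robot $A$'s move with other robots' moves could, a priori, momentarily produce an asymmetric configuration with several $1$.blocks; the fix is to observe that any such configuration still has a strictly dominant block while $A$ is outstanding, which disqualifies it from every $C_{sp}$ pattern, and that the only way to destroy that dominance is $A$'s own move, after which nothing is outdated with a wrong target. Once that is made precise the lemma follows.
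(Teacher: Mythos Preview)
Your overall scaffolding matches the paper's: reduce to a single outdated robot $A$ born in a symmetric $C$ of type \emph{BlockDistance}, \emph{BigBlock}, or \emph{BlockMirror2}, and then track the execution until $A$ moves. The gap is in the step where you claim that the invariant ``there is a unique biggest $d$.block $D_1$ while $A$ is outstanding'' already rules out every member of $C_{sp}$. It does not. Several $C_{sp}$ configurations have a strictly dominant $1$.block by definition: \emph{Biblock} (one block of size $k-1$ and one isolated robot), \emph{Even-T} and \emph{Odd-T} (blocks of sizes $k/2$, $k/2-1$, $1$), and generically \emph{Split-A} and \emph{TriBlock-A}. So the sentence ``any such configuration still has a strictly dominant block while $A$ is outstanding, which disqualifies it from every $C_{sp}$ pattern'' is false as stated, and the contradiction does not close.

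Concretely, take $C$ of type \emph{BlockDistance} with $d=2$. After only $B$ moves, a unique biggest $1$.block $D$ is born and absorbs its neighbours one by one while $A$ sits still; once every robot except $A$ has joined $D$, you have one block of size $k-1$ and the isolated $A$. Your invariant is satisfied, yet this is exactly the shape of \emph{Biblock}. What actually excludes \emph{Biblock} here is a distance argument the paper makes explicitly: $A$ was adjacent to the Leader hole $H$ in $C$ and the growing block $D$ lies on the far side of $H$, so the distance from $A$ to $D$ is at least $|H|+1>2$, whereas \emph{Biblock} requires that distance to be exactly $2$. Analogous bespoke arguments (about which side $B$ joined, about the specific block sizes $|S1|,|S2|,|L1|,|L2|$, about whether $C$ would then already have been in $C_{sp}$, etc.) are needed to exclude \emph{Split-A}, \emph{TriBlock-A}, \emph{Even-T}, and \emph{Odd-T} in the \emph{BigBlock} and \emph{BlockMirror2} cases. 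The paper's proof is, unavoidably, a nine-way check of the $C_{sp}$ patterns inside each case for $C$; your single invariant is the right starting point for those checks but cannot replace them.
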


\begin{proof}
To construct a configuration $C^*\in C_{sp}$ with outdated robots with incorrect target during Phase $1$, 
two robots $A$ and $B$ were allowed to move in a symmetric configuration $C$, but the scheduler activated only $B$, and other robots take snapshots after the movement of $B$ before $A$ moves.

\begin{itemize}
\item $C$ is a configuration of type {\em BlockDistance}, where the value of $d$ is $d'>1$.
After $B$ moves, the value of $d$ decreases, there exists exactly one biggest $(d'-1)$.block $D$ and all other robots become isolated robots.
Then, although each robot moves to the biggest $(d'-1)$.block, only the isolated robot which is the closest to the biggest $(d'-1)$.block can move and others cannot move by the configuration of type {\em BigBlock1}.
Because $A$ and all other robots than the biggest $(d'-1)$.block are isolated robots, it cannot create other $(d'-1)$.block before $A$ moves.
Therefore, before $A$ moves, the configuration cannot become a configuration of type {\em BlockDistance} where the value of $d$ is $d'-1$.
So, if the value of $d'$ is more than 2, it is clear that $C^*$ cannot become a configuration in $C_{sp}$ before $A$ moves.
Consider the case that the value of $d'$ is equal to 2.
Because other robot than the biggest 1.block cannot create other 1.blocks which sizes are more than 1, 
there exist isolated robots and only one 1.block.
Therefore, it is clear that the configuration cannot become a configuration in $C_{sp}$ other than {\em Biblock}.
In {\em Biblock}, $A$ seems like an isolated robot.
However, the distance between the biggest 1.block and $A$ is 2.
In $C$, the destination of $A$ is the guide block $D1$ and the $D1$ is neighboring to $H$.
Because other robots than $A$ creates the biggest 1.block by moving towards other guide block $D2$,
the distance between the biggest 1.block and $A$ should be more than $|H|+1$ before $A$ moves.
Therefore, before $A$ move, the configuration cannot become in $C_{sp}$.

\item $C$ is a configuration of type {\em BigBlock1}.
Then, $A$ and $B$ are isolated robots in $C$.
After $B$ moves before $B$ joins the biggest $d$.block, $B$ becomes the nearest isolated robot to the biggest $d$.block.
Therefore, other robots cannot move by the new observation.
\begin{itemize}
\item If $C^*$ becomes one of {\em Start}, {\em Even-T}, {\em Split-A}, {\em Odd-T}, {\em Block}, {\em Biblock}, and {\em TriBlock-A}, 
because there exist two isolated robots, this is a contradiction.
\item If $C^*$ becomes {\em Split-S} or {\em TriBlock-S}, because $A$ and $B$ are isolated robots and $C$ is symmetric before only $B$ moves,
this is a contradiction.
\end{itemize}
After $B$ joins the biggest $d$.block, there exists exactly one biggest $d$.block.
By the configuration of type {\em BigBlock}, the nearest isolated robots or the borders of other $d$.blocks to the biggest $d$.block are allowed to move.
\begin{itemize}
\item If $C^*$ becomes either {\em Start} or {\em Block}, because $A$ is isolated robot, this is a contradiction.
\item If $C^*$ becomes $Event-T$ (resp. {\em Odd-T}), then $B$ belonged to the biggest $d$.block and $A$ is isolated robot.
However, before $B$ belongs to the biggest $d$.block, then there exist two biggest $d$.block and two isolated robots by the definition of {\em Even-T} (resp. {\em Odd-T}).
Then, because $C$ is symmetric and not in $C_{sp}$, this is a contradiction.
\item If $C^*$ becomes {\em Split-S}, because there exist two biggest $d$.block, this is a contradiction.
\item If $C^*$ becomes {\em Split-A}, then either $L1$ or $S2$ is $A$ and either $S1$ or $L2$ is the biggest $d$.block.
If $L1$ is $A$, then the size of $S1$ is $k/2-1$, of $L2$ is 2 and of $S2$ is $k/2-2$ (See the definition of {\em Split-A}.)
Then, the distance between $L1$ and $S1$ is 2 and the distance cannot change before $A$ moves.
Additionally, because the size of $S1$ before $B$ joined is $k/2-2$, $S2$ is also biggest $d$.block in $C$.
However, because only the nearest isolated robot to the biggest $d$.block are allowed to move and other robot cannot move,
the member of $L2$ cannot move.
Because $C$ should be symmetric, this is a contradiction.
\item If $C^*$ becomes either {\em Biblock} or {\em TriBlock-S}, then because $C$ is symmetric and not in $C_{sp}$, this is a contradiction. 
\item If $C^*$ becomes either {\em TriBlock-A}, then $A$ can be the smaller side $d$.block or $B1$ (See the definition of {\em TriBlock-A}.)
If $A$ is $B1$, then $B$ is in the bigger side $d$.block.
However, before $B$ moves, the size of the bigger side and smaller side $d$.block is same and the distance between $A$ and both side $d$.blocks is 2 respectively.
Because $C$ is symmetric, this is a contradiction.
If $A$ is smaller side $d$.block, then $B$ is in $B1$ because $k>8$ and the size of other side $d$.block $D1$ is 2.
However, by the configuration of type {\em BigBlock}, the only neighboring robots to the biggest $d$.block are allowed to move.
Thus, the member robot of $D1$ which is not neighboring to $B1$ cannot move from $C$.
This is a contradiction.
\end{itemize}
 
\item $C$ is a configuration of type {\em BigBlock2}.
Then, $A$ and $B$ are members of $d$.blocks in $C$.
In this case, $B$ becomes isolated robot or joins another $d$.block.
Consider the case that $B$ becomes an isolated robot, then the $d$.block $D1$ to which $B$ belonged becomes an isolated robot or remains in a $d$.block by the movement of $B$.
If $D1$ remains in a $d$.block, then $B$ is exactly one isolated robot, and other robot cannot move by phase 1.
Then, because $C$ is not in $C_{sp}$, $C^*$ cannot be in $C_{sp}$.
If $D1$ becomes isolated, then $A$ is in a $d$.block with size 2 in $C^*$.
Then, by the definition of {\em BigBlock2}, $B$ is the nearest isolated robot to the biggest $d$.block and $D1$ cannot move.
Before $B$ joins another $d$.block, there exist two isolated robots.
Therefore, because $k>8$, the configuration cannot become in $C_{sp}$ before $A$ moves or $B$ joins another $d$.block.

After $B$ joins another $d$.block, then there exists exactly one biggest $d$.block and only the neighbors of the biggest $d$.block can move.
\begin{itemize}
\item If $C^*$ becomes {\em Start} or {\em Split-S}, because there exist more than one biggest $d$.blocks, this is a contradiction.
\item If $C^*$ becomes {\em Even-T} or {\em Odd-T}, there exists an isolated robot.
However, then the isolated robot is a member of non-biggest $d$.block in $C$.
Therefore, because $C$ is not in $C_{sp}$, this is a contradiction.
\item If $C^*$ becomes {\em Split-A}, then $B$ is a border of $S1$ (resp. $L2$).
If $B$ is neighboring to $L1$ (resp. $S2$), then $C$ is {\em Split-S} and this is a contradiction.
If $B$ is neighboring to $S2$ (resp. $L1$) , then $B$ is a member of $S2$ in $C$.
Because $|S1|=|S2|+1$ in $C^*$, the size of $S2$ (resp. $L1$) is bigger than of $S1$ (resp. $L2$) in $C$, this is a contradiction..
\item If $C^*$ becomes {\em Block}, because $A$ does not belong to the biggest $d$.block in $C^*$, this is a contradiction.
\item If $C^*$ becomes {\em Biblock}, then $A$ becomes the isolated robot because $A$ does not belong to the biggest $d$.block.
However, the distance between the biggest $d$.block and $A$ is 2.
This means that, in $C$, the distance between the border and $A$ is 2 and the position of the border does not change.
However, by the behavior of Phase $1$, only the closest robot to the $d$.block are allowed to move.
Therefore, it is a contradiction.
\item If $C^*$ becomes {\em TriBlock-S}, because the distance between the biggest $d$.block and other $d$.blocks is 2 respectively and the size of other $d$.blocks is same in $C^*$, it is a contradiction.
\item Consider the case that $C^*$ becomes {\em TriBlock-A}.
If the size of $B1$ is the biggest, then $B$ is in the smallest $d$.block in $C$.
However, it is {\em TriBlock-S}, this is a contradiction.
If the size of the larger $d$.block other than $B1$ is the biggest and $B$ is in $B1$ in $C$, then it is {\em TriBlock-S}, this is a contradiction.
If the size of the larger $d$.block other than $B1$ is the biggest and $B$ is in the smaller $d$.block other than $B1$ in $C$, then the size of the $d$.block to which $B$ belonged is the biggest, this is a contradiction.
\end{itemize}

\item $C$ is a configuration of type {\em BlockMirror2}.
In $C$, all the robots belong to $d$.blocks, all the $d$.blocks have the same and the number of $d$.blocks is more than 2. If $B$ becomes exactly one isolated robot, then other robot cannot move.
Therefore, $C^*$ cannot become in $C_{sp}$ while $B$ is exactly one isolated robot. If $B$ and the member of $d$.block to which $B$ belonged become isolated robots, then only both of them can move and
the size of $d$.block in $C$ is 2. Therefore, $C^*$ cannot become in $C_{sp}$ while there exist isolated robots because $k>8$ and the other robots cannot move. Consider the case after $B$ joins
another $d$.block $D1$, then $D1$ becomes exactly one biggest $d$.block and there are no isolated robot.
After that, by the configuration of type {\em BigBlock}, only the neighbors of the biggest $d$.block are allowed to move.
\begin{itemize}
\item If $C^*$ becomes either {\em Start} or {\em Split-S}, there exist two biggest $d$.block before $A$ moves, this is a contradiction.
\item If $C^*$ becomes either {\em Even-T} or {\em Odd-T}, $B$ belongs to the biggest block and the isolated robot belonged to the smaller block.
However, then there exist two biggest $d$.block before $A$ moves, this is a contradiction.
\item If $C^*$ becomes {\em Split-A}, $S1$ or $L2$ is the biggest $d$.block in $C^*$. Let $S1$ be the biggest. Because the size of other block than the biggest one and neighbor one to the biggest
block does not change before $A$ moves, the size of $L2$ is the original size. Let $|S1|$, $|S2|$, $|L1|$, and $|L2|$ be the size of each block. Because the size of $S2$ does not increase, we have
$|S2|\le|L2|$. From $|S2|=|S1|-1$ and $|S1|>|L2|$, we have $|S2|=|L2|$. Since $|L1|=|L2|-1$ holds, $B$ is in $L1$ in $C$ and moves to $S1$ in $C^*$. Then $C$ is {\em Split-S}, and this is a contradiction.
We can similarly prove the case that $L2$ is the biggest in $C^*$.
\item If $C^*$ becomes {\em Block}, it is clear that this is a contradiction before $A$ moves.
\item If $C^*$ becomes {\em Biblock}, then $A$ is the isolated robot and other robots move toward the biggest $d$.block.
However, because the distance between $A$ and the biggest $d$.block is 2 and the nearest robots to the biggest $d$.block are allowed to move by {\em BigBlock}, it is a contradiction.
\item Consider the case that $C^*$ becomes {\em TriBlock-S}.
If the $d$.block on the axis of symmetry is the biggest one, then the size of $d$.block to which $A$ belong does not change.
Therefore, it is a contradiction.
Otherwise, there exist two biggest $d$.block before $A$ moves, this is a contradiction.
\item Consider the case that $C^*$ becomes {\em TriBlock-A}. If $A$ and $B$ belong to the same block in $C$, since the distance between the middle block and the side block is two in $C^*$, $C$ is {\em
TriBlock-S}. Thus, $A$ and $B$ belong to different blocks in $C$, and consequently the size of the block $A$ belongs to is not changed in $C^*$. Let $D1$ be the middle block in $C^*$ and $D2$ and $D3$
be the side blocks. In addition, we assume $|D2|=|D3|+1$. Then, $D1$ or $D2$ is the biggest in $C^*$.

Consider the case that $D1$ is the biggest in $C^*$. Note that the size of $D2$ is not equal to that of $D1$ because otherwise $C$ is not {\em BlockMirror2}. Then, $B$ joins $D1$ and $A$ belongs to
$D2$. Since the distance between $D1$ and $D2$ is two, at most one block joins $D1$ between $C$ and $C^*$ (otherwise $A$ moves earlier than the second block joins $D1$). From $|D2|=|D3|+1$, only $B$
moves between $C$ and $C^*$. This implies $C$ is {\em TriBlock-S}, and this is a contradiction.

Consider the case that $D2$ is the biggest in $C^*$. Note that the size of $D3$ is not equal to that of $D1$ because otherwise $C$ is not {\em BlockMirror2}. Then, $A$ belongs to $D3$ and $D1$ is the
smallest. Since $|D2|=|D3|+1$ and the size of $|D3|$ is not changed from $C$, $|D1|=|D3|-1$ holds. This implies $B$ moves from $D1$ to $D2$, however then $C$ is {\em TriBlock-S}. This is a contradiction.
\end{itemize}
\end{itemize}

Therefore, lemma holds.

\end{proof}

\begin{figure}[t]
\begin{center}
\epsfig{figure=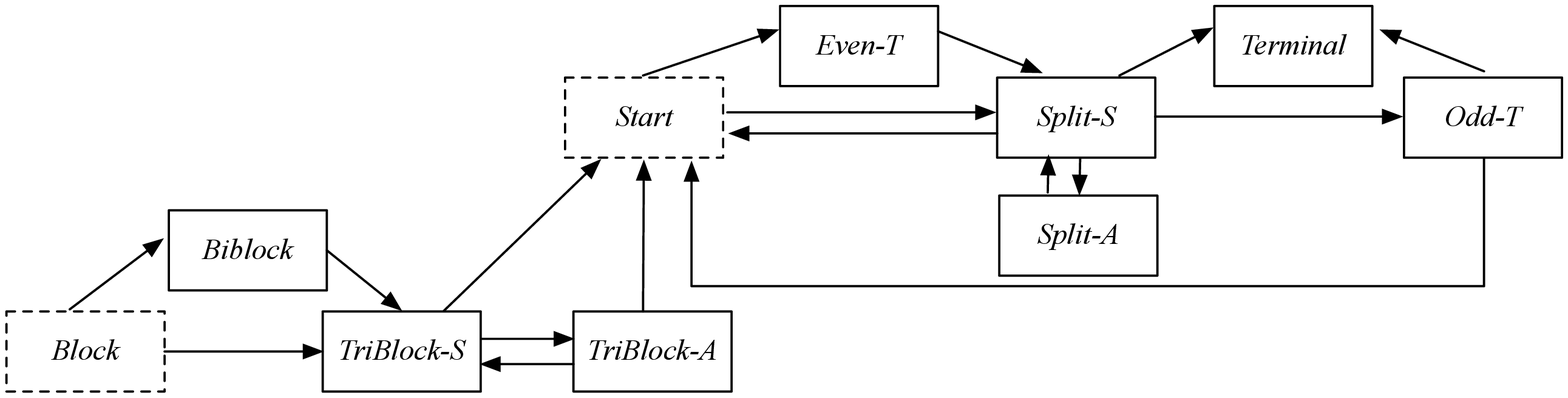,width=11.5cm}
\caption{Transition of the configuration by Phase $2$}
\label{TransPhase2}
\end{center}
\end{figure}


Figure~\ref{TransPhase2} shows the transitions between the $C_{sp}$ configurations when Phase $2$ Algorithm is executed. The symmetry of the configuration is maintained in the following matter: If the scheduler activates only one robot in the configuration then in the next step the robot that was supposed to move is the only one that can move. Note that this robot can be easily determined since we have an odd number of nodes  and an even number of robots in the ring. Thus the robots keep moving in the same direction, towards the {\em Guide Hole} whose size decreases at each time. Hence, {\em Terminal} configuration is reached in a finite time. The algorithm of Phase $3$ can then be executed.

\begin{lemma} \label{EVT}
Starting from a configuration of type {\em Even-T}, the configuration will be of type {\em Split-S} in $O(1)$ rounds.
\end{lemma}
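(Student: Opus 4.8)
The plan is to track the movement of the single robot that Phase~2 allows to move in an \emph{Even-T} configuration and verify that after one move (completed in $O(1)$ rounds by fairness) the resulting configuration matches the definition of \emph{Split-S}. Recall that in \emph{Even-T} there are three $1$.blocks of sizes $k/2$, $(k/2)-1$, and $1$, with the $1$.block of size $1$ at distance $2$ from the $1$.block of size $(k/2)-1$, and the two holes separating the size-$k/2$ block from the other two blocks both of even size. By the algorithm, the only robot allowed to move is the border robot $r$ of the size-$k/2$ block that shares a hole with the size-$1$ block, and its destination is its adjacent empty node in the opposite direction of its block, i.e. into that hole. First I would note that $r$ is in fact the unique robot with this description (the other border of the size-$k/2$ block borders the other even hole, not the size-$1$ block), so no outdated-robot issue arises here: the configuration is asymmetric, exactly one robot moves, and it moves exactly once.

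Next I would compute the configuration reached after $r$ moves. The size-$k/2$ block loses its border robot and becomes a $1$.block of size $(k/2)-1$. The robot $r$ itself, having moved one node into the even hole (size $\ge 2$), becomes: if the hole had size exactly $2$, it lands adjacent to the size-$1$ block at distance... — here I would carefully distinguish the two subcases according to the size of the even hole, but in all cases $r$ together with the former size-$1$ block and the size-$(k/2)-1$ block rearrange so that we obtain four $1$.blocks. Concretely, $r$ becomes an isolated robot (a $1$.block of size $1$), and we now have four $1$.blocks of sizes $(k/2)-1$, $1$, $(k/2)-1$, $1$. The key distances to check are: the former size-$1$ block is still at distance $2$ from the size-$(k/2)-1$ block it was near (that block is unchanged), and $r$ is now at distance $2$ from the shrunken former-$k/2$ block, since the hole between them shrank by exactly one from an even value, becoming odd $\ge 1$ — wait, I must recheck: $r$ moved toward the size-$1$ block, so the hole on the \emph{block side} of $r$ grew by one (from even to odd) and the hole on the size-$1$-block side shrank by one. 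I would use the parity bookkeeping (even holes become odd, the move creates a hole of size $1$ on one side) to establish that the two $1$.blocks on each side of the symmetry axis are at distance $2$, which is exactly the \emph{Split-S} pattern of Figure~\ref{h-block}.

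The main obstacle I anticipate is the symmetry claim: \emph{Split-S} is defined to be \emph{symmetric}, whereas \emph{Even-T} is asymmetric, so I must argue that the particular combination of block sizes and hole sizes forced by the \emph{Even-T} definition — two blocks of size $(k/2)-1$, two of size $1$, with the specified distances, and all the remaining holes having the right parities — admits (indeed forces) an axis of symmetry after the move. This is where the hypothesis that \emph{both} holes adjacent to the original size-$k/2$ block are even is essential: it guarantees that the hole created by $r$'s move and the hole on the symmetric side end up with matching sizes, and that the large hole (the one not adjacent to the size-$k/2$ block) has odd size and straddles a node, consistent with $n$ odd. I would finish by exhibiting the axis explicitly (through the midpoint of that large odd hole and the opposite edge) and checking that reflection maps the size-$(k/2)-1$ block to the other size-$(k/2)-1$ block and the size-$1$ block to $r$, matching distances pairwise; then conclude the configuration is \emph{Split-S} and that this took $O(1)$ rounds since a single robot made a single move.
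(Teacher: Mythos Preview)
Your approach matches the paper's: identify the single robot $r$ that moves (the border of the size-$k/2$ block on the side of the isolated robot), let it take one step, and observe that the result has two $1$.blocks of size $k/2-1$ and two isolated robots, each isolated robot at distance $2$ from one of the large blocks---hence \emph{Split-S}. The paper's own proof is in fact much terser than yours: it asserts in one sentence that the resulting configuration is symmetric with this block structure, with no explicit parity or distance bookkeeping at all.

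Your additional effort to verify the symmetry is the right instinct, and the plan you sketch works once the muddle you flag (``wait, I must recheck'') is cleaned up. The robot $r$ does not cause any existing hole to ``grow from even to odd''; rather, a \emph{new} hole of size $1$ is created between the shrunk block and $r$, while the even hole of size $e$ that lay between the old size-$k/2$ block and the isolated robot shrinks to odd size $e-1$. The resulting cyclic pattern is then: $(k/2{-}1)$-block, hole $1$, $r$, hole of odd size $e-1$, isolated robot, hole $1$, $(k/2{-}1)$-block, hole of even size (the other hole adjacent to the original size-$k/2$ block, which is unchanged). The axis through the middle node of the odd hole and the middle edge of the even hole exhibits the symmetry directly. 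Your half-started case distinction on whether the even hole has size exactly $2$ is unnecessary: in every case $r$ lands on an empty node with an empty neighbour on the block side, so it becomes an isolated robot and the four-block count goes through uniformly.
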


\begin{proof}
The robot allowed to move in this case is the one that is at the border of the 1.block of size $k/2$ sharing a hole with the isolated robot. 
Its destination is its adjacent empty node towards the isolated robot. 
Once this robot moves the configuration will be symmetric and will contain two 1.blocks of size $k/2-1$ and two isolated robots that are at distance $2$ from each 1.block. 
Recall that the isolated robots are considered to be 1.blocks of size $1$. 
Thus the configuration is of type {\em Split-S} and the lemma holds.
\end{proof}

\begin{lemma} \label{STRT}
Starting from a configuration of type {\em Start}, the configuration will be type {\em Split-S}, and the size of the Leader hole decreases by $2$ in $O(1)$ rounds. 
\end{lemma}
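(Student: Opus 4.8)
The plan is to mirror the proof of Lemma~\ref{EVT}: first identify the enabled robots and their targets in a \emph{Start} configuration, then follow the configuration through the at most two move phases that can occur, and finally check that the endpoint literally matches the definition of \emph{Split-S} and that the Leader hole has shrunk by exactly $2$. In a \emph{Start} configuration the enabled robots are the two border robots $A$ and $B$ of the two size-$k/2$ blocks that are adjacent to the Leader hole $H$, and each of them is to step one node into $H$ (its adjacent empty node, away from its own block). Because \emph{Start} is symmetric, $A$ and $B$ are mirror images, so I would case on the scheduler: it either activates both at once or exactly one of them.

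For the simultaneous case I would just compute the result: each of $A$, $B$ leaves its block (which shrinks to size $k/2-1$ and stays adjacent to the untouched Slave hole) and occupies the first empty node of $H$, so $H$ shrinks to $|H|-2$ and $A$, $B$ become isolated robots, each at distance $2$ from the size-$(k/2-1)$ block on its own side of the axis and adjacent to the shrunken Leader hole. Counting each isolated robot as a $1$.block of size $1$, this is a symmetric configuration with four $1$.blocks whose same-side pairs are at distance $2$, i.e. \emph{Split-S}, with Leader hole decreased by $2$. The only bookkeeping here is the small observation that $|H|$ is odd and at least $3$ (it cannot be $1$, since in \emph{Start} the two blocks are, by definition, not at distance $2$), so the new Leader hole still has odd size $\ge 1$ and the description stays consistent.

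For the case where the scheduler activates only one of them, say $A$, the intermediate configuration is asymmetric, and the crucial point is that the algorithm's unique enabled move there is the still-pending move of the mirror robot $B$, with the same target. I would justify this by the determinacy remark following Lemma~\ref{outdated-phase2}: since $n$ is odd and $k$ is even, the robot that ``was supposed to move'' is uniquely recoverable from the configuration, so $B$ eventually steps into $H$ as well, symmetry is restored, and we land in exactly the \emph{Split-S} configuration of the previous paragraph. I would also cite Lemma~\ref{outdated-phase2} to note that, when \emph{Start} is first reached, there is no outdated robot with an incorrect target, hence no tower and no interfering move, so the two sub-cases above are exhaustive. Since at most two move phases occur overall, the whole transition takes $O(1)$ rounds.

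The step I expect to be the main obstacle is the one-robot case: one must be certain that the ``half-executed'' asymmetric configuration is recognized by the algorithm and that its unique enabled move is precisely $B$'s pending one — i.e. that no robot gets reclassified into a different configuration family in the meantime. This leans on the global invariant that throughout Phases~2 and~3 only $C_{sp}$ configurations and their single-pending-move images arise, together with the odd-$n$/even-$k$ determinacy; I would make that dependence explicit rather than re-establish it inside this proof. Everything else is routine distance-and-parity checking against the \emph{Split-S} definition.
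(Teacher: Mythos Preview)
Your simultaneous-activation case is fine and matches the paper. The one-robot case is where you diverge: you appeal to the ``determinacy remark'' following Lemma~\ref{outdated-phase2} to argue that the unique enabled move in the half-executed configuration is $B$'s pending step. The paper instead \emph{names} that intermediate configuration: after only $A$ steps into $H$, one has three $1$.blocks of sizes $k/2$, $k/2-1$, and $1$, with the size-$1$ block at distance $2$ from the size-$(k/2-1)$ block and both remaining holes of even size (since $|H|$ and the Slave hole are odd, $|H|-1$ and the unchanged Slave hole are even). This is exactly the definition of \emph{Even-T}, and the \emph{Even-T} rule enables precisely the border robot of the size-$k/2$ block facing the isolated robot---i.e.\ $B$---with the same target. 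The paper then cites Lemma~\ref{EVT} to reach \emph{Split-S} in one round.

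Your route lands at the same endpoint, but the justification is shaky: the ``determinacy'' paragraph after Lemma~\ref{outdated-phase2} is not a proved lemma; it is an informal summary of what the individual Phase~2 lemmas (including this one) collectively establish. Citing it here is mildly circular, and ``the robot that was supposed to move is uniquely recoverable'' is not by itself a proof that the \emph{algorithm} enables exactly that robot. The fix is simply to do what the paper does: recognize the intermediate as \emph{Even-T} and invoke Lemma~\ref{EVT}. That makes the argument self-contained and removes the dependence you flagged in your last paragraph.
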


\begin{proof}
In a configuration of type {\em Start}, two robots are allowed to move (Recall that the configuration is symmetric). 
These two robots are neighbors to the Leader hole. 
If the scheduler activates both robots at the same time, once they move, the configuration remains symmetric and will contain four 1.blocks (The isolated robots are considered to be single 1.block of size $1$). 
This means that the configuration becomes {\em Split-S}.
Note that since both robots moved, the size of the Leader hole decreases by $2$ and the lemma holds. 
In the case the scheduler activates only one of the two robots allowed to move, the configuration becomes of type {\em Even-T}. 
From Lemma \ref{EVT}, a configuration of type {\em Split-S} is reached in 1 round. 
Then, since the robot allowed to move is at the border of the 1.block of size $k/2$ sharing a hole with the isolated robot. 
Once it moves, the size of the Leader hole decreases by $2$ and the lemma holds.
\end{proof}

\begin{lemma} \label{ODT}
Starting from a configuration of type {\em Odd-T}, the configuration will be either of type {\em Start} or of type Terminal in $O(1)$ rounds.
\end{lemma}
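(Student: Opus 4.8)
The plan is to exploit the fact that in an \emph{Odd-T} configuration only a single robot is ever enabled, so the move is completely deterministic; in particular, by Lemma~\ref{outdated-phase2} the configuration is reached without any outdated robot having an incorrect target, so the only robot that can act is the one prescribed by the algorithm. Denote the three $1$.blocks by $A$ (size $k/2$), $B$ (size $(k/2)-1$) and $C$ (the size-$1$ block, i.e.\ the isolated robot), and let their cyclic order along the ring be $A,B,C$; write $H_{AB},H_{BC},H_{CA}$ for the three holes between consecutive blocks. By definition of \emph{Odd-T} every hole has odd size and $|H_{BC}|=1$ (this is the distance-$2$ relation between $C$ and $B$). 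The only enabled robot is $C$, whose target is the empty node adjacent to $B$; since $|H_{BC}|=1$, after $C$ performs its single move it becomes adjacent to $B$ and merges with it, so in $O(1)$ rounds we reach a configuration with exactly two $1$.blocks: $A$ (size $k/2$) and $B' := B\cup C$ (size $(k/2)-1+1=k/2$).

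Next I would do the routine hole bookkeeping for the new configuration. The hole $H_{BC}$ disappears, $H_{AB}$ is untouched and hence still has odd size, while $H_{CA}$ — the hole on the far side of $C$ from $B$ — gains the single node vacated by $C$, so it now has size $|H_{CA}|+1$, which is even and at least $2$. Thus the two equal-size blocks $A$ and $B'$ are separated by one hole of odd size and one hole of even size.

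The key step is to deduce that this configuration is symmetric and to name its type. Because $n$ is odd, any axis of symmetry of the ring passes through exactly one node and one edge. For a configuration made of two blocks of the same size separated by holes of opposite parity, take the reflection whose axis passes through the central node of the odd hole and through the central edge of the even hole: it fixes those two holes setwise and, since the two blocks have equal size and are placed symmetrically around the pair of holes, it exchanges $A$ and $B'$. Hence the configuration is symmetric; it is also non-periodic, since a configuration consisting of two copies of a sub-sequence would force the two holes to be equal, contradicting their different parities. So the reached configuration is a symmetric, non-periodic configuration with exactly two $1$.blocks of size $k/2$. If $|H_{AB}|=1$ (which is permitted, as $1$ is odd), the two blocks are at distance $2$ and the configuration is \emph{Terminal}; otherwise $|H_{AB}|\ge 3$ and the two blocks are not at distance $2$ (the other, even, hole has size $\ge 2$ and so never yields distance $2$ on that side), so the configuration is of type \emph{Start}. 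In either case the transition takes $O(1)$ rounds, proving the lemma.

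The main obstacle is not any arithmetic but the justification that two equal-size blocks separated by holes of opposite parity automatically form a symmetric, non-periodic configuration; this rests on the odd-ring fact that each reflection axis hits one node and one edge, together with the observation that the axis-node lands in the odd hole and the axis-edge in the even one. A minor point to check carefully is that $C$ really merges with $B$ after one step — it does, because their distance is exactly $2$, i.e.\ there is a single empty node between them — and that, since no second robot is ever enabled in \emph{Odd-T}, the asynchronous scheduler can produce no intermediate configuration along the way.
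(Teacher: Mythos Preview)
Your proof is correct and follows essentially the same line as the paper's own argument: the single enabled robot (the size-$1$ block) merges with the size-$(k/2-1)$ block in one step, yielding two equal $1$.blocks of size $k/2$, and the resulting configuration is \emph{Terminal} if the odd hole has size~$1$ and \emph{Start} otherwise. The paper simply asserts the resulting configuration is symmetric, whereas you supply the explicit justification via the odd/even hole parity and the node--edge axis on an odd ring; that extra care is welcome but does not change the approach.
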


\begin{proof}
The robot allowed to move in the configuration of type {\em Odd-T} is the isolated robot which is at distance $2$ from the 1.block of size $k/2-1$. 
Its destination is its adjacent node towards the closest 1.block of size $k/2-1$. 
Once this robots moves, it joins the 1.block. 
Thus the configuration will be symmetric and contains only two 1.blocks. 
If the size of the Leader hole is equal to $1$ then the configuration reached is of type {\em Terminal} and the Lemma holds. 
If the size of the Leader hole is bigger than $1$ than the configuration reached is of type {\em Start} and the Lemma holds.  
\end{proof}

\begin{lemma} \label{SPLT}
Starting from a configuration of type {\em Split-A}, the configuration will be of type {\em Split-S} in $O(1)$ rounds.
\end{lemma}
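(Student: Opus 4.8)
The plan is to follow the unique legal move out of a \emph{Split-A} configuration and check that the configuration it produces is exactly a \emph{Split-S} configuration. By the definition of \emph{Split-A}, the only robot permitted to move is the robot $r$ at the border of $S1$ that shares a hole with $L1$, and its destination is the adjacent empty node toward $L1$. Since \emph{Split-A} is asymmetric, $r$ is the only movable robot (there is no companion robot to worry about, and any robot carrying an outdated target from the preceding step would, by the symmetry-maintenance argument used in Phase~2, have precisely this same target), so within $O(1)$ rounds $r$ performs exactly this move; this yields the time bound. It therefore suffices to identify the resulting configuration $C'$.

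First I would pin down $C'$. Because $S1$ and $L1$ are at distance $2$, there is exactly one empty node between the border of $S1$ and the border of $L1$; $r$ moves onto that node and thus becomes adjacent to $L1$, joining it. Hence $|L1|$ increases by one to $|L1|+1=|L2|$, and $|S1|$ decreases by one to $|S1|-1=|S2|$, using the size relations from the definition of \emph{Split-A} (note $|S1|\ge 2$ since $|S2|\ge 1$, so the shrunken block $S1'$ is still a $1$.block, using the convention that an isolated robot is a $1$.block of size $1$). The node vacated by $r$ now lies between the new blocks $S1'$ and $L1'$, so these two are again at distance $2$. The three holes untouched by the move — the even-size hole between $S1$ and $S2$, the odd-size hole between $L1$ and $L2$, and the size-$1$ hole between $L2$ and $S2$ — are unchanged. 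So $C'$ consists of four $1$.blocks in cyclic order $S1', L1', L2, S2$ of sizes $|S2|, |L2|, |L2|, |S2|$, separated in that cyclic order by holes of sizes $1$, odd, $1$, even.

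Next I would exhibit the axis of symmetry of $C'$. Since $n$ is odd, the candidate axis must pass through one node and one edge: take the line through the midpoint of the even-size hole between $S2$ and $S1'$ (an edge, as that hole has even size) and through the middle node of the odd-size hole between $L1'$ and $L2$. Reflection across this line exchanges $S1'$ with $S2$ (both of size $|S2|$), exchanges $L1'$ with $L2$ (both of size $|L2|$), fixes the two on-axis holes, and swaps the two size-$1$ holes; hence it is an automorphism of $C'$. Thus $C'$ is symmetric, has four $1$.blocks, and its two blocks on the same side of the axis ($S1'$ and $L1'$ on one side, $S2$ and $L2$ on the other) are at distance $2$ — which is exactly the definition of \emph{Split-S}.

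The only step that requires genuine care is the symmetry check: verifying that block sizes and hole sizes match up correctly across the claimed axis, and that the parities of the two on-axis holes (one even, one odd) are consistent with $n$ odd so that the axis legitimately passes through one edge and one node. Everything else is routine bookkeeping with the relations $|S1|=|S2|+1$, $|L2|=|L1|+1$, and $|S1|+|L1|=|S2|+|L2|=k/2$ supplied by the definition of \emph{Split-A}.
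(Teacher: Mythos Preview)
Your argument is correct and follows the same route as the paper's proof: track the unique robot $r$ at the border of $S1$ as it crosses the single empty node into $L1$, then observe that the resulting four-block configuration is symmetric of type \emph{Split-S}. The paper's version is terser---it simply asserts that after the move the configuration becomes symmetric with four $1$.blocks separated by size-$1$ holes on each side---whereas you explicitly verify the block-size equalities $|S1'|=|S2|$, $|L1'|=|L2|$ and locate the axis through the middle of the odd hole and the midpoint of the even hole; this extra bookkeeping is sound and makes the symmetry claim airtight.
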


\begin{proof}
Note that in the configuration of type {\em Split-A}, the robot allowed to move is at the border of $S1$, its destination is its adjacent empty node in the the opposite direction of the block its belongs to. 
Note that, once this robot moves, it joins the $L1$ and the configuration becomes symmetric. 
Note that the configuration reached contains four 1.blocks such as the size of the hole between the two 1.blocks that are at the same side of the axes of symmetry is equal to $1$. 
Thus the configuration reached is of type {\em Split-S} and the lemma holds.  
\end{proof}

\begin{lemma}\label{SPLTS}
Starting from a configuration of type {\em Split-S}, the Slave blocks will join the Leader blocks, and the configuration reached will be either Terminal or {\em Start} such as the size of the Slave hole increases by two in $O(k)$ rounds.
\end{lemma}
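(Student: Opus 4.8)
The plan is to follow the ``robot-transfer'' dynamics that the \emph{Split-S} rule induces, and to account for asynchrony through the \emph{Split-A} detour. Throughout I use the standard facts for symmetric non-periodic configurations with $n$ odd and $k$ even: the axis of symmetry passes through the single node $S$, which lies inside the odd-size Leader hole $H$, while the Slave hole straddles the axis edge and hence has even (in particular nonzero, so $\ge 2$) size.

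First I would fix notation for a \emph{Split-S} configuration: on each side of the axis there is a Leader block of size $a\ge 1$ (neighbor of $H$), then one empty node, then a Slave block of size $b\ge 1$ (neighbor of the Slave hole), with $a+b=k/2$. The \emph{Split-S} rule activates, on each side, the border robot of the Slave block that is adjacent to the length-$1$ hole, and makes it step toward the Leader block. I would check the local effect of one such move: the moved robot joins the Leader block (which becomes size $a+1$), the Slave block shrinks to size $b-1$, the length-$1$ hole reappears one node closer to the Slave hole, the Leader hole $H$ is untouched (the Leader block grows only away from $H$), and the Slave hole is untouched \emph{as long as} $b-1\ge 1$.

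Next I would iterate. In a synchronous schedule, one two-sided transfer sends a \emph{Split-S} with block sizes $(a,a,b,b)$ to a \emph{Split-S} with sizes $(a{+}1,a{+}1,b{-}1,b{-}1)$ whenever $b\ge 2$ (recall isolated robots count as $1$.blocks of size $1$). After $b-1$ transfers the configuration is \emph{Split-S} with sizes $(k/2{-}1,k/2{-}1,1,1)$: two isolated Slave robots, each at distance $2$ from its Leader block and adjacent to the Slave hole. One last transfer moves each isolated robot onto its Leader block; the vacated node then merges with the Slave hole, so the Slave hole grows by exactly $2$, and we are left with exactly two $1$.blocks of size $k/2$, with $H$ unchanged on one side and the enlarged even-size Slave hole on the other. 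If $|H|=1$ the two blocks are at minimum distance $2$ (distance $\ge 5$ on the Slave side), i.e.\ the configuration is \emph{Terminal}; otherwise it is symmetric and non-periodic (its two holes have different parity, so no nontrivial period divides the ring), with two size-$k/2$ blocks not at distance $2$, i.e.\ of type \emph{Start}.

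Finally I would handle asynchrony. If the scheduler activates only one of the two symmetric robots, the block sizes become $(a{+}1,a,b{-}1,b)$ and the only even-size hole is still the Slave hole, so the configuration is of type \emph{Split-A}; by Lemma~\ref{SPLT} it returns to a (smaller) \emph{Split-S} in $O(1)$ rounds, and one checks that the \emph{Split-A} move is precisely the transfer on the side that had not moved, so no progress is lost. A robot left outdated after a one-sided move still has a correct target (a step toward its own Leader block), so no tower arises and every schedule realizes the same sequence of two-sided transfers. Each transfer, detour included, costs $O(1)$ rounds and there are at most $b\le k/2$ of them, giving the $O(k)$ bound. I expect the main obstacle to be exactly this asynchronous bookkeeping: proving that the only asymmetric configurations reachable from \emph{Split-S} are the \emph{Split-A} configurations whose rule re-synchronizes the two sides, that outdated robots never acquire an incorrect target or create a tower, and that the Slave hole is invariant until precisely the final transfer, so that its net growth is exactly $2$ and not more.
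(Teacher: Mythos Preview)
Your approach mirrors the paper's: both reduce the Split-S dynamics to a sequence of Slave-to-Leader transfers, handle one-sided activations via the \emph{Split-A} detour and Lemma~\ref{SPLT}, and conclude that a \emph{Start} or \emph{Terminal} configuration is reached in $O(k)$ rounds. Your treatment is in fact more explicit than the paper's (tracking block sizes $(a,a,b,b)$, checking hole parities, and verifying that the Leader hole is untouched while the Slave hole is invariant until the last transfer).

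There is one genuine gap in your asynchrony analysis. On the \emph{final} transfer ($b=1$), if the scheduler activates only one side, the resulting configuration has only three $1$.blocks --- of sizes $k/2$, $k/2-1$, and $1$ --- so it cannot be \emph{Split-A}. Concretely, the isolated robot sits at distance $2$ from the size-$(k/2{-}1)$ block, and the vacated node merges with the even Slave hole to make it odd, so all three holes are odd: this is exactly an \emph{Odd-T} configuration, not \emph{Split-A}. Your claim that ``the only asymmetric configurations reachable from \emph{Split-S} are the \emph{Split-A} configurations'' is therefore false as stated. The fix is immediate: by Lemma~\ref{ODT}, the \emph{Odd-T} rule sends the remaining isolated robot into its Leader block, producing the same \emph{Start}/\emph{Terminal} configuration (and the same $+2$ to the Slave hole) as the synchronous step would. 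The paper's own proof glosses over this same boundary case.
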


\begin{proof}
The robots allowed to move in a configuration of type {\em Split-S} are the two robots that are in the border of the Salve block, sharing a hole with the Leader block. 
Their destination is their adjacent empty node towards the Leader block. 
Note that in the case the scheduler activates both robots at the same time, the two robots that have moved join the Leader block (since there was only one empty node between the two blocks). 
If the configuration remains of type {\em Split-S}, the system will have the same behavior. 
Note that the number of robots in the Slave block decreases at each time and since the robots join the Leader block at each time they move, the size of this block increases. 
Thus after at most $k/2$ rounds, all the robots will be part of the Leader blocks (Note that once the lasted robots from the Slave block move the size of the Slave hole decreases by $2$). 
Hence the configuration will contain two 1.blocks of the same size. 
If the size of the Leader hole is equal to $1$, then the configuration is of type {\em Terminal} and the lemma holds. 
In the case that the size of the Leader hole is bigger than $1$, then the configuration is of type {\em Start} and the lemma holds as well. 
In the case the scheduler activates only one of the two robots allowed to move, the configuration becomes asymmetric, either of type {\em Split-A}, however from lemma~\ref{SPLT}, a configuration of type {\em Split} is reached in 1 round (Note that the configuration reached is exactly the same as the one reached when both robots have moved). 
Thus at the end, we are sure that the same configurations that are reached when both robots have moved will be reached when the scheduler activates only one robot from the two robots allowed to move. 
Hence either a configuration of type {\em Terminal} is reached or a configuration of type {\em Start} such as the size of the Slave Hole increases by two is reached and the lemma holds.  
\end{proof}

\begin{lemma}\label{BIBLL}
Starting from a configuration of type {\em Biblock}, a configuration of type {\em TriBlock-S} is reached in $O(1)$ rounds.
\end{lemma}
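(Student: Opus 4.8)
The plan is to follow the unique legal move out of a {\em Biblock} configuration, check that the resulting configuration is exactly of type {\em TriBlock-S}, and then invoke fairness for the round bound.

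First I would fix notation from the definition of {\em Biblock}: let $B_1$ be the $1$.block of size $k-1$ and $B_2$ the isolated robot at distance $2$ from $B_1$, so that on the other side of $B_1$ lies a single hole of size $n-k-1\ge 3$ (using $n>k+3$). Of the two border robots of $B_1$, the one facing $B_2$ has an occupied node at distance $2$, so the robot the algorithm moves is the opposite border robot $r$, whose destination is the empty node one step into the big hole, away from $B_1$. I would then describe the configuration $C'$ reached after $r$ moves: $r$ becomes an isolated robot $B_3$; the node $r$ vacated becomes the unique empty node between $B_3$ and the remaining block $B_1'$ of size $k-2$; the empty node between $B_1$ and $B_2$ is untouched, so $B_1'$ and $B_2$ are still at distance $2$; and the remaining $n-k-2$ empty nodes form a single hole between $B_3$ and $B_2$. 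Thus $C'$ has exactly three $1$.blocks, of sizes $k-2$, $1$, $1$, appearing in cyclic order $B_1', B_3, B_2$, with the middle one $B_1'$ at distance $2$ from each of $B_3$ and $B_2$.

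Next I would verify that $C'$ is symmetric, which is the only step that requires any care. Since $k$ is even, $|B_1'|=k-2$ is even, so $B_1'$ has a central edge; since $n$ is odd and $k$ even, the large hole between $B_3$ and $B_2$ has odd size $n-k-2$ and hence a central node. The reflection through this node and this edge fixes $B_1'$ setwise, swaps $B_3$ with $B_2$ (both of size $1$), and swaps the two size-$1$ holes; hence $C'$ admits an axis of symmetry, and since $k>8$ gives $k-2\ne 1$ the configuration is not periodic. Therefore $C'$ is a {\em TriBlock-S} configuration.

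Finally, for the time bound: in a {\em Biblock} configuration $r$ is the only robot allowed to move, and by Lemma~\ref{outdated-phase2} such a configuration carries no outdated robot with an incorrect target that could interfere, so the only robot that can act moves exactly to the destination described above; by fairness $r$ is activated and performs this move within $O(1)$ rounds, after which {\em TriBlock-S} is reached. The main obstacle is thus the symmetry check for $C'$ — pinning down the axis through the central edge of the middle block $B_1'$ and the central node of the large hole via the parities of $n$ and $k$ — while everything else is bookkeeping on block and hole sizes.
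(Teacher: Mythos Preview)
Your proof is correct and follows essentially the same approach as the paper's own argument: identify the unique moving robot, describe the resulting three-block configuration, and conclude it is {\em TriBlock-S}. You supply more detail than the paper does---in particular the explicit parity argument locating the axis of symmetry through the central edge of $B_1'$ and the central node of the large hole, and the appeal to fairness and Lemma~\ref{outdated-phase2} for the round bound---but the underlying line of reasoning is the same.
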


\begin{proof}
In a configuration of type {\em Biblock}, the robot allowed to move is at the border of the 1.block of size $k-1$ not having one robot at distance $2$ as neighbor, once such a robot moves, the configuration becomes symmetric and will contain three 1.blocks.
Then one 1.block of size $k-2$ is at distance $2$ of two 1.blocks of size $1$. 
Thus the configuration reached is of type {\em TriBlock-S} and the lemma holds.
\end{proof}

\begin{lemma}\label{BBBL}
Starting from a configuration of type {\em Block}, a configuration of type {\em TriBlock-S} is reached in $O(1)$ rounds.
\end{lemma}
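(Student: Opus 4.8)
The plan is to argue by a direct case analysis on the scheduler, mirroring the proof of Lemma~\ref{STRT}. First I would record the geometry of a {\em Block} configuration: the $k$ robots sit on $k$ consecutive nodes, so the single hole has size $n-k$, which is odd because $k$ is even and $n$ is odd; hence the axis of symmetry passes through the node $S$ in the middle of that hole and through the edge between the two central robots, and the two robots allowed to move are precisely the two borders of the $1$.block, each moving one step into the hole (away from the block). Since $n>k+3$, after a single border robot moves it is genuinely isolated and lies at distance $2$ (one empty node) from the remaining block.

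The main split is on whether the scheduler lets one or both border robots move. If both move, in either order or simultaneously, the result is a $1$.block of size $k-2$ flanked by two isolated robots, each one empty node away from it and symmetric about $S$; since an isolated robot counts as a $1$.block of size $1$, this is exactly a {\em TriBlock-S} configuration, and it is reached within $O(1)$ rounds because in one asynchronous round each border robot (the only kind of robot allowed to move) performs a move phase. If instead only one border robot, say $A$, moves, the configuration has one $1$.block $B_1$ of size $k-1$ and the isolated robot $A$ at distance $2$ from $B_1$, i.e.\ a {\em Biblock} configuration; by Lemma~\ref{BIBLL} a {\em TriBlock-S} configuration is then reached in $O(1)$ further rounds, so the total is still $O(1)$.

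The step I expect to cost the most care is the asynchronous interleaving in the one‑robot case: the second border robot may have taken its snapshot while the configuration was still {\em Block} and then move on this stale view. I would dispose of this by observing that its computed target — its neighbor away from the block — coincides with the unique move prescribed by the {\em Biblock} rule once $A$ has moved (the border of the biggest $1$.block with no occupied node at distance $2$ is exactly that robot, moving in the same direction), so it never becomes an outdated robot with an incorrect target, which is also consistent with Lemma~\ref{outdated-phase2}. Finally, since the two border robots move in opposite directions no tower is created, and since $k>8$ the block of size $k-2$ (respectively $k-1$) is unique, so none of the intermediate configurations is periodic; collecting the bounds gives that a {\em TriBlock-S} configuration is reached in $O(1)$ rounds, which is the claim.
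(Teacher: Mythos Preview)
Your proof is correct and follows essentially the same approach as the paper: a case split on whether both border robots move (yielding {\em TriBlock-S} directly) or only one moves (yielding {\em Biblock}, then invoking Lemma~\ref{BIBLL}). You add extra care about the asynchronous interleaving and non-periodicity that the paper's terse proof omits, but the core argument is the same.
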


\begin{proof}
The robots allowed to move in a configuration of type {\em Block} are at the border of the 1.block, and their destination is their adjacent empty node in the opposite direction of the 1.block they belong to. 
Note that in the case the scheduler activates both robots at the same time, then the configuration remains symmetric and contains one 1.block having one isolated robot at distance $2$ at each side. 
Thus the configuration contains three 1.block and the lemma holds. 
In the case the scheduler activates only one of the two robots allowed to move, then the configuration reached will be of type {\em Biblock}. 
However, from Lemma \ref{BIBLL}, in 1 round, a configuration of type {\em TriBlock-S} is reached and the lemma holds. 
\end{proof}

\begin{lemma}\label{TBBA}
Starting from a configuration of type {\em TriBlock-A}, either a configuration of type {\em Start} or {\em TriBlock-S} is reached in $O(1)$ rounds.
\end{lemma}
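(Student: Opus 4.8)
The plan is to exploit that a {\em TriBlock-A} configuration is asymmetric and prescribes exactly one robot to move, so proving the statement reduces to computing the configuration obtained after that single move; since only one robot is involved there is no interleaving effect, and by Lemma~\ref{outdated-phase2} the configuration contains no outdated robot with an incorrect target, so the prescribed move is the one performed and, by fairness, it occurs within $O(1)$ rounds.

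First I would fix the geometry (see Figure~\ref{TRIBA}). Since $k$ is even and $|B_2|=|B_3|+1$, we have $k=|B_1|+2|B_3|+1$, so $|B_1|$ is odd. Since $B_1$ is at distance $2$ from both $B_2$ and $B_3$, and $B_2$ and $B_3$ cannot also be at distance $2$ from each other (that would force three holes of size $1$, hence $n=k+3$, contradicting $n>k+3$), the cyclic order of the blocks is $B_2$, one empty node, $B_1$, one empty node, $B_3$, one large hole $H$, back to $B_2$, with $|H|=n-k-2$, which is odd (and at least $3$). The robot allowed to move is the border of $B_1$ adjacent to the empty node towards $B_3$; after its single step it occupies that node, which is adjacent to $B_3$, so it joins $B_3$, and the node it vacated becomes the unique empty node between the remnant of $B_1$ and the enlarged $B_3$.

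Then I would split on $|B_1|$. If $|B_1|=1$, then $|B_2|=k/2$ and $|B_3|=k/2-1$; after the move $B_1$ disappears and the former $B_3$ has size $k/2$, so the two surviving $1$.blocks both have size $k/2$, separated on one side by a hole of size $2$ and on the other by $H$ of odd size. The blocks being equal and the parities being right (central edge of the size-$2$ hole, central node of $H$), the configuration is symmetric, and since the two blocks are at distance $3 \neq 2$ it is a {\em Start} configuration. If $|B_1| \geq 3$, then after the move the remnant of $B_1$ is a $1$.block of size $|B_1|-1 \geq 2$ still at distance $2$ from both $B_2$ (size $|B_2|$) and the former $B_3$ (size $|B_3|+1=|B_2|$), with $H$ separating the two equal side blocks. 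Since $|B_1|-1$ is even and $|H|$ is odd, the axis through the central edge of the middle block and the central node of $H$ witnesses symmetry; the configuration is non-periodic because $|H|>1$ differs from the two size-$1$ holes; hence it is a {\em TriBlock-S} configuration. In both cases a target configuration is reached after a single move, that is, in $O(1)$ rounds.

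The step I expect to be the main obstacle is certifying symmetry of the configuration produced by the move: one must carry the block sizes and the hole sizes, together with their parities, through the move in order to be sure an axis of symmetry actually exists (a central node when the relevant hole or block has odd size, a central edge when it has even size). The degenerate case $|B_1|=1$ also needs a little care, since there the phrases ``the border of $B_1$ closest to $B_3$'' and ``the opposite direction of the $1$.block it belongs to'' both refer to the same single robot, whose intended move direction (towards $B_3$) has to be read off from the asymmetry of the configuration.
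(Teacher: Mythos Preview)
Your proof is correct and follows essentially the same route as the paper: a single robot moves from the border of $B_1$ towards $B_3$, and one then splits on whether $|B_1|=1$ (yielding two equal $1$.blocks, hence {\em Start}) or $|B_1|>1$ (yielding three $1$.blocks with equal outer blocks, hence {\em TriBlock-S}). The paper's proof is terser---it simply asserts the resulting configuration is symmetric of the claimed type---whereas you additionally justify the parities needed to locate the axis of symmetry, verify $n>k+3$ forces the large hole $H$ to exist, and check that the {\em Start} case really has distance $\neq 2$; these are useful sanity checks but not a different argument.
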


\begin{proof}
In a configuration of type {\em TriBlock-A}, there are three 1.blocks such as the middle 1.block $B1$ is at distance $2$ from both other 1.blocks.
Note that, since $n>k+3$, there is only one 1.block which has one 1.block at distance $2$ at each side. 
One robot is allowed to move in such a configuration, and this robot is at the border of $B1$ sharing a hole of size $1$ with smallest 1.block (among the two other 1.blocks than $B1$). 
If this robot is part of a 1.block of size bigger than $1$, then once it moves the configuration becomes symmetric and contains three 1.blocks such as there is one 1.block which is at distance $2$ from the two other 1.blocks. 
Thus the configuration reached is of type {\em TriBlock-S} and the lemma holds. 
If the robot allowed to move is part of a 1.block of size $1$, then once it moves the configuration becomes symmetric and contains two 1.blocks. 
Thus the configuration reached is of type {\em Start} and the lemma holds.
\end{proof}

\begin{lemma}\label{TBBS}
Starting from a configuration of type {\em TriBlock-S}, a configuration of type {\em Start} is reached in $O(k)$ rounds.
\end{lemma}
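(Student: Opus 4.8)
The plan is to extract the exact geometry of a {\em TriBlock-S} configuration, to show that one application of the {\em TriBlock-S} rule (possibly followed by a short excursion through {\em TriBlock-A}) removes two robots from the central $1$.block while adding one to each outer $1$.block, and then to iterate until the central block disappears. First I would fix notation. Let $C$ be a {\em TriBlock-S} configuration. Since $C$ is symmetric and non-periodic it has a unique axis of symmetry, and because $n$ is odd this axis passes through one node $S$ and one edge $e$; because $k$ is even, $S$ carries no robot. The reflection fixes exactly one of the three $1$.blocks, namely the one flanked on both sides by a single empty node; call it $B_{1}$, and let $B_{2},B_{3}$ be the other two, which are exchanged by the reflection and hence have a common size $p$. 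The fixed block $B_{1}$ must be centred on $e$ — if it were centred on $S$ it would have odd size and place a robot on $S$, contradicting the parity of $k$ — so $m:=|B_{1}|$ is even and $m+2p=k$. The remaining (large) hole is fixed by the reflection and cannot be centred on $e$, so it contains $S$: it is the Leader hole $H$, of odd size; moreover $n>k+3$ forces at least five empty nodes, two of which are the single nodes flanking $B_{1}$, hence $|H|\ge 3$. The rule moves the two border robots of $B_{1}$ outward, each onto the single empty node separating $B_{1}$ from its adjacent outer block, so each of those robots joins $B_{2}$, resp.\ $B_{3}$.

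Next I would analyse the effect of the rule over $O(1)$ rounds, splitting on the scheduler's choices. If both border robots of $B_{1}$ are eventually moved, the outcome is: when $m>2$, outer blocks of size $p+1$ and a central block of size $m-2$, again separated from the outer blocks by single empty nodes and with $H$ unchanged — a {\em TriBlock-S} with central size $m-2$; when $m=2$, two $1$.blocks of size $p+1=k/2$, separated by a hole of size $2$ on the side of the former $B_{1}$ and by $H$ on the other side. If the scheduler instead lets only one border robot of $B_{1}$ move, the configuration reached is a {\em TriBlock-A}: the central block (size $m-1$) is at distance $2$ from both outer blocks, and the outer block that just grew exceeds the other by exactly one robot; by Lemma~\ref{TBBA} it becomes {\em TriBlock-S} (with central size $m-2$) or {\em Start} in $O(1)$ rounds.

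The delicate case is when the scheduler activates both border robots of $B_{1}$ but one lags and becomes outdated; the configuration then also passes through the above {\em TriBlock-A}, and I would check that the unique robot the {\em TriBlock-A} rule allows to move is precisely the lagging robot, and that its target — its neighbouring empty node on the side away from the rest of its block — is the very node it was already aiming at in $C$. Hence it completes its move to the correct, still empty node; no tower is created, since the two border robots of $B_{1}$ aim at the two distinct empty nodes flanking $B_{1}$; and once more a {\em TriBlock-S} with central size $m-2$ is reached. Since $m$ is even with $2\le m\le k$, and each such step costs $O(1)$ rounds and decreases $m$ by $2$, after $O(k)$ rounds the central block has size $2$; one further step then yields two $1$.blocks of size $k/2$ separated by a hole of size $2$ on one side and by $H$ (of size $\ge 3$) on the other, hence \emph{not} at distance $2$ — which is exactly a {\em Start} configuration. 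Therefore a {\em Start} configuration is reached in $O(k)$ rounds.

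I expect the main obstacle to be the asynchronous bookkeeping of the third paragraph: rigorously arguing that a lagging border robot of $B_{1}$ still has a correct destination once the configuration has turned into {\em TriBlock-A}, so that the sequence of {\em TriBlock-S} configurations visited is strictly decreasing in $m$ and no tower can ever form. A secondary but necessary point is the use of $n>k+3$ to rule out $|H|=1$, which would otherwise make the terminal configuration {\em Terminal} instead of {\em Start}; the rest is a routine induction on $m$ together with Lemma~\ref{TBBA} for the {\em TriBlock-A} excursions.
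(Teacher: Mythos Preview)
Your proof is correct and follows the same strategy as the paper: shrink the central block by two robots per cycle, handling one-sided activations via the {\em TriBlock-A} detour and Lemma~\ref{TBBA}, and iterate until the middle block vanishes. You supply considerably more geometric detail than the paper's short argument does (the parity of $m$, the placement of the axis through the edge $e$, the bound $|H|\ge 3$ from $n>k+3$, and the explicit check that an outdated border robot of $B_{1}$ coincides with the unique {\em TriBlock-A} mover and keeps the same target), all of which the paper leaves implicit.
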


\begin{proof}
The configuration of type {\em TriBlock-S} is symmetric. 
In such a configuration, robots allowed to move are at the border of the middle 1.block (the 1.block that has one 1.block at distance $2$ at each side). 
Note that, if the scheduler activates both robots at the same time, then the configuration remains symmetric. 
If the size of the middle 1.block was equal to $2$, then the configuration reached is of type {\em Start} and the lemma holds. 
Otherwise (the size of the middle 1.block is bigger than $2$), the configuration remains {\em TriBlock-S}. 
However, the size of the middle 1.block decreases at each time. 
Thus, after at most $k-2$ rounds, we are sure that the configuration reached will be of type {\em Start}. 
In the case the scheduler activates only one of the two robots allowed to move, then the configuration becomes of type {\em TriBlock-A}. 
However, from Lemma \ref{TBBA}, a configuration of type {\em Start} or {\em TriBlock-S} is reached in 1 round. 
Note that, in the second case ({\em TriBlock-S}), the configuration is exactly the same as when both robots move at the same time. 
Thus we are sure that a configuration of type {\em Start} is reached in $O(k)$ rounds.  
\end{proof}

\begin{lemma}
Starting from any configuration of type {\em Block}, {\em Biblock}, {\em TriBlock-S} or {\em TriBlock-A}, a configuration of type {\em Start} is reached in $O(k)$ rounds.   
\end{lemma}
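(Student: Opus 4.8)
The statement is a direct corollary of the four preceding lemmas on Phase~$2$ configurations, so my plan is a short case analysis on the type of the starting configuration, composing the transition lemmas and adding up the round complexities. First I would observe that the claimed bound $O(k)$ is additive-stable under a constant number of $O(1)$-round transitions followed by a single $O(k)$-round transition, so it suffices in each case to exhibit such a chain ending in a {\em Start} configuration.

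The case analysis goes as follows. If the configuration is of type {\em Block}, Lemma~\ref{BBBL} yields a {\em TriBlock-S} configuration in $O(1)$ rounds, and then Lemma~\ref{TBBS} yields a {\em Start} configuration in $O(k)$ rounds; total $O(k)$. If the configuration is of type {\em Biblock}, Lemma~\ref{BIBLL} yields {\em TriBlock-S} in $O(1)$ rounds, and again Lemma~\ref{TBBS} finishes in $O(k)$ rounds. If the configuration is of type {\em TriBlock-A}, Lemma~\ref{TBBA} yields in $O(1)$ rounds either {\em Start} (done immediately) or {\em TriBlock-S}, from which Lemma~\ref{TBBS} gives {\em Start} in $O(k)$ rounds. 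Finally, if the configuration is already of type {\em TriBlock-S}, Lemma~\ref{TBBS} applies directly. In every branch the total is $O(1)+O(k)=O(k)$ asynchronous rounds, which is the claim.

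The only point that needs a word of care is that each intermediate configuration produced by one lemma is genuinely of the type required as the hypothesis of the next lemma (so that the lemmas chain legitimately), and that scheduler interleavings during these transitions do not break the chain; but both concerns are already absorbed into the statements and proofs of Lemmas~\ref{BIBLL}, \ref{BBBL}, \ref{TBBA}, and \ref{TBBS}, which explicitly account for the scheduler activating only one of two symmetric robots. I do not expect any real obstacle here: the lemma is essentially a bookkeeping composition, and the work has been done in the preceding lemmas.
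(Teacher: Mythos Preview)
Your proposal is correct and takes essentially the same approach as the paper: the paper's own proof is simply the one-line deduction from Lemmas~\ref{BBBL}, \ref{BIBLL}, \ref{TBBS} and~\ref{TBBA}, which is exactly the composition you spell out case by case. If anything, your version is more explicit about the chaining and the round arithmetic than the paper's proof.
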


\begin{proof}
From the Lemmas \ref{BBBL}, \ref{BIBLL}, \ref{TBBS} and \ref{TBBA} we can deduct the lemma.	 
\end{proof}

\begin{lemma}\label{lem-time2}
Starting from any configuration $C \in C_{sp}$,
 Terminal configuration is reached in $O(kn)$ rounds. 
\end{lemma}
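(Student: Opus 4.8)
The plan is to assemble the transition lemmas \ref{EVT}--\ref{TBBS} into a single progress argument, with the size of the Leader (Guide) hole $H$ playing the role of a potential function. First I would exploit the ``funnel'' structure visible in Figure~\ref{TransPhase2}: by Lemmas~\ref{BBBL}, \ref{BIBLL}, \ref{TBBS}, \ref{TBBA} (equivalently the grouped lemma just above), any configuration of type {\em Block}, {\em Biblock}, {\em TriBlock-S} or {\em TriBlock-A} reaches a {\em Start} configuration in $O(k)$ rounds; by Lemma~\ref{ODT} an {\em Odd-T} configuration reaches {\em Start} or {\em Terminal} in $O(1)$ rounds; and by Lemmas~\ref{EVT} and \ref{SPLT} an {\em Even-T} or {\em Split-A} configuration reaches {\em Split-S} in $O(1)$ rounds. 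Hence within $O(k)$ rounds the configuration is either {\em Terminal} (and we are done) or lies in the ``core'' $\{$\,{\em Start},\,{\em Split-S}\,$\}$, possibly visiting {\em Even-T}/{\em Split-A} transiently when the scheduler activates only one of the two symmetric enabled robots --- a situation already absorbed into the $O(1)$/$O(k)$ bounds of Lemmas~\ref{STRT} and \ref{SPLTS}.

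Next I would analyse the core cycle. By Lemma~\ref{STRT}, a {\em Start} configuration reaches {\em Split-S} in $O(1)$ rounds while $|H|$ strictly decreases by $2$; by Lemma~\ref{SPLTS}, a {\em Split-S} configuration reaches {\em Terminal} or {\em Start} in $O(k)$ rounds, and in the latter case the Slave robots have merged into the Leader blocks, so $|H|$ is unchanged and only the Slave hole has grown. Thus a full round trip {\em Start} $\to$ {\em Split-S} $\to$ {\em Start} costs $O(k)$ rounds and decreases $|H|$ by $2$. Since a symmetric configuration on an odd ring with an even number of robots has its axis pass through an empty node lying in $H$, the quantity $|H|$ is odd and satisfies $1\le |H|\le n$; after at most $(|H|-1)/2=O(n)$ such round trips we reach $|H|=1$, which by the definitions invoked in Lemmas~\ref{STRT}, \ref{ODT}, \ref{SPLTS} is exactly the moment a {\em Terminal} configuration is produced. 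Combining the two phases gives $O(k)+O(n)\cdot O(k)=O(kn)$ rounds, as claimed.

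I expect the two delicate points to be: (i) confirming that $|H|$ is genuinely non-increasing across the {\em Split-S} $\to$ {\em Start} step, i.e.\ that merging the Slave blocks into the Leader blocks perturbs only the Slave hole and never the Leader hole; and (ii) arguing that the asymmetric ``detour'' configurations ({\em Even-T}, {\em Split-A}, {\em TriBlock-A}) generated whenever the scheduler moves only one robot of a symmetric pair cannot produce a non-terminating loop. Point (ii) follows from the remark already made after Lemma~\ref{outdated-phase2}: once one robot of a symmetric pair has moved, only its mirror robot is enabled and it resumes motion in the same direction toward the same Guide hole, so each detour is resolved within the constant number of rounds already charged to the enclosing lemma, and the potential $|H|$ still drops by $2$ per completed {\em Start}-to-{\em Start} trip.
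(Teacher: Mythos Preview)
Your proposal is correct and follows essentially the same approach as the paper: the paper's own proof is a one-line citation of Lemmas~\ref{STRT}, \ref{EVT}, \ref{SPLT}, \ref{ODT}, \ref{SPLTS}, leaving the potential-function argument on $|H|$ and the $O(n)$ iteration count entirely implicit, whereas you have spelled it out explicitly. If anything, your write-up is more complete than the paper's, since you also invoke the {\em Block}/{\em Biblock}/{\em TriBlock} lemmas (via the unnumbered grouped lemma) to funnel every member of $C_{sp}$ into the {\em Start}/{\em Split-S} core, a step the paper's proof of this lemma omits to mention.
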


\begin{proof}
From Lemma \ref{STRT}, \ref{EVT},\ref{SPLT},\ref{ODT} and \ref{SPLTS} we can deduct that a configuration of type {\em Terminal} is reached in $O(kn)$ rounds and the lemma holds. 
\end{proof}

\subsection{Phase $3$}
In the followings, the correctness of our Phase $3$ algorithm is proven. We first show that:

\begin{lemma}\label{no-out-phase2}
Starting from a {\em Terminal} configuration, there are no outdated robots with incorrect targets by Phase $1$ and $2$.
\end{lemma}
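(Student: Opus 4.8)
The plan is to reach a Terminal configuration by its two possible routes --- at the end of Phase $1$ (Lemma~\ref{lem-time}) or at the end of Phase $2$ (Lemma~\ref{lem-time2}) --- and in each case to show that no robot can be holding a snapshot whose computed destination differs from the one it would compute in the Terminal configuration itself.

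\textbf{Phase $1$ route.} By Lemmas~\ref{outdated}, \ref{outdated-special} and \ref{tower-proof}, at every instant of Phase $1$ there is at most one outdated robot with an incorrect target, it lies outside the biggest $d$.block, and it has an empty neighbour. The Terminal configuration is produced from a {\em BlockDistance} configuration of inter-distance $2$ exactly when the inter-distance drops to $1$; I would trace this last transition with the {\em BlockDistance} case of the proof of Lemma~\ref{outdated}. There the only possible outdated robot $A$ appears when the scheduler activates, in a symmetric {\em BlockDistance} configuration, only the mirror robot $B$ of $A$; one checks that up to the moment $A$ moves its destination stays correct (it creates or joins the unique $1$.block that is forming), and that while $A$ has not yet moved the configuration still contains an isolated robot and hence is not Terminal. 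Therefore, by the time a Terminal configuration actually appears, $A$ has already moved, so no outdated robot with an incorrect target survives.

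\textbf{Phase $2$ route.} By Lemma~\ref{outdated-phase2} the first configuration of Phase $2$, which lies in $C_{sp}$, contains no outdated robot with an incorrect target. I would then induct along the transition graph of Figure~\ref{TransPhase2} (Lemmas~\ref{EVT}--\ref{TBBS}). Every step of Phase $2$ is of one of two kinds. Either (i) the two mirror robots designated in a symmetric configuration both move, symmetry is preserved, and the configuration reached is a recognisable element of $C_{sp}$ or Terminal whose designated mover(s) take fresh snapshots --- so no outdated robot with an incorrect target is created. Or (ii) only one of the two mirror robots moves, creating at most one lagging robot, but the configuration reached is one of the asymmetric, easily recognisable configurations {\em Even-T}, {\em Split-A}, {\em Biblock}, {\em TriBlock-A}, and in each of them the destination the lagging robot would recompute coincides with the one it already holds; this is precisely why those asymmetric classes were defined alongside their symmetric parents, and is the content of the remark following Figure~\ref{TransPhase2}. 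Iterating, when a Terminal configuration is eventually reached the lagging robot --- if there is one --- has already realigned to the symmetric configuration, and the two designated movers of the Terminal configuration hold correct targets; hence no outdated robot with an incorrect target exists there either.

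The step I expect to be the real obstacle is case (ii) of the Phase $2$ induction: one must check, class by class, that a robot which computed its destination in a symmetric parent configuration ({\em Start}, {\em Split-S}, {\em Block}, {\em TriBlock-S}) would compute the same destination in the asymmetric child obtained after its mirror moves ({\em Even-T}, {\em Split-A}, {\em Biblock}, {\em TriBlock-A}). This is a finite but tedious case analysis that leans on the exact block sizes and distances in the definitions of the $C_{sp}$ configurations together with the standing parity assumptions ($n$ odd, $k$ even, $k>8$, $n>k+3$). A secondary subtlety, in the Phase $1$ route, is to rule out that the last robot entering the Terminal configuration is itself the outdated-with-incorrect-target robot; this is handled by the observation that while that robot has not moved the configuration is not yet symmetric with two $1$.blocks of size $k/2$ at distance $2$.
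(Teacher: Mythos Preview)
Your two-route decomposition and your Phase~$2$ induction (the ``lagging robot realigns in the asymmetric child'' argument) are exactly what the paper does. The paper compresses the whole Phase~$2$ side into one sentence: when the scheduler activates only one of the two symmetric movers, in the resulting asymmetric configuration the only robot allowed to move is the one that was supposed to move, and its destination is unchanged, so it is as if both had moved. Your case~(ii) says the same thing, with the case analysis you correctly anticipate being the real work.

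On the Phase~$1$ route the paper is even terser than you: it literally invokes Lemma~\ref{outdated-phase2} (whose statement is about $C_{sp}$, not about {\em Terminal}) and then writes ``Thus'', implicitly asserting that the same case analysis extends to {\em Terminal}. Your attempt to make this explicit contains a genuine misconception. From a {\em BlockDistance} configuration with $d=2$, dropping the inter-distance to $1$ does \emph{not} produce {\em Terminal}: when the two robots adjacent to $H$ move, each creates a $1$.block of size~$2$ with its neighbour, while the remaining $k-4$ robots of the two former $2$.blocks become isolated (they are now at mutual distance~$2$ while $d=1$). The configuration is therefore a symmetric {\em BigBlock} at $d=1$, and several further {\em BigBlock} steps are required before two $1$.blocks of size $k/2$ appear. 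Consequently the transition you propose to trace --- the {\em BlockDistance} case of the proof of Lemma~\ref{outdated} --- is not the one that actually produces {\em Terminal}, and outdated robots arising later in symmetric {\em BigBlock} configurations at $d=1$ are not covered by your argument. What is really needed here, and what the paper leaves implicit behind its ``Thus'', is to rerun the case analysis of Lemma~\ref{outdated-phase2} with {\em Terminal} in place of the nine $C_{sp}$ subclasses.
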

\begin{proof}
From lemma~\ref{outdated-phase2}, we are sure that after the execution of Phase $1$, we cannot reach a configuration in $C_{sp}$ with outdated robots that have incorrect target.
Thus, there is no outdated robots with incorrect target when {\em Terminal} configuration is built after the execution of Phase $1$.
Additionally, starting from any configuration in $C_{sp}$, when {\em Terminal} configuration is created, no robot has an incorrect outdated target since, when the configuration is symmetric, if the scheduler activates only one robot, then in the second step, the only robot that is allowed to move is the one that was supposed to move when the configuration was symmetric, so it is like the scheduler activates both robots that are allowed to move at the same time. 
Therefore, we can deduce that in {\em Terminal} configuration, there are no outdated robots with incorrect targets.
\end{proof}

When {\em Terminal} configuration is reached then when Algorithm of Phase $3$ is executed we have the following lemma:
\begin{lemma}\label{target}
Starting from a {\em Terminal} configuration, {\em Target} configuration is reached in $O(1)$ rounds. 
\end{lemma}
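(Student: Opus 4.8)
The plan is to trace every move the scheduler can trigger from a \emph{Terminal} configuration and show that after at most two moves the system is in a \emph{Target} configuration. First I would pin down the geometry of \emph{Terminal}: it is symmetric with two $1$.blocks of size $k/2$ at distance $2$, and since $n$ is odd and $k$ is even the axis of symmetry passes through a node $S$ carrying no robot. Because $n>k+3$ the Slave hole has size at least $3$, so the unique hole of size $1$ is exactly $\{S\}$ (the Leader hole); hence the robots allowed to move by Phase $3$ are precisely the two inner extremities of the two blocks, i.e. the two robots adjacent to $S$. They are images of one another under the axis, and each has a well-defined destination, namely $S$: toward $S$ the view is an empty node followed by robots, while on the other side a robot is immediately adjacent, so the view is asymmetric and the direction is unambiguous.

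If the scheduler activates both of these robots and their instantaneous moves occur together, both land on $S$; the configuration stays symmetric, a tower of two robots now occupies $S$, and the remaining robots form two $1$.blocks of size $k/2-1$. This is by definition a \emph{Target} configuration, reached in $O(1)$ rounds.

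Otherwise one inner extremity moves first, say the robot $r_1$ of block $B$; it lands on $S$, which is adjacent to the other block $B'$, so $r_1$ joins $B'$. The resulting configuration is exactly the asymmetric one described in Phase $3$: two $1$.blocks $B_1=B'\cup\{r_1\}$ and $B_2=B\setminus\{r_1\}$ at distance $2$ with $|B_2|=|B_1|-2$, and $r_1$ is the extremity of $B_1$ facing the remaining hole of size $1$. The unique robot then allowed to move is the neighbour of $r_1$ inside $B_1$, and the key observation is that this robot is exactly the \emph{other} inner extremity of the original \emph{Terminal} configuration, with prescribed destination again $S$. Consequently, even if that robot had already taken its snapshot in \emph{Terminal} and now moves with an outdated view, its target $S$ is still correct, which is consistent with Lemma~\ref{no-out-phase2}. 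By fairness it moves within $O(1)$ rounds, lands on $S$, joins $r_1$, and we again obtain a tower of two on $S$ together with two $1$.blocks of size $k/2-1$, i.e. a \emph{Target} configuration.

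Since no robot other than these two inner extremities is ever allowed to move between \emph{Terminal} and \emph{Target}, the two cases are exhaustive and at most two moves occur, so a \emph{Target} configuration is reached in $O(1)$ rounds. The hard part will be the outdated-view bookkeeping in the intermediate asymmetric configuration: I must check that the robot that observed \emph{Terminal} but has not yet moved is precisely the robot designated to move in the new configuration, with the same destination $S$, so that asynchrony cannot lead the system to any configuration outside this short chain.
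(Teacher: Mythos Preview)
Your proposal is correct and follows essentially the same two-case analysis as the paper's proof: either both inner extremities move to $S$ simultaneously and create the tower, or one moves first, producing the asymmetric $|B_1|=|B_2|+2$ configuration in which the other inner extremity is the unique robot allowed to move and its destination is again $S$. Your added geometric justification (that the size-$1$ hole must be the Leader hole because $n-k-1$ is even and at least $4$) and your explicit check that an outdated snapshot from \emph{Terminal} still yields the correct target in the intermediate asymmetric step are sound refinements that the paper leaves implicit.
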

\begin{proof}
When the configuration is {\em Terminal}, there are exactly two robots that are allowed to move. 
These two robots are the ones that are at the extremity of the $1$.block having a hole of size $1$ as a neighbor. 
Let refer to these two robots by $r_1$ and $r_2$ such that they are respectively part of the $1$.blocks $B1$ and $B2$. 
Two cases are possible as follow:
\begin{enumerate}
\item \label{P3C1}The scheduler activates both robots at the same time. In this case a tower is created on the axes of symmetry and the configuration becomes a {\em Target} configuration.
\item The scheduler activates only one robot $r_1$. Once $r_1$ moves, it joins $B_2$. Note that $r_1$ and $r_2$ are neighbors. The only robot allowed to move is $r_2$, its destination is the node occupied by $r_1$. Once it moves, a tower is created and we retrieve Case \ref{P3C1}.
\end{enumerate}
From the case above, we can deduce that {\em Target} configuration is reached in $O(1)$ rounds and the lemma holds.
\end{proof}

\begin{lemma}\label{single}
Starting from a {\em Terminal} configuration, a configuration with a single $1$.block of size $k-1$ is created in $O(k)$ rounds.
\end{lemma}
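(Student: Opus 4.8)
The plan is to reduce the statement to the behaviour of the gathering algorithm of \cite{KameiLOT11} run from a \emph{Target} configuration. First I would invoke Lemma~\ref{target}: from any \emph{Terminal} configuration, the execution of Phase~$3$ reaches a \emph{Target} configuration in $O(1)$ rounds. Recall that a \emph{Target} configuration is symmetric, carries one tower (two robots) on the node $S$ on the axis of symmetry, and, on each side of $S$, one $1$.block of size $k/2-1$ at distance~$2$ from $S$, the remaining empty nodes forming a single large hole behind the two blocks. From now on the two tower robots never move, so exactly $k-2$ robots are still mobile; since every robot has only local weak multiplicity detection, each mobile robot sees $S$ as a single robot and hence perceives a ring carrying $k-1$ robots (an odd number, as $k$ is even). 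This is exactly the class of configurations handled by Phase~$2$ of \cite{KameiLOT11}, which the robots now execute.

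Next I would describe how the mobile robots accumulate around $S$. Under the rules of \cite{KameiLOT11}, the robots allowed to move in a \emph{Target}-like configuration step towards $S$ along the short side (the side not containing the large hole): the border robot of a side block moves onto the adjacent empty node and thereby joins the contiguous block containing $S$. I would take as potential the quantity $(k-1)$ minus the number of occupied nodes in the maximal contiguous block through $S$ (equivalently, the number of mobile robots not yet belonging to that block); it starts at $k-2$, and every admissible move of the kind above decreases it by exactly $1$. By fairness some allowed robot moves within $O(1)$ rounds, so after $O(k)$ rounds the potential is $0$: all $k-1$ visible robots occupy $k-1$ consecutive nodes with the tower inside, i.e.\ the configuration contains a single $1$.block of size $k-1$. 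The round bound is obtained exactly as in Lemmas~\ref{SPLTS} and~\ref{TBBS}, where the same ``one robot at a time'' progression costs $O(k)$ rounds.

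The asynchrony of the CORDA model is handled as follows. By Lemma~\ref{no-out-phase2} the \emph{Terminal} configuration is reached with no outdated robot having an incorrect target, so no stale snapshot taken before Phase~$3$ can misfire during the accumulation. The remaining concern is a symmetric configuration where the scheduler activates only one of the two robots that should move; as in Phase~$2$, because $n$ is odd and $k$ is even the resulting asymmetric configuration is recognisable and the unique robot allowed to move next is precisely the ``pending'' one, so the execution re-synchronises after one extra move and the potential argument is unaffected. (Alternatively, one may simply invoke the correctness proof of Phase~$2$ of \cite{KameiLOT11}, which is already stated for the asynchronous CORDA scheduler.)

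The main obstacle I expect is the interface with \cite{KameiLOT11}: one must check that every \emph{Target} configuration producible by our Phases~$1$--$2$ lies in the set of admissible initial configurations of Phase~$2$ of \cite{KameiLOT11} (symmetric, one multiplicity point, enough visible robots for $k$ large, ring large enough), and that the prefix of that algorithm really does pass through a ``single $1$.block of size $k-1$'' configuration at cost $O(k)$ rounds, rather than gathering along a different route. If \cite{KameiLOT11} does not make this milestone explicit, the fallback is the self-contained potential/counting argument sketched above, which uses only the local move rule ``step towards the tower''.
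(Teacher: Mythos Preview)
Your proposal is correct and follows essentially the same approach as the paper: invoke Lemma~\ref{target} to reach a \emph{Target} configuration, then run the Phase~$2$ algorithm of \cite{KameiLOT11} and argue that the robots at distance~$2$ from the central block successively join it until a single $1$.block of size $k-1$ is formed in $O(k)$ rounds. Your write-up is in fact more thorough than the paper's own proof---the explicit potential function, the invocation of Lemma~\ref{no-out-phase2} for outdated robots, and the discussion of the interface with \cite{KameiLOT11} are all details the paper leaves implicit.
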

\begin{proof}
By lemma~\ref{target}, the configuration becomes symmetric {\em Target} configuration in which there are two $1$.blocks and one tower. By executing the gathering algorithm in \cite{KameiLOT11}, the robots that are neighbors at distance $2$ from the tower move first, a $1.$block is created on the axes of symmetry, the robots that are at distance $2$ from such a $d$.block are the only ones allowed to move and so on.
By repeating such an execution, a configuration with a single $1$.block of size $k-1$ is reached in $O(k)$ rounds and the lemma holds.
\end{proof}

Finally, we show that the gathering is eventually performed. 

\begin{lemma}\label{gathering-Terminal}
Starting from a {\em Terminal} configuration, the gathering is performed in $O(k^{2})$ rounds.
\end{lemma}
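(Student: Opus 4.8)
The plan is to compose the three lemmas of this subsection with the complexity of the gathering subroutine borrowed from \cite{KameiLOT11}. First, by Lemma~\ref{target} a {\em Target} configuration is reached from any {\em Terminal} configuration in $O(1)$ rounds, and by Lemma~\ref{no-out-phase2} this happens with no outdated robot carrying an incorrect target; in fact, from {\em Target} onward every activated robot other than the two that form the tower (which never move) has a destination directed toward the node hosting the tower, so even a stale snapshot yields the correct target, and asynchrony is handled exactly as in Lemmas~\ref{STRT} and~\ref{SPLTS}: if the scheduler activates only one of the two symmetric robots, the very next admissible move is the one the symmetric robot was about to make, which restores the symmetric shape. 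Hence the whole of Phase~$3$ behaves, up to a bounded delay, like a synchronous convergence toward the tower node.

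Next I would invoke Lemma~\ref{single}: in $O(k)$ additional rounds the configuration becomes a single $1$.block of size $k-1$ whose central node --- which exists and lies on the axis of symmetry because $k$ is even and hence $k-1$ is odd --- is precisely the node carrying the tower, the remaining $k-2$ robots occupying one node each. From this point the configuration is exactly the ``solid block with a central multiplicity point'' handled by Phase~$2$ of \cite{KameiLOT11}: the robots converge, step by step, toward the tower node. Counting advance-moves, a robot at distance $j$ from the center must move $j$ times, and $\sum_{j=1}^{k/2-1} 2j = \Theta(k^{2})$; since an asynchronous round only guarantees that $O(1)$ of these advance-moves are performed, $O(k^{2})$ rounds suffice to finish the gathering. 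Adding the $O(1)$ of Lemma~\ref{target} and the $O(k)$ of Lemma~\ref{single} does not change the order, so the total is $O(k^{2})$.

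Finally, one checks termination: once every robot is on the node hosting the tower, any robot performing a {\em Look} sees a single occupied node with a true local multiplicity flag, and no rule of Phase~$3$ (nor of the borrowed \cite{KameiLOT11} procedure) enables a move, so the gathered configuration is stationary and the $k$ robots have gathered. The main obstacle in turning this sketch into a proof is the asynchronous time accounting of the final collapse: one must show that the scheduler cannot indefinitely postpone progress --- in particular that a ``half-moved'', transiently asymmetric configuration is resolved within $O(1)$ rounds and that the transient multiplicity nodes created near the extremities during convergence are harmless (they are, precisely because multiplicity detection is only local and the rules of \cite{KameiLOT11} for this configuration family are driven by the block structure and the unique axis of symmetry, not by a global count). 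Once that is in place, the bound $O(k^{2})$ follows, and, together with Lemmas~\ref{lem-time} and~\ref{lem-time2}, it yields the claimed $O(n^{2})$ overall running time of the protocol.
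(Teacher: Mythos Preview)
Your proposal is correct and follows essentially the same route as the paper: invoke Lemma~\ref{single} to reach the single $1$.block of odd apparent size with a central tower, then hand off to the Phase~$2$ analysis of \cite{KameiLOT11} for the $O(k^{2})$ collapse. You supply more explicit detail than the paper does (the distance-sum accounting for the $k^{2}$ bound, the termination check, and the asynchrony discussion), whereas the paper simply observes that non-tower robots perceive an odd number of robots and defers the entire remaining argument to \cite{KameiLOT11}.
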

\begin{proof}
By lemma~\ref{single}, a configuration with a single $1$.block of size $k-1$ is reached in a finite time. Observe that the configuration is symmetric and contains a tower on the axes of symmetry. On another hand, the robots not part of the tower can see an odd number of robots in the ring.
By the same proofs in \cite{KameiLOT11}, from a configuration that contains a single $1$.block of an odd size (the robots part of a tower are considered as a single robot), the gathering is performed in $O(k^2)$ rounds. 
\end{proof}

Thus, the following Theorem holds.

\begin{theorem}
Starting from any non-periodic initial configuration without any tower, the gathering is performed in $O(n^{2})$ rounds.
\end{theorem}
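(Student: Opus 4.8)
The plan is to assemble the theorem from the per-phase lemmas already established, tracking both safety (no forbidden configuration is ever produced) and the round complexity, and then observe that the three contributions telescope to $O(n^2)$ because $k<n$ (we have $n>k+3$). First I would argue that Phase $1$, launched from the given non-periodic tower-free configuration, stays in a well-understood regime: by Lemma~\ref{tower-proof} no tower is ever created, by Lemma~\ref{periodic-proof} no periodic configuration is ever created, and by Lemma~\ref{blockmirror} the configuration never becomes {\em BlockMirror} dynamically. Hence every configuration visited during Phase $1$ is non-periodic and tower-free, so the case analysis of the Phase $1$ algorithm applies throughout, and Lemma~\ref{lem-time} yields that within $O(n^2)$ rounds the execution reaches either a {\em Terminal} configuration or some $C^{*}\in C_{sp}$.

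Next I would treat the branch in which $C^{*}\in C_{sp}$ is reached. By Lemma~\ref{outdated-phase2} such a $C^{*}$ carries no outdated robot with an incorrect target, so Phase $2$ starts from a clean state to which its case analysis applies. The transition diagram of Figure~\ref{TransPhase2}, together with Lemmas~\ref{EVT}--\ref{SPLTS} and the auxiliary Lemmas~\ref{BIBLL}--\ref{TBBS}, shows that every $C_{sp}$-type funnels to {\em Terminal}, and Lemma~\ref{lem-time2} bounds this by $O(kn)$ rounds. The structural fact I rely on here -- and which I expect to be the delicate part of the write-up -- is that symmetry is preserved in the right way: whenever the scheduler fires only one of the two symmetric movers, in the next step the \emph{unique} robot permitted to move is exactly the one that was supposed to move (recognizable to an oblivious robot because $n$ is odd and $k$ is even), so the net effect equals a synchronous double move, which keeps the Guide-hole size strictly decreasing and prevents the execution from wandering out of $C_{sp}$.

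Finally, from a {\em Terminal} configuration, Lemma~\ref{no-out-phase2} again guarantees no outdated robot with an incorrect target -- whether {\em Terminal} was produced directly by Phase $1$ or via Phase $2$ -- Lemma~\ref{target} reaches a {\em Target} configuration (a tower on the axis together with two $1$.blocks), Lemma~\ref{single} produces a single $1$.block of size $k-1$, and Lemma~\ref{gathering-Terminal}, which reuses Phase $2$ of \cite{KameiLOT11} on the resulting odd-looking single-block configuration, completes the gathering in $O(k^{2})$ rounds. Adding the three bounds gives $O(n^{2})+O(kn)+O(k^{2})=O(n^{2})$, and the theorem follows.

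The main obstacle is not this arithmetic but the bookkeeping that underlies it: one must verify that the handoff points between phases are always recognizable from the robots' local, anonymous, oblivious views, and that no execution slips between the cracks of the case analysis -- for instance by reaching a configuration that is neither covered by the Phase $1$ rules nor lies in $C_{sp}$ nor equals {\em Terminal}. This is precisely what Lemmas~\ref{blockmirror}, \ref{tower-proof}, \ref{periodic-proof} and \ref{outdated-phase2} are designed to exclude, so the proof of the theorem itself is mostly a matter of citing them in the correct order; the real work has already been done in the supporting lemmas.
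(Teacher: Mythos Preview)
Your proposal is correct and follows essentially the same approach as the paper: chain Lemma~\ref{lem-time} (Phase~$1$ reaches {\em Terminal} or $C_{sp}$ in $O(n^2)$ rounds), Lemma~\ref{lem-time2} (Phase~$2$ reaches {\em Terminal} in $O(kn)$ rounds), and Lemma~\ref{gathering-Terminal} (Phase~$3$ gathers in $O(k^2)$ rounds), then sum using $k<n$. You are in fact more explicit than the paper about the safety and clean-handoff lemmas (Lemmas~\ref{tower-proof}, \ref{periodic-proof}, \ref{blockmirror}, \ref{outdated-phase2}, \ref{no-out-phase2}) that justify invoking the phase lemmas in sequence, but this is elaboration rather than a different route.
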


\begin{proof}
From Lemmas \ref{lem-time} and \ref{lem-time2}, we can deduce that starting from any non-periodic initial configuration without any tower, a configuration of type {\em Terminal} is reached in $O(n^{2})$ rounds. From Lemma \ref{gathering-Terminal}, we are sure that starting from a {\em Terminal} configuration, the gathering is performed in $O(k^2)$ rounds. Thus the Theorem holds.
\end{proof}

\section{Conclusion}\label{sec:conclusion}

We presented a gathering protocol for an even number of anonymous and oblivious robots that are initially located on different nodes of a ring, and are endowed with a weak local multiplicity detector only. Our gathering can start from any configuration that is not periodic, yet expects the ring to have an odd size. This constraint permits to avoid edge-edge symmetries in the initial configurations, as they are known to be ungatherable~\cite{Klasing08-j}. 

If we relax the constraint on the parity of the ring size (that is, the ring is even) but maintain the absence of edge-edge symmetry and periodicity requirement (that are mandatory for problem solvability), no node-edge symmetry can actually occur in the initial configuration. If the initial configuration is not symmetric, it is known that gathering with local weak multiplicity detection is feasible~\cite{Izumi10}. There remains the case of the initial node-node symmetry. With \emph{global} weak multiplicity detection, this case is solvable for $6$ robots~\cite{ASN11c} or more than $18$ robots~\cite{Klasing08-j}. A similar characterization using only \emph{local} weak multiplicity detection looks challenging.

\bibliographystyle{plain}
\bibliography{biblio} 

\end{document}